\documentclass[11pt]{article}

\usepackage[margin=2.5cm]{geometry}
\usepackage{authblk}

\usepackage{amssymb,amsmath,amsthm,amsfonts}
\usepackage{thmtools}
\usepackage{color}

\usepackage{yfonts}
\usepackage{mathrsfs}

\usepackage{graphicx}
\usepackage{xspace}

\usepackage{times}
\usepackage{verbatim}
\usepackage{enumitem}
\usepackage{graphicx,wrapfig,lipsum}
\usepackage[font=small]{caption}
\usepackage{subcaption}
\usepackage[english]{babel}
\usepackage[utf8]{inputenc}
\usepackage{algorithm}
\usepackage[noend]{algpseudocode}
\usepackage{url}
\usepackage[all]{xy}
\usepackage[noadjust]{cite}
\usepackage{rotating}
\usepackage{ifpdf}
\usepackage{tcolorbox}
\usepackage{lipsum}

\usepackage{algorithm}
\usepackage{algpseudocode}

\usepackage{mdframed}             
\usepackage{xifthen} 
\usepackage{mathtools}  

\DeclareMathAlphabet\mathbfcal{OMS}{cmsy}{b}{n}



\newcommand{\complB}{{\overline{B}}} 
\newcommand{\complZ}{{\overline{Z}}}  
\newcommand{\complX}{{\overline{X}}}  
\newcommand{\complY}{{\overline{Y}}}  
\newcommand{\complS}{{\overline{S}}}

\newcommand{\calC}{\ensuremath{\mathcal C}\xspace}

\newcommand{\calG}{\ensuremath{\mathcal G}\xspace}
\newcommand{\calH}{\ensuremath{\mathcal H}\xspace}

\newcommand{\calM}{\ensuremath{\mathcal M}\xspace}

\newcommand{\calZ}{\ensuremath{\mathcal Z}\xspace}

\newcommand{\eulere}{{\textup{\textrm{e}}}}



\colorlet{darkgreen}{green!45!black}


\newcommand{\slotset}{{S}} 
\newcommand{\slotsetfamily}{{\mathcal{S}}} 
\newcommand{\slotsetfamilyclosure}{{{\slotsetfamily}^\ast}}

\newcommand{\slotrequestsequence}{\sigma}
\newcommand{\slotrequest}{\sigma}
\newcommand{\slotrequestpair}[2]{{ \langle {#1}, {#2} \rangle }}
\newcommand{\requestpair}[2]{{ \langle {#1}, {#2} \rangle }}

\newcommand{\pageset}{P} 
\newcommand{\pagesetfamily}{{\mathcal P}}
\newcommand{\pagerequestsequence}{\pi}
\newcommand{\pagerequest}{\pageset} 
\newcommand{\catratrequestsequence}{\mu}

\newcommand{\requestsequence}{\sigma}

\newcommand{\twopageconfig}{{\textit{Q}}}
\newcommand{\forcingseq}{{\psi}}

\newcommand{\solutionC}{{C}}


\newcommand{\kServer}{$k$-Server\xspace}
\newcommand{\Paging}{Paging\xspace}

\newcommand{\HeteroServer}{Heterogenous \kServer}

\newcommand{\GenerServer}{Generalized \kServer}
\newcommand{\AllOrOnePaging}{All-or-One \Paging}
\newcommand{\WeightedPaging}{Weighted \Paging}
\newcommand{\WeightedAllOrOnePaging}{Weighted All-Or-One \Paging}

\newcommand{\SlotHeteroPaging}{Slot-Heterogenous \Paging}
\newcommand{\OneOutOfmPaging}[1]{One-of-$#1$ \Paging}
\newcommand{\PageSubsetPaging}{Page-Subset \Paging}

\newcommand{\SlotLaminarPaging}{Slot-Laminar \Paging}
\newcommand{\PageLaminarPaging}{Page-Laminar \Paging}

\newcommand{\prevc}[2]{c_{#2}({#1})}
\newcommand{\prevp}[2]{p_{#2}({#1})}


\newcommand{\algA}{\ensuremath{\mathbb{A}}\xspace}
\newcommand{\algB}{\ensuremath{\mathbb{B}}\xspace}

\newcommand{\algExhSearch}{\text{\sc{ExhSearch}}\xspace}
\newcommand{\algRefSearch}{\text{\sc{RefSearch}}\xspace}

\newcommand{\NAME}[1]{\ensuremath{\mathsf{#1}}\xspace}

\newcommand{\sumofsizes}{\NAME{mass}}

\newcommand{\cost}{\NAME{cost}}
\newcommand{\opt}{\NAME{opt}}

\newcommand{\wt}[1]{\NAME{wt}(#1)}

\newcommand{\capacity}{\NAME{cap}} 
\newcommand{\credit}{\NAME{credit}}
\newcommand{\artificialpage}{{\zeta}}

\newcommand{\rep}{\NAME{rep}}
\newcommand{\dep}{\NAME{anc}}

\newcommand{\edgegadget}{{\boldsymbol{\omega}}}


\newcommand{\C}{C}                




\newcommand{\defn}[1]{\margincomment{def}{\emph{\color{blue}#1}}\emph{#1}}
\renewcommand{\defn}[1]{\emph{#1}}


\newtheorem{theorem}{Theorem}[section]
\newtheorem{lemma}{Lemma}[section]
\newtheorem{corollary}[lemma]{Corollary}
\newtheorem{claim}[lemma]{Claim}

\newtheorem{observation}[lemma]{Observation}




\newcommand{\ignore}[1]{}

\newcommand{\margincomment}[2]%
{\marginpar{\footnotesize\raggedright {\color{red}#1}: #2}}


\newcommand{\reporttag}[1]%
{\marginpar{\footnotesize\raggedright {\bf\color{red}#1}}}


\newcommand{\myparagraph}[1]{{\medskip\noindent\textbf{#1}}}
\newcommand{\myitparagraph}[1]{{\medskip\noindent\textit{#1}}}


\newcommand{\braced}[1]{{ \left\{ {#1} \right\} }}
\newcommand{\smallbraced}[1]{{ \{ {#1} \} }}

\newcommand{\suchthat}{{\;:\;}}


\newcommand{\NP}{\ensuremath{\mathbb{NP}}\xspace}

\newcommand{\R}{{\mathbb{R}}}

\newcommand{\poly}{{\rm poly}}

\newcommand{\reals}{{\mathbb R}}

\newcommand{\razy}{\,} 
\newcommand{\mytimes}{\,{\cdot}\,}
\newcommand{\polylog}{\operatorname{polylog}}
\newcommand\floor[1]{\lfloor#1\rfloor}

\newcommand{\ee}{{\textrm{e}}}

\newcommand{\complements}[1]{{\widetilde{#1}}}

\newcommand{\two}{2\,}
\newcommand{\three}{3\,}

\newcommand{\six}{6\,}

\newcommand{\half}{\textstyle{\frac{1}{2}}}

\newcommand{\ListLengths}{\setlength{\itemsep}{0ex}\setlength{\topsep}{1ex}\setlength{\partopsep}{0ex}}





\newlist{steps}{enumerate}{7}
\setlist[steps]{wide,
  topsep=1pt,
  parsep=2pt,
  partopsep=1pt,
  itemsep=0pt,
  align=left,
  labelindent=0pt,
  itemindent=0pt,
  leftmargin=1pt, 
  rightmargin=0pt,
  labelsep=4pt,
  labelwidth=0pt,
}
\setlist[steps,1]{
  leftmargin=6pt, 
  label=\small\arabic*.,
  ref=\arabic*
}
\setlist[steps,2]{
  labelsep=5pt,
  label=\small\arabic{stepsi}.\arabic*.,
  ref=\arabic{stepsi}.\arabic*
}
\setlist[steps,3]{
  labelsep=6pt, 
  label=\small\arabic{stepsi}.\arabic{stepsii}.\arabic*.,
  ref=\arabic{stepsi}.\arabic{stepsii}.\arabic*
}
\setlist[steps,4]{
  labelsep=7pt, 
  label=\small\arabic{stepsi}.\arabic{stepsii}.\arabic{stepsiii}.\arabic*.,
  ref=\arabic{stepsi}.\arabic{stepsii}.\arabic{stepsiii}.\arabic*
}
\setlist[steps,5]{
  labelsep=8pt, 
  label=\small\arabic{stepsi}.\arabic{stepsii}.\arabic{stepsiii}.\arabic{stepsiv}.\arabic*.,
  ref=\arabic{stepsi}.\arabic{stepsii}.\arabic{stepsiii}.\arabic{stepsiv}.\arabic*
}
\setlist[steps,6]{
  labelsep=9pt,  
label=\small\arabic{stepsi}.\arabic{stepsii}.\arabic{stepsiii}.\arabic{stepsiv}.\arabic{stepsv}.\arabic*.,
  ref=\arabic{stepsi}.\arabic{stepsii}.\arabic{stepsiii}.\arabic{stepsiv}.\arabic{stepsv}.\arabic*
}
\setlist[steps,7]{
  labelsep=10pt, 
  label=\small\arabic{stepsi}.\arabic{stepsii}.\arabic{stepsiii}.\arabic{stepsiv}. \arabic{stepsv}. \arabic{stepsvi}.\arabic*.,
  ref=\arabic{stepsi}.\arabic{stepsii}.\arabic{stepsiii}.\arabic{stepsiv}.\arabic{stepsv}. \arabic{stepsvi}.\arabic*
}

\newcommand{\step}[1][]{\item\maybeLabel{#1}}
\newcommand{\interval}[2]{[#1, #2]}

\makeatletter
\newcommand{\myskipitem}{\refstepcounter{\@enumctr}}
\makeatother

\newcommand{\maybeLabel}[1]{\ifthenelse{\isempty{#1}}{}{\label{#1}}}

\newcommand{\algcomment}[1]{\hfill{\footnotesize\emph{#1}}}



\newif\ifshort
\shorttrue


\usepackage{comment}
\usepackage[T1]{fontenc}


\usepackage[hidelinks]{hyperref}  

\algnewcommand\algorithmicinput{\textbf{Input:}}
\algnewcommand\algorithmicoutput{\textbf{Output:}}
\algnewcommand\Input{\item[\algorithmicinput]}%
\algnewcommand\Output{\item[\algorithmicoutput]}%


\newcommand{\eat}[1]{}

\title{Online Paging with Heterogeneous Cache Slots\thanks
  {Conference and journal versions of this paper appear in STACS and Algorithmica~\cite{chrobak+hlprsy:heterogeneous:conference,chrobak+hlprsy:heterogeneous:journal}.}}
\author[1]{Marek Chrobak\thanks{Research partially supported by National Science Foundation grants CCF-1536026 and CCF-2153723.}}
\author[2]{Samuel Haney\thanks{Research partially supported by National Science Foundation grants CCF-1527084 and CCF-1535972.}}
\author[3]{Mehraneh Liaee\thanks{Research partially supported by National Science Foundation grants CCF-1535929 and CCF-1909363.}}
\author[4]{Debmalya Panigrahi\thanks{Research partially supported by National Science Foundation grants CCF-1527084, CCF-1535972, CCF-1750140, CCF-1955703, an Army Research Office grant W911NF2110230, and the Indo-US Joint Center for Algorithms under Uncertainty.}}
\author[3]{Rajmohan Rajaraman\thanks{Research partially supported by National Science Foundation grants CCF-1535929 and CCF-1909363.}}
\author[3]{Ravi Sundaram\thanks{Research partially supported by National Science Foundation grants CCF-1535929 and IIS-2039945.}}
\author[1]{Neal E. Young\thanks{Research partially supported by National Science Foundation grant CCF-1619463.}}
\affil[1]{University of California Riverside}
\affil[2]{Tumult Labs}
\affil[3]{Northeastern University}
\affil[4]{Duke University}

\date{}


%

\begin{document}
	
\maketitle
\begin{abstract}
  It is natural to generalize the online \kServer problem
  by allowing each request to specify not only a point $p$, but also a subset $S$ of servers that may serve it.
  To date, only a few special cases of this problem have been studied.
  The objective of the work presented in this paper has been to more systematically
  explore this generalization in the case of uniform and star metrics.
  For uniform metrics, the problem is equivalent to a generalization of \Paging
  in which each request specifies not only a page $p$, but also a subset $S$ of cache slots,
  and is satisfied by having a copy of $p$ in some slot in $S$. We call this problem \emph{\SlotHeteroPaging}.

  In realistic settings only certain subsets of cache slots or servers would appear in requests. Therefore
  we parameterize the problem by specifying a family $\slotsetfamily \subseteq 2^{[k]}$ of requestable slot sets,
  and we establish bounds on the competitive ratio as a function of the cache size $k$ and family $\slotsetfamily$:
  \begin{itemize}
  \item  If all request sets are allowed ($\slotsetfamily=2^{[k]}\setminus\{\emptyset\}$),
    the optimal deterministic and randomized competitive ratios
    are exponentially worse than for standard \Paging ($\slotsetfamily=\{[k]\}$).
  \item
    As a function of $|\slotsetfamily|$ and $k$, the optimal deterministic ratio is polynomial:
    at most $O(k^2|\slotsetfamily|)$ and at least $\Omega(\sqrt{|\slotsetfamily|})$.
  \item  
    For any laminar family $\slotsetfamily$ of height $h$, the optimal ratios
    are $O(hk)$ (deterministic) and $O(h^2\log k)$ (randomized).  
  \item The special case of laminar $\slotsetfamily$ that we call \emph{\AllOrOnePaging}
    extends standard \Paging by allowing each request to specify a specific slot to put the requested page in.
    The optimal deterministic ratio for \emph{weighted} \AllOrOnePaging is $\Theta(k)$.
    Offline \AllOrOnePaging is \NP-hard.
  \end{itemize}
  Some results for the laminar case are shown via a reduction to the generalization of \Paging
  in which each request specifies a set $\pageset$ of \emph{pages},
  and is satisfied by fetching any page from $\pageset$ into the cache.  
  The optimal ratios for the latter problem (with laminar family of height $h$)
  are at most $hk$ (deterministic) and $hH_k$ (randomized).
  \\[1em]
  {\sf \textbf{\scriptsize Keywords:}} \emph{Online algorithms, Competitive analysis, Paging, Caching, $k$-Server problem}
\end{abstract}

\thispagestyle{empty}
\newpage 

\pagenumbering{arabic}


\section{Introduction}%
\label{sec: introduction}

The standard \kServer and \Paging models assume homogenous (interchangeable) servers and cache slots.
They don't model applications where servers have different capabilities e.g., content-delivery networks such as Akamai with multi-level distributed caches~\cite{li2017collaborative,sundarrajan2017footprint}, or modern cache systems that partition the slots, sometimes dynamically, with differential accessibility by specific processors, cores, processes, threads, or page sets (e.g.,~\cite
{DBLP:journals/dafes/ZangG16,Domnitser_etal_non-monopolizable_caches_2012,Liu_etal_defeating_cache_side_channels_2016,Wang_etal_new_cache_designs_2007,ye+wcl:cache_partitioning,xiang_dcaps_2018}).

This motivates us to generalize the online \kServer problem to allow each request to specify not only a point $p$, but also a subset $S$ of
servers that may serve it.  We call this generalization \emph{Heterogenous} \kServer.  To date, only a few special cases of
this problem have been studied~\cite{jignesh_restricted_servers_2004,castenow+fkmd:server}.
Significant insights into \kServer and its extensions~\cite {ChrobakL91,ChrobakKPV91,feuerstein_uniform_1998, DBLP:conf/focs/BansalEK17,
DBLP:journals/talg/BansalEKN23, DBLP:conf/isaac/BienkowskiJS19,DBLP:journals/tcs/KoutsoupiasT04, Chiplunkar2015Metrical} have
been obtained by first examining simpler versions of the problem.
Following this strategy, we embark on a systematic study of the Heterogenous \kServer problem
when the underlying metric space is uniform or has a star topology.  
For uniform metrics, the problem is equivalent to a variant of \Paging in which each request
specifies a page $p$ and a subset $S$ of $k$ cache slots, to be
satisfied by having a copy of $p$ in some slot in $S$. We call this
problem \emph{\SlotHeteroPaging}.  For star metrics the problem
reduces to a weighted variant where the cost of retrieving a page is
the weight of the page.  For reasons discussed below, we parameterize
these problems by allowing the requestable sets $S$ to be restricted
to an arbitrary but pre-specified family $\slotsetfamily \subseteq
2^{[k]}$.  (Restricting to $\slotsetfamily=\{[k]\}$ gives standard \Paging and \kServer.)

As common in the study of online optimization problems, we employ competitive analysis.
We use the standard definifion of the competitive ratio of an online algorithm 
(see Section~\ref{sec: preliminaries}). In the discussion below, by the
\emph{optimal competitive ratio} for a given problem we mean the smallest competitive
ratio achievable by an online algorithm.

Next is a summary of our results, followed by a summary of related work.


\myparagraph{\SlotHeteroPaging (Section~\ref{sec: slot hetero paging}).}
As we point out, \SlotHeteroPaging easily reduces (preserving the competitive ratio)
to the \GenerServer problem in uniform metrics, for which upper bounds of $k2^k$ and $O(k^2\log k)$
on the deterministic and randomized ratios are known~\cite{DBLP:journals/talg/BansalEKN23,DBLP:conf/isaac/BienkowskiJS19}. 
\begin{itemize}
\item
\ignore{Our Theorems~\ref{thm: slot hetero lower bound}\,(i)
  and~\ref{thm: lower bound power set randomized} show that the
  optimal deterministic and randomized competitive ratios for
  \SlotHeteroPaging are at least $\Omega(2^k/\sqrt{k})$ and
  $\Omega(k)$.}  We show that the optimal deterministic and randomized
competitive ratios for \SlotHeteroPaging are at least
$\Omega(2^k/\sqrt{k})$ and $\Omega(k)$, respectively
(Theorems~\ref{thm: slot hetero lower bound}\,(i) and~\ref{thm: lower bound power set randomized}).
\end{itemize}
Hence, the optimal ratios for \SlotHeteroPaging are exponentially
worse than for standard \Paging.  The proofs of Theorems~\ref{thm:
  slot hetero lower bound} and~\ref{thm: lower bound power set
  randomized} employ some novel ideas that may be useful for other
problems: the lower bound in Theorems~\ref{thm: slot hetero lower
  bound}\,(i) uses an adversary argument that requires the
construction of a set family not yet studied in the literature, while the proof of
Theorem~\ref{thm: lower bound power set randomized} is carried out via
a reduction \emph{from} standard \Paging with a cache of size $\exp(\Theta(k))$.

The large competitive ratios in these lower bounds occur only for
instances that use exponentially many distinct request sets
$\slotset$.  In realistic settings only certain subsets of cache
slots or servers can appear in requests, namely those that represent
capabilities or functionalities relevant in a given setting.  This
motivates us to study the optimal ratios as a function of the cache
size $k$ and the family $\slotsetfamily$ of requestable slot sets, and
to try to identify natural families that admit more reasonable ratios.
\begin{itemize}
  \item
\ignore{Theorem~\ref{thm: subset server upper bound} shows that the optimal deterministic ratio is at most $k^2 |\slotsetfamily|$ 
for any family $\slotsetfamily$. Theorem~\ref{thm: slot hetero lower bound}\,(ii) shows a complementary lower bound:
for infinitely many families $\slotsetfamily$, every deterministic online algorithm has competitive ratio $\Omega(\sqrt{|\slotsetfamily|})$.}
We show that the optimal deterministic ratio is at most $k^2 |\slotsetfamily|$ 
for any family $\slotsetfamily$ (Theorem~\ref{thm: subset server upper bound}). 
 Theorem~\ref{thm: slot hetero lower bound}\,(ii) shows a complementary lower bound:
for infinitely many values of $k$ there is a family $\slotsetfamily \subseteq 2^{[k]}$,
 for which every deterministic online algorithm has competitive ratio $\Omega(\sqrt{|\slotsetfamily|})$.
\end{itemize}
Together Theorems~\ref{thm: subset server upper bound} and~\ref{thm: slot hetero lower bound}\,(ii) imply that,
as a function of $|\slotsetfamily|$ and $k$, the optimal deterministic ratio for \SlotHeteroPaging is polynomial.


\myparagraph{\PageLaminarPaging (Section~\ref{sec: page laminar paging}).} 
We take a brief detour to consider \emph{\PageSubsetPaging}, a natural generalization of \Paging
in which each request is a set $\pageset$ of \emph{pages} from an arbitrary
but fixed family $\pagesetfamily$, and is satisfiable 
by fetching any page from $\pageset$ into any slot in the cache.
We focus on its special case of \emph{\PageLaminarPaging}, where this set family $\pagesetfamily$ is laminar.
\begin{itemize}
\item
  We show that the optimal deterministic and randomized ratios for \PageLaminarPaging
  are at most $hk$ and $h H_k$, where $h$ is the height of the laminar family and $H_k=\sum_{i=1}^k 1/i = \ln k + O(1)$ (Theorem~\ref{thm: page laminar}).
  \end{itemize}
The proof is by a reduction that replaces each set request
$\pageset$ by a request to one carefully chosen page in $\pageset$,
yielding an instance of \Paging, while increasing the optimal cost by
at most a factor of $h$.


\myparagraph{\SlotLaminarPaging (Section~\ref{sec: slot laminar paging}).} 
We then return to \SlotHeteroPaging,
now considering the specific structure of $\slotsetfamily$, showing better bounds when $\slotsetfamily$ is laminar.
This case, which we call \defn{\SlotLaminarPaging}, is intended to model scenarios where server capabilities are hierarchical.
For example, 
some proposed cache partitioning systems (see~\cite{DBLP:journals/dafes/ZangG16} and  
the references therein) divide the cache into parts, some exclusive to certain processes
and other fully or partially shared. 
Such a 
cache partitioning strategy can be modeled as a simple laminar structure.

Laminarity implies that $|\slotsetfamily|<2k$, so (per Theorem~\ref{thm: subset server upper bound} above)
the optimal deterministic ratio is $O(k^3)$.
\begin{itemize}
\item We show that the optimal deterministic and randomized ratios for
\SlotLaminarPaging are $O(h^2k)$ and $O(h^2\log k)$, where $h\le k$ is
the height of $\slotsetfamily$ (Theorem~\ref{thm: slot laminar}).  We
next tighten the deterministic bound to $O(hk)$ (Theorem~\ref{thm:
  laminar upper bound}).
\end{itemize}
The proof of Theorem~\ref{thm: slot laminar} is via a reduction to
\PageLaminarPaging (discussed above),
while the proof of Theorem~\ref{thm: laminar upper
  bound} refines the generic algorithm from Theorem~\ref{thm: subset
  server upper bound}.  The dependence on $k$ in these bounds is
asymptotically tight, as \SlotLaminarPaging generalizes standard
\Paging.


\myitparagraph{Reducing \SlotLaminarPaging to \PageLaminarPaging.}
The reduction of \SlotLaminarPaging to \PageLaminarPaging in Theorem~\ref{thm: slot laminar} is achieved
via a relaxation of \SlotLaminarPaging that drops the constraint that each slot holds at most one page,
while still enforcing the cache-capacity constraint of $k$.
This relaxed instance is naturally equivalent to an instance of \PageLaminarPaging.
The proof then shows how any solution for the relaxed instance
can be ``rounded''  back to a solution for the original \SlotLaminarPaging instance,
losing an $O(h)$ factor in the cost and competitive ratio.


\myparagraph{\AllOrOnePaging (Section~\ref{sec: all or one paging}).}
\defn{\AllOrOnePaging} is the restriction of \SlotLaminarPaging (with
height $h=2$) to $\slotsetfamily=\{[k]\}\cup\{\{j\}\}_{j\in[k]}$. That
is, only two types of requests are allowed: \emph{general requests}
(allowing the requested page to be anywhere in the cache), and
\emph{specific requests} (requiring the page to be in a specified
slot).  Specific requests don't give the algorithm any choice, so may
appear easy to handle, but in fact make the problem substantially
harder than standard \Paging.  Recent independent
work on \AllOrOnePaging~\cite{castenow+fkmd:server} has shown that the optimal
deterministic ratio is twice that of \Paging, to within an additive constant.
\begin{itemize}
\item
  We show that the optimal randomized ratio of \AllOrOnePaging is also
  at least twice that for \Paging (Theorem~\ref{thm: all or one deterministic 2k-1 lower bound}), 
  while Theorem~\ref{thm: slot laminar} upper bounds the optimal randomized ratio to within a
  constant factor of that for \Paging.  We also show that the offline
  problem is $\NP$-hard (Theorem~\ref{thm: all or one np-hard}), in
  sharp contrast to even \kServer, which can be solved in polynomial time for arbitrary metrics.
\end{itemize}
%


\myparagraph{\WeightedAllOrOnePaging (Section~\ref{sec: weighted all or one paging}).} 
To gain insight into \HeteroServer in non-uniform metrics we investigate
\defn{\WeightedAllOrOnePaging}, which extends \AllOrOnePaging so that
each page has a non-negative weight and the cost of each retrieval is the weight of the page instead of 1.
%
\begin{itemize}
  \item
We show that the optimal deterministic ratio for \WeightedAllOrOnePaging is $O(k)$, matching the ratio for standard
\WeightedPaging up to a constant factor (Theorem~\ref{thm: wtd alg}).
\end{itemize}
The algorithm in the proof is implicitly a linear-programming
primal-dual algorithm. With this approach the crucial obstacle to
overcome is that the standard linear program for standard
\WeightedPaging does not force pages into specific slots.  Indeed,
doing so makes the natural integer linear program an \NP-hard
multicommodity-flow problem.  (Section~\ref{sec: weighted all or one
  paging} has an example that illustrates the challenge.)  We show how
to augment the linear program to partially model the slot constraints.

\begin{table}
    \centering\newcommand{\SETSIZE}{\footnotesize}
    \newcommand{\ditto}[1]{\hspace*{#1em}''}
    \newcommand{\nest}[1]{#1}
    \renewcommand{\arraystretch}{1.1}
    {\SETSIZE
        \begin{tabular}{|@{\,} l @{\,\,} l @{\,} l @{\!\!} l @{\,} l @{\,}|}
            \multicolumn{1}{l}{problem} & set family $\slotsetfamily$ (or $\pagesetfamily$) & deterministic & randomized & \multicolumn{1}{l}{where}\\
            \hline\SETSIZE
            \SlotHeteroPaging & $2^{[k]}\setminus\{\emptyset\}$ & ${}\le k 2^k$ & ${}\le O(k^2\log k)$
            & via~\cite{DBLP:journals/talg/BansalEKN23, DBLP:conf/isaac/BienkowskiJS19} 
            \\
            \nest{\ditto 3} & arbitrary $\slotsetfamily$
            & ${}\le k \min(|\slotsetfamily^\ast|, \sumofsizes(\slotsetfamily))$ &
            & Thm.~\ref{thm: subset server upper bound} 
            \\[1ex]
            \nest{\OneOutOfmPaging{m}, $m\approx k/2$} & $\binom {[k]} m$
            & ${}\ge \Omega(2^k/\sqrt k)$ & ${}\ge \Omega(k)$
            & Thms.~\ref{thm: slot hetero lower bound}(i),~\ref{thm: lower bound power set randomized}
            \\[1ex]
            \nest{\OneOutOfmPaging{m}, any $m$}
            & $\binom {[k]} m$ & ${}\gtrsim \Omega((4k/m)^{m/2}/\sqrt m)$ &
            & Thm.~\ref{thm: slot hetero lower bound}(ii) 
            \\[1ex]
            \nest{\SlotLaminarPaging} & laminar $\slotsetfamily$, height $h$ & ${} \le (2h-1)k$
            & ${}\le 3 h^2 H_k$
            & Thms.~\ref{thm: slot laminar},~\ref{thm: laminar upper bound}
            \\[1ex]
            \nest{\AllOrOnePaging} & \footnotesize $\{[k]\}\cup\{\{s\} : s\in [k]\}$
            & ${}\ge 2k-1$ & ${}\ge 2H_k-1$
            & ~\cite{castenow+fkmd:server,haney:thesis},
            Thm.~\ref{thm: all or one deterministic 2k-1 lower bound}
            \\
            \nest{\ditto{3}} & \ditto{3}
            & ${}\le 2k+14$ &
            & ~\cite{castenow+fkmd:server}
            \\[0.5ex]\hline
            \rule{0pt}{2.5ex}%
            \SETSIZE
            \WeightedAllOrOnePaging & \footnotesize $\{[k]\}\cup\{\{s\} : s\in [k]\}$ & ${}\le O(k)$ &
            & Thm.~\ref{thm: wtd alg} 
            \\[0.6ex] \hline
            \multicolumn{1}{|@{~}r}{\SETSIZE\PageSubsetPaging restr.~to\!\!} & $\pagesetfamily=\binom{\textrm{all pages}} m$ &
            ${}\ge \binom {k+m} k - 1$ && \cite{feuerstein_uniform_1998}
            \\
            \multicolumn{2}{|l}{\ditto 3} &
            ${}\le k(\binom {k+m} k - 1)$ & ${}\le O(k^3\log m)$& \cite{Chiplunkar2015Metrical}
            \\[1ex]
            \nest{\PageLaminarPaging} & $\pagesetfamily$ laminar, height $h$ & ${}\le hk$ & ${}\le h H_k$
            & Thm.~\ref{thm: page laminar}  

            \\ \hline
        \end{tabular}
    }
    \caption{Summary of upper ($\le$) and lower ($\ge$) bounds on optimal competitive ratios.
    Here $\sumofsizes(\slotsetfamily)=\sum_{\slotset\in\slotsetfamily} |\slotset|$
        and $\slotsetfamily^\ast = \bigcup_{\slotset\in\slotsetfamily} 2^\slotset$.
        By $\binom{X}{m}$ we denote the family of all $m$-element subsets of a set $X$.
        The lower bound for \OneOutOfmPaging{m}
        holds for some but not all $m$ and $k$---see Theorem~\ref{thm: slot hetero lower bound}(ii).
        The upper bound for {\SlotLaminarPaging} in the deterministic case (Theorem~\ref{thm: slot laminar})
        is in fact $2\mytimes\sumofsizes(\slotsetfamily)-k$, which is at most $(2h-1)k$.
        Also, offline \AllOrOnePaging and its generalizations are \NP-hard
        (Theorem~\ref{thm: all or one np-hard}),
        as is offline \PageSubsetPaging~(\cite{Chiplunkar2015Metrical}).}\label{table: results}
\end{table}



\myparagraph{Related work.}
\Paging and \kServer have played a central role in the theory of online computation since their introduction in the 1980s~\cite{DBLP:journals/cacm/SleatorT85,DBLP:journals/jal/ManasseMS90,DBLP:books/daglib/0097013}.
For \kServer, the optimal deterministic ratio is between $k$ and $2k-1$~\cite{DBLP:journals/jacm/KoutsoupiasP95}.
Recent work~\cite{coester_koutsoupias_k-servers_2021} offers hope for closing this gap,
and substantial progress towards resolving the randomized case has been reported 
in~\cite{bansal_etal_polylogarithmic_2015,DBLP:conf/stoc/BubeckCLLM18,DBLP:conf/stoc/BubeckCR23}.
For Weighted \Paging the optimal ratios are $k$ (deterministic) and $\Theta(\log k)$ (randomized)~\cite
{DBLP:journals/cacm/SleatorT85,
  fiat_etal_competitive_paging_1991,
  DBLP:journals/algorithmica/McGeochS91,
  DBLP:journals/tcs/AchlioptasCN00,
  DBLP:journals/jacm/BansalBN12,Lee18,leeErratum}.

\emph{Restricted Caching} is one previously studied model with \emph{heterogenous} cache slots.
It is the restriction of \SlotHeteroPaging in which each page $p$ has one \emph{fixed} set $S_p\subseteq [k]$ of slots,
and each request to $p$ requires $p$ to be in some slot in $S_p$.
For this problem the optimal randomized ratio is $O(\log^2 k)$~\cite{buchbinder_matroid_caching_2014}.
Better bounds are possible given further restrictions on the sets,
as in \emph{Companion Caching}, which models a cache partitioned into set-associative and fully associative parts~\cite
{brehob_restricted_caching_2003,
  mendel_seiden_companion_caching_2004,
  brehob_np-hard_nonstandard_caches_2004}.
It is natural to ask whether \emph{Restricted \kServer}---%
the restriction of \HeteroServer that requires each point $p$ to be served by a server in a \emph{fixed} set $S_p$---%
is easier than \HeteroServer.  While the two problems are different for many metric classes, they can be shown to be equivalent in metric spaces with no isolated points, such as Euclidean spaces.
The $\NP$-hardness result for Restricted Caching from~\cite{brehob_np-hard_nonstandard_caches_2004}
implies that offline \SlotHeteroPaging
with $\slotsetfamily = \braced{\braced{s, k}\suchthat s\in [k-1]}$ is \NP-hard.

Other sophisticated online caching models include \emph{Snoopy Caching},
in which multiple processors each have their own cache and coordinate to maintain consistency across writes~\cite
{karlin_snoopy_caching_1988}, \emph{Multi-Level Caching},
where the cost to access a slot depends on the slot~\cite{DBLP:journals/jal/ChrobakN00}, and \emph{Writeback-Aware Caching},
where each page has multiple copies, each with a distinct level and weight,
and each request specifies a page and a level, and can be satisfied by
fetching a copy of this page at the given or a higher level~\cite{DBLP:conf/apocs/BeckmannGHM20,DBLP:conf/spaa/BansalNT21}.
(This is a special case of weighted \PageLaminarPaging where $\pagesetfamily$ consists of pairwise-disjoint chains.)
\emph{Multi-Core Caching} models the fact that faults can change the request sequence (e.g.~\cite{kamali_multicore_paging_2020}).

Patel's master thesis~\cite{jignesh_restricted_servers_2004} studies \HeteroServer with just two types of requests---general requests (to
be served by any server) and specialized requests (to be served by any server in a fixed subset $S'$ of ``specialized'' servers)---
and bounds the optimal ratios for uniform metrics and the line.  Recent independent work on deterministic algorithms for online
\AllOrOnePaging establishes a $2k-1$ lower bound and a $2k + 14$ upper bound~\cite{castenow+fkmd:server}.  Earlier
work in~\cite{haney:thesis} presents a $2k-1$ lower bound and a $3k$ upper bound on deterministic algorithms.

\HeteroServer reduces (see Section~\ref{sec: slot hetero paging}) to the \defn{\GenerServer} problem,
in which each server moves in its own metric space, each request specifies one point in each space,
and the request is satisfied by moving any one server to the requested point in its space~\cite{DBLP:journals/tcs/KoutsoupiasT04}.
For uniform metrics, the optimal competitive ratios for this problem are between $2^k$ and $k2^k$ (deterministic)
and between $\Omega(k)$ and $O(k^2 \log k)$ (randomized)~\cite{DBLP:journals/talg/BansalEKN23,DBLP:conf/isaac/BienkowskiJS19}.
These ratios are exponentially worse than the ratios for standard \kServer.
\HeteroServer, parameterized by $\slotsetfamily$, provides a spectrum of problems that bridges the two extremes.

\emph{Weighted} \kServer is a restriction of \GenerServer
in which servers move in the same metric space but have different weights,
and the cost is the weighted distance~\cite{DBLP:journals/tcs/FiatR94}.
For this problem the deterministic and randomized ratios are at least (respectively) doubly exponential~\cite{DBLP:conf/focs/BansalEK17,DBLP:journals/talg/BansalEKN23}
and exponential~\cite{chiplunkar_etal_generalized_server_2020,DBLP:conf/esa/AyyadevaraC21}, even in uniform metrics.

For \PageSubsetPaging restricted to $m$-element sets of pages,
the optimal ratios are between $\binom{k+m} k -1$ and $k(\binom{k+m} k - 1)$
(deterministic)
and between $\Omega(\log km)$ and $O(k^3 \log m)$ (randomized)~\cite{feuerstein_uniform_1998,Chiplunkar2015Metrical}.
This problem has been studied as 
uniform \emph{Metrical Service Systems with Multiple Servers (MSSMS)}.
MSSMS is the generalization of \kServer
where each request is a \emph{set} of points, one of which needs to be
covered by some server.

The \defn{$k$-Chasing} problem extends \kServer by having each request $\pageset$ be a convex subset of $\R^d$,
to be satisfied by moving any server to any point in $\pageset$~\cite{bubeck2021online}. 
For $k$-Chasing, no online algorithm is competitive even for $d= k=2$~\cite{bubeck2021online},
while for $k=1$ the ratios grow with $d$~\cite{argue_chasing_2021,sellke_chasing_2020}. 

In the \emph{$k$-Taxi problem} each request $(p,q)$ requires any server to move to $p$ then (for free) to $q$.
For this problem the optimal ratios are exponentially worse than for standard \kServer~\cite{DBLP:conf/stoc/CoesterK19,DBLP:journals/mp/BuchbinderCN23}.



\section{Formal Definitions}%
\label{sec: preliminaries}


Most of the set-theoretic notation and terminology used in our paper is standard.
By $\reals$ we denote the set of real numbers.
For a non-negative integer $j$, we use notation $[j]$ for the set of first $j$ positive
integers: $[j] = \braced{1,2,\ldots,j}$.
If $X$ is any set then  $2^X$ denotes the power set of $X$, that is
$2^X = \braced{Y: Y\subseteq X}$, and (adapting the notation for binomial coefficients) we
use notation $\binom{X}{m}$ for the family of all $m$-element subsets of $X$.

\newcommand{\mydefn}[1]{\par\smallskip\noindent\defn{\textbf{#1}}.}

\mydefn{Heterogenous \kServer}
In this natural generalization of the well-known \kServer problem,
each request 
specifies a subset of servers that may serve the request.
As in \kServer, we are given $k$ servers, numbered $1,2,\ldots,k$, that reside in a metric space $\calM$.
We are also given a family $\slotsetfamily \subseteq 2^{[k]} \setminus \braced{\emptyset}$ of requestable
sets of servers. The objective is to serve a given
 request sequence $\braced{\rho_t}_{t= 1}^T$,
where each request is specified as a pair $\rho_t = \slotrequestpair{r_t}{\slotset_t}$,
for some point $r_t\in \calM$ and set  $\slotset_t \in \slotsetfamily$.
To serve the request $\rho_t = \slotrequestpair{r_t}{\slotset_t}$,
one of the servers in $\slotset_t$ must be moved to $r_t$.
The cost of this move is the distance from its previous location to $r_t$.
The overall cost of serving the request sequence $\braced{\rho_t}_{t= 1}^T$
is the total movement cost of all servers.

\mydefn{\SlotHeteroPaging}
A problem instance consists of a set $[k] = \braced{1,2,\ldots,k}$ of cache slots,
a family $\slotsetfamily \subseteq 2^{[k]} \setminus \braced{\emptyset}$ of requestable slot sets,
and a request sequence $\slotrequestsequence = \braced{\slotrequest_t}_{t= 1}^T$,
where each request has the form $\slotrequest_t = \slotrequestpair{p_t}{\slotset_t}$
for some \emph{page} $p_t$ and set $\slotset_t \in \slotsetfamily$.
A \emph{cache configuration} $C$ is a function that specifies the content of each slot $s\in [k]$;
this content is either a single page (said to be assigned to the slot) or empty. 
Configuration $C$ is said to \emph{satisfy} a request $\slotrequestpair{p}{S}$ if it assigns page $p$
to at least one slot in $S$. A \emph{solution} for a given request sequence $\slotrequestsequence$
is a sequence $\braced{C_t}_{t= 1}^T$ of cache configurations such that, for each $t\in [T]$,
$C_t$ satisfies request $\slotrequest_t$.  The objective is to minimize the number of 
\emph{retrievals},  where a page $p$ is retrieved in slot $s$ at time $t$ 
if $C_t$ assigns $p$ to $s$, but $C_{t-1}$ does not (or $t=1$).
An event when $C_{t-1}$ does not assign $p_t$ to any slot in $\slotset_t$ is called a \emph{fault}.
Obviously a fault triggers a retrieval but, while this is not strictly necessary, it is
convenient to also allow an algorithm to make spontaneous retrievals,
either by fetching into the cache a non-requested page or by moving pages within the cache.

\mydefn{\SlotLaminarPaging}
This is the restriction of \SlotHeteroPaging to instances where $\slotsetfamily$ is \emph{laminar}:
every pair $R, R'\in \slotsetfamily$ of sets is either disjoint or nested.  (This implies $|\slotsetfamily| \le 2k$.) 
A laminar family $\slotsetfamily$ can be naturally represented by a forest (a collection of disjoint rooted trees),
with a set $R$ being a descendant of $R'$ if $R\subseteq R'$. 
When discussing
\SlotLaminarPaging we routinely use tree-related terminology. For example, we refer to
some sets in $\slotsetfamily$ as leaves, roots, or internal nodes, and we also use other
common relations between the nodes of a rooted tree: of being a child, parent, or ancestor.
The height $h$ of a laminar family $\slotsetfamily$
is one more than the maximum height of a tree in $\slotsetfamily$, 
that is the maximum $h$ for which $\slotsetfamily$
contains a sequence of $h$ strictly nested sets: $R_1 \subsetneq R_2 \subsetneq \ldots \subsetneq R_h$.

\mydefn{\AllOrOnePaging}
This is the restriction of \SlotLaminarPaging to instances with $\slotsetfamily = \braced{[k]} \cup \braced{\braced{j}}_{j\in [k]}$.
That is, there are two types of requests:
\emph{general}, of the form $\slotrequestpair p {[k]}$, requiring page $p$ to be in at least one slot of the cache, 
and \emph{specific}, of the form $\slotrequestpair p {\{j\}}$, $j\in [k]$, requiring page $p$
to be in slot $j$. For convenience,  $\slotrequestpair p \ast$ is a synonym for $\slotrequestpair p {[k]}$,
while $\slotrequestpair p j$ is a synonym for $\slotrequestpair p {\{j\}}$.

\mydefn{\WeightedAllOrOnePaging}
This is the natural extension of \AllOrOnePaging in which each page $p$ is assigned a non-negative weight $\wt p$,
and the cost of retrieving $p$ is $\wt p$ instead of 1.

\mydefn{\OneOutOfmPaging{m}}
This is the restriction of \SlotHeteroPaging to instances
with $\slotsetfamily = \binom {[k]}{m} = \{\slotset \subseteq [k] : |\slotset|=m\}$,
that is, every request specifies a set of $m$ slots.

\mydefn{\PageSubsetPaging}
An instance consists of $k$ cache slots,  a collection $\pagesetfamily$ of requestable sets of pages,
and a request sequence $\pagerequestsequence = \braced{\pagerequest_t}_{t=1}^T$, where each $\pagerequest_t$ is drawn from $\pagesetfamily$.
A solution is a sequence  $\braced{C_t}_{t= 1}^T$ of cache configurations (as previously defined)
such that, at each time $t\in[T]$, $C_t$ assigns at least one page in $\pageset_t$ to at least one slot.
The objective is to minimize the number of retrievals. (Slots are interchangeable here, so a cache configuration could be defined
as a multiset of at most $k$ pages, but using slot assignments is technically more convenient.)

\mydefn{\PageLaminarPaging}
This is the restriction of \PageSubsetPaging to instances where $\pagesetfamily$ is \emph{laminar}.

\mydefn{\GenerServer} In this variant of \kServer, each server moves
in its own metric space; each request specifies one point in each
space, and the request is satisfied by moving any one server to the
requested point in its space~\cite{DBLP:journals/tcs/KoutsoupiasT04}.

\myparagraph{Approximation algorithms}.
An algorithm $\algA$ for a given cost minimization problem  is called a \emph{$c$-approximation algorithm}
if, for each instance $\slotrequestsequence$, $\algA$ satisfies
$\cost_{\algA}(\slotrequestsequence) \;\le\; c \cdot \opt(\slotrequestsequence) + b$, where $\cost_{\algA}(\slotrequestsequence)$ is the cost of $\algA$ on $\slotrequestsequence$, 
$\opt(\slotrequestsequence)$ is the optimum cost of $\slotrequestsequence$, and $b$ is a constant independent of $\slotrequestsequence$.  
We follow the standard convention that when we are considering $\algA$ as an offline algorithm,
the constant $b$ must be $0$.


\myparagraph{Online algorithms and competitive ratio}.
In the online variants of the paging and server problems 
the requests arrive online, one per time step, and an online algorithm must satisfy each request before
subsequent requests are revealed.
An online algorithm $\algA$ is called \emph{$c$-competitive} if $\algA$ is a $c$-approximation algorithm.
As common in the literature, we use the term
``optimal deterministic (resp.~randomized) competitive ratio'' to refer to the
optimal ratio of of a deterministic (resp.~randomized) online algorithm for the given problem.

For {\SlotHeteroPaging} (or {\PageSubsetPaging}) problems that we study,
we assume that the algorithm knows in advance the underlying set family $\slotsetfamily$ (or $\pagesetfamily$).
This assumption is natural, given that each set family defines a different computational problem, and it's
needed for algorithms customized to a specific family (for example, for {\AllOrOnePaging}).
On the other hand, the generic Algorithm~{\algExhSearch} for \SlotHeteroPaging in Section~\ref{sec: slot hetero upper bound}
does not use any information about the family $\slotsetfamily$.
Its adaptation to \SlotLaminarPaging in Section~\ref{sec: laminar k-server}, called \algRefSearch, is 
presented under the assumption that the laminar family $\slotsetfamily$ is known. With some care
this assumption can be removed, although at the cost of introducing distracting complications in the proof.



\section{\SlotHeteroPaging}%
\label{sec: slot hetero paging}



Any instance of \SlotHeteroPaging can be reduced to an instance of \GenerServer in uniform spaces, as follows.  Represent each cache slot
by a server in a uniform metric space whose points are the pages.
Each request $\requestpair{p}{\slotset}$ in the instance of \SlotHeteroPaging
is simulated by a long sequence $\rho_{p,\slotset}$ of requests in the instance of \GenerServer.
(Recall that a request in \GenerServer is given as a vector of points, one for each server.)
In $\rho_{p,\slotset}$, each request specifies point $p$ for each server in $\slotset$. 
For each other server, in $[k]\setminus \slotset$, we choose two arbitrary different
points in its space, and the requests in $\rho_{p,\slotset}$ alternate between these two points.  
If we make $\rho_{p,\slotset}$ sufficiently long, these alternating requests force
any competitive online algorithm serving
$\rho_{p,\slotset}$ to move one of the servers in $\slotset$ to $p$, which specifies the slot in $\slotset$ to use for serving request $\requestpair{p}{\slotset}$.
This reduction works both for deterministic and randomized algorithms.

Composing this reduction with the upper bounds from~\cite{DBLP:journals/talg/BansalEKN23}
yields immediate upper bounds of $O(k2^k)$ and $O(k^3\log k)$ on the
deterministic and randomized ratios for unrestricted \SlotHeteroPaging
(that is, with $\slotsetfamily = 2^{[k]}\setminus\braced{\emptyset}$).

Theorem~\ref{thm: slot hetero lower bound}\,(i) (Section~\ref{sec: deterministic slot hetero lower bounds}) 
and Theorem~\ref{thm: lower bound power set randomized} (Section~\ref{sec: randomized slot hetero lower bounds})
show that these bounds are tight within $\poly(k)$ factors: the optimal ratios are at least $\Omega(2^k/\sqrt{k})$ and $\Omega(k)$, respectively.
But restricting $\slotsetfamily$ allows better ratios: Theorem~\ref{thm: subset server upper bound}
(Section~\ref{sec: slot hetero upper bound}) shows an upper bound of $k^2 |\slotsetfamily|$
on the optimal deterministic ratio for any family $\slotsetfamily$.
For \OneOutOfmPaging{m}, Theorem~\ref{thm: subset server upper bound} and Theorem~\ref{thm: slot hetero lower bound}\,(ii)
imply that the optimal deterministic ratio is $O(k^{m+1})$ and $\Omega((4k/m)^{m/2}/\sqrt m)$.

\subsection{Upper bounds for deterministic  \SlotHeteroPaging}%
\label{sec: slot hetero upper bound}




This section gives upper bounds on the optimal deterministic competitive ratios for \SlotHeteroPaging
with any slot-set family $\slotsetfamily$,
as a function of $\sumofsizes(\slotsetfamily) = \sum_{\slotset\in\slotsetfamily}|\slotset|$
and $|\slotsetfamilyclosure|$,
where $\slotsetfamilyclosure = \bigcup_{\slotset\in \slotsetfamily}2^\slotset$.
These two quantities capture the ``size'' or the ``complexity'' of $\slotsetfamily$, although
in a different way, and their mutual relation depends on the structure of $\slotsetfamily$.
In general we have $\sumofsizes(\slotsetfamily) \le k| \slotsetfamily | \le k |\slotsetfamilyclosure|$.
But $|\slotsetfamilyclosure|$ could be as small as $O(\sumofsizes(\slotsetfamily)/k)$
(say, when $\slotsetfamily = 2^{[k]}\setminus \braced{\emptyset}$)
or it could be exponentially larger (say, when $\slotsetfamily = \braced{[k]}$).

In the theorem below, the bound using $\sumofsizes(\slotsetfamily)$ follows from an easy counting argument.
The proof of the second bound, in terms of $|\slotsetfamilyclosure|$,
uses the rank method of~\cite{DBLP:journals/talg/BansalEKN23}.
This method estimates the number of steps in one phase of a natural exhaustive-search
algorithm by the rank of a certain upper-triangular matrix.
Our argument is a natural refinement of this approach---in essence, the original
proof from~\cite{DBLP:journals/talg/BansalEKN23} directly applies to the case when
$\slotsetfamily$ is the power set of $[k]$, and we show how to
customize it to an arbitrary given set family $\slotsetfamily$.


\begin{theorem}\label{thm: subset server upper bound}
  Fix any $\slotsetfamily\subseteq 2^{[k]}\setminus\braced{\emptyset}$.
  The competitive ratio of Algorithm~{\algExhSearch} in Figure~\ref{fig: subset server upper bound}
  for \SlotHeteroPaging with requestable sets from $\slotsetfamily$
  is at most $k\cdot \min \braced{|\slotsetfamilyclosure|,\sumofsizes(\slotsetfamily)}$.
\end{theorem}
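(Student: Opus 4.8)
The plan is to analyze the natural exhaustive-search (work-function-style) algorithm \algExhSearch: maintain the set of cache configurations consistent with some optimal offline solution, and whenever a request arrives that the current configuration fails to satisfy, recompute a minimum-cost configuration sequence and move accordingly, charging the algorithm's retrievals against increases in the offline optimum. The competitive ratio then equals (up to the additive constant) the maximum number of "phases" the algorithm can pass through before \opt is forced to increase by one, and the core of the proof is to bound this phase count. Following the rank method of~\cite{DBLP:conf/soda/Bansal0KN18}, I would encode each phase by the configuration (or the relevant "signature" of which requestable sets are currently satisfied by which pages) that \algExhSearch holds at the start of that phase, arrange these signatures as rows of a matrix indexed so that later phases strictly refine earlier ones, argue the matrix is upper-triangular with nonzero diagonal, and conclude the phase count is at most the rank, which is at most the number of distinct relevant signatures.

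The two bounds in $\min\{|\slotsetfamilyclosure|, \sumofsizes(\slotsetfamily)\}$ come from two different (and essentially independent) ways of counting the relevant signatures, each then multiplied by a factor of $k$ coming from the fact that a single fault-triggered reconfiguration can cost up to $k$ retrievals (the whole cache may be rewritten) while raising \opt by only $1$. For the $k\cdot\sumofsizes(\slotsetfamily)$ bound I would use the simpler counting argument: attribute each phase transition to a pair $(\slotset,s)$ where $\slotset\in\slotsetfamily$ is the requested set and $s\in\slotset$ is the slot into which the algorithm newly places the requested page; show that within a block between consecutive \opt-increases, no pair $(\slotset,s)$ can be "used up" twice without \opt increasing, giving at most $\sum_{\slotset\in\slotsetfamily}|\slotset| = \sumofsizes(\slotsetfamily)$ such transitions, hence at most $\sumofsizes(\slotsetfamily)$ faults per unit of \opt and a ratio of $k\cdot\sumofsizes(\slotsetfamily)$. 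For the $k\cdot|\slotsetfamilyclosure|$ bound I would instead observe that what matters about a configuration is only, for each requested set $\slotset$, which subset $T\subseteq\slotset$ of its slots holds the requested page; every such $T$ lies in $\slotsetfamilyclosure=\bigcup_{\slotset}2^{\slotset}$, and the rank/upper-triangularity argument bounds the number of phases by $|\slotsetfamilyclosure|$ (the number of candidate "states" the exhaustive search can occupy before the offline optimum must rise).

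The main obstacle, and where the real work lies, is making the upper-triangularity argument rigorous for the $|\slotsetfamilyclosure|$ bound: one must define precisely the matrix whose rows correspond to phases and whose columns correspond to elements of $\slotsetfamilyclosure$ (or to configurations), specify the ordering of phases that makes the matrix triangular, and verify that each new phase genuinely introduces a new "direction" not spanned by the earlier ones — this is exactly the delicate step in the rank method, since it requires showing that the exhaustive-search algorithm never revisits (in the relevant projected sense) a state it has already left without \opt having strictly increased in between. The counting argument for the $\sumofsizes(\slotsetfamily)$ bound is comparatively routine; the $k$ factors in both bounds are straightforward amortization (bounding the cost of one reconfiguration by $k$). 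I would present the easy counting bound first as a warm-up and then develop the rank-based bound, reusing the phase/block structure set up in the first part.
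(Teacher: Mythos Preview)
Your high-level architecture is right (phase-based algorithm, bound the phase length $L$, multiply by $k$ for the per-fault reconfiguration cost, two independent bounds on $L$), and you correctly identify that the $|\slotsetfamilyclosure|$ bound uses a rank argument in the style of~\cite{DBLP:conf/soda/Bansal0KN18}. But the details you sketch for \emph{both} bounds diverge from the paper's proof, and in one case your version has a gap.

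\textbf{The $\sumofsizes(\slotsetfamily)$ bound.} Your proposed argument charges each fault to a pair $(S,s)$ with $s\in S$ the slot where the algorithm places the new page, and claims no pair repeats within a phase. That claim is not obviously true: the algorithm chooses a fresh configuration at each fault, so the slot it uses for a given request is not pinned down, and nothing prevents the same $(S,s)$ from appearing again with a different page. The paper's argument is both simpler and different. The final configuration $C_L$ of a phase satisfies \emph{all} requests in the phase, so for each $S\in\slotsetfamily$ the distinct pages requested with slot-set $S$ must all occupy distinct slots of $S$ in $C_L$; hence at most $|S|$ of them. Combined with the observation that all requests in a phase are distinct (immediate from the defining property that $C_{t-1}$ satisfies $\sigma_1,\ldots,\sigma_{t-1}$ but not $\sigma_t$), this gives $L\le\sum_{S}|S|=\sumofsizes(\slotsetfamily)$ directly.

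\textbf{The $|\slotsetfamilyclosure|$ bound.} You describe the matrix as having rows indexed by faults and columns by elements of $\slotsetfamilyclosure$, and then speak of it being upper-triangular; but a non-square matrix cannot be upper-triangular. The paper's matrix $M$ is $L\times L$, with rows indexed by configurations $C_{s-1}$ and columns by requests $\sigma_t$, and entry $M_{st}=\prod_{i\in S_t}(C_{s-1}^i-p_t)$, which is zero iff $C_{s-1}$ satisfies $\sigma_t$. The property (UT) above makes $M$ upper-triangular with nonzero diagonal, so $\mathrm{rank}(M)=L$. Then $M$ is shown to \emph{factor} as $M=AB$ with $A\in\R^{L\times\slotsetfamilyclosure}$ and $B\in\R^{\slotsetfamilyclosure\times L}$ (by expanding the product over $i\in S_t$ into a sum over subsets $S\subseteq S_t$), forcing $L\le|\slotsetfamilyclosure|$. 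So $\slotsetfamilyclosure$ enters as the \emph{inner dimension of a factorization}, not as the column index set of the triangular matrix.

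Finally, you flag the upper-triangularity as the ``delicate step'' requiring real work. In the paper it is the opposite: the algorithm is not work-function-style but simply picks, at each fault, \emph{any} configuration satisfying all requests so far in the phase; property (UT) then holds by definition, and the only nontrivial step is the algebraic factorization of $M$.
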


\begin{figure}[t]
  \begin{mdframed}[userdefinedwidth=\linewidth]\hspace*{-0.02\linewidth}
    \begin{minipage}{\linewidth}
      \begin{steps}
      \item[] {\bf input:} \SlotHeteroPaging instance $(k, \slotsetfamily, \slotrequestsequence=(\slotrequest_1, \ldots, \slotrequest_T))$
        \step let the initial cache configuration $C_0$ be arbitrary;
        let $\ell \gets 1$ \algcomment{--- $\ell$ is the start of the current phase}
        \step for each time $t \gets 1,2,\ldots, T$:   
          \begin{steps}
            \step if current configuration $C_{t-1}$ satisfies request $\slotrequest_t$:
            ignore the request (set $C_{t} = C_{t-1}$)
            \step else:
              \begin{steps}
	        \step if any configuration satisfies all requests
                $\slotrequest_{\ell}, \slotrequest_{\ell+1}, \ldots, \slotrequest_{t}$: let $C_t$ be any such configuration
                \step else: let $\ell \gets t$; let $C_t$ be any configuration satisfying $\slotrequest_t$
                \algcomment{--- start the next phase}
              \end{steps}
          \end{steps}
      \end{steps}
    \end{minipage}
  \end{mdframed}
  \caption{Online algorithm~\algExhSearch for \SlotHeteroPaging.}\label{fig: subset server upper bound}
\end{figure}


The theorem implies that the competitive ratio of \OneOutOfmPaging{m} is polynomial in $k$ when $m$ is constant.

\begin{proof}[Proof of Theorem~\ref{thm: subset server upper bound}.]
  Assume without loss of generality that the algorithm faults in each step $t$,
  that is $C_{t-1}$ does not satisfy $\slotrequest_t = \requestpair{p_t}{\slotset_t}$.
  (Otherwise first remove such requests; this doesn't change the algorithm's cost or increase the optimal cost.)

  We first bound the maximum length of any phase.
  The argument is the same for each phase.
  To ease notation assume the phase is the first (with $\ell =1$).
  Let $L$ be the length of the phase.
  By the initial assumption, the following holds:
  \begin{description}
  \item{(UT)} \emph{For each time $t\in [L]$, configuration $C_{t-1}$ satisfies requests $\slotrequest_1, \slotrequest_2, \ldots, \slotrequest_{t-1}$,
    but not $\slotrequest_t$.}
  \end{description}

The final configuration $C_L$ in the phase satisfies all requests in the phase.  
In particular, for each given set $\slotset\in\slotsetfamily$, if the phase has a
request $\slotrequestpair{p}{\slotset}$ for some page $p$, then $C_L$ has $p$ in some slot in $\slotset$.
Associate request $\slotrequestpair{p}{\slotset}$ with this slot. Then different requests
involving $\slotset$ will be associated with different slots.
Therefore, \emph{(i) there are at most
  $|\slotset|$ distinct requests in the phase that use any given set
  $\slotset\in\slotsetfamily$}.  
  Property (UT) implies that \emph{(ii)
  every request $\slotrequest_t$ in this phase is distinct} (indeed,
for any $t' < t$, $C_{t-1}$ satisfies $\slotrequest_{t'}$ but not
$\slotrequest_t$).  
Observations (i) and (ii) imply the  bound
$ L \le \sum_{\slotset\in\slotsetfamily} |\slotset| = \sumofsizes(\slotsetfamily)$.

  (As an aside, the above argument uses only that every request in the phase is distinct, a weaker condition than~(UT).
Given only that property, the above bound on $L$
is tight for every $\slotsetfamily$ in the following sense:
  consider any configuration $C$ that puts a distinct page in each slot $s\in[k]$,
  and a request sequence $\slotrequest$ that requests in any order every pair $\requestpair{p}{\slotset}$
  such that $\slotset\in\slotsetfamily$ and $C$ assigns $p$ to a slot in $\slotset$.
  Then $\slotrequest$ is satisfied by the single configuration $C$,
  while having $\sumofsizes(\slotsetfamily)$ distinct requests.)
  
  \smallskip

  Next we give a second bound on $L$ that is tighter for some families $\slotsetfamily$. 
  Identify each page $p$ with a distinct but arbitrary real number\footnote{%
  We use real numbers for convenience. The proof works for elements from any sufficiently large field.
  }.
  For each cache configuration $C_t$,
   let $C_t^i\in\R$ denote (the real number associated with) the page in slot $i$, if any, else $0$.   
   Define matrix $M \in \R^{L\times L}$ by
  %
  \begin{equation*}
    M_{st} \;=\; \prod_{i\in \slotset_t} (C_{s-1}^i - p_t),
  \end{equation*}
  %
  so that $M_{st} = 0$ if and only if $C_{s-1}$ satisfies  $\slotrequest_t = \slotrequestpair{p_t}{\slotset_t}$,
  that is $C_{s-1}^i = p_t$ for some $i\in \slotset_t$.
  So Property~(UT) implies that $M$ is upper-triangular and non-zero on the diagonal.  
  Thus $M$ has rank $L$.
	
  Expanding the formula for $M_{st}$, we obtain
  \begin{align*}
    M_{st}
    \;=\; \sum_{S\subseteq \slotset_t} \Big(\prod_{i\in S} C_{s-1}^i\Big) \mytimes \Big(\prod_{i\in \slotset_t\setminus S} -p_t\Big)
    \;=\; \sum_{S\subseteq \slotset_t} \Big(\prod_{i\in S} C_{s-1}^i\Big) \mytimes (-p_t)^{|\slotset_t| - |S|}
    \;=\; \sum_{S\in\slotsetfamilyclosure} A_{sS}\mytimes B_{St} \,,
  \end{align*}
  where matrices $A\in \R^{L\times \slotsetfamilyclosure}$  and $B\in \R^{\slotsetfamilyclosure\times L}$ are  defined by
  \begin{equation*}
    A_{sS} \;=\; \prod_{i\in S} C_{s-1}^i
    \quad\textrm{and}\quad
    B_{St} \;=\; \begin{cases}
      (-p_t)^{|\slotset_t| - |S|} & \quad\textrm{if}\; S\subseteq \slotset_t
      \\
      0				& \quad \textrm{otherwise}.
    \end{cases}
  \end{equation*}

  That is, $M=A B$; $A$ and $B$ (and $M$) have rank at most $|\slotsetfamilyclosure|$.
  And $M$ has rank $L$, so $L \le |\slotsetfamilyclosure|$.
  \eat{
    \begin{equation}
    L \;\le\; |\slotsetfamilyclosure|.
    \label{eqn: second bound on L}
  \end{equation}
  }
  To bound the optimum cost, consider any phase other than the last. Let $t'$ and $t''$ be the start and end times.
  Suppose for contradiction that the optimal solution incurs no cost (has no retrievals) during $[t'+1, t''+1]$.
  Then its configuration at time $t'$ satisfies all requests in $[t', t''+1]$,
  contradicting the algorithm's condition for terminating the phase.
  So the optimal solution pays at least 1 per phase (other than the last).
  In any phase of length $L$ the algorithm pays at most $kL$ (at most $k$ per step).
  This and the two upper bounds on $L$
  imply Theorem~\ref{thm: subset server upper bound}.
\end{proof}


\subsection{Lower bounds for deterministic \SlotHeteroPaging}%
\label{sec: deterministic slot hetero lower bounds}




We establish our lower bounds for \SlotHeteroPaging and \OneOutOfmPaging{m} given in Table~\ref{table: results}.

\begin{theorem}\label{thm: slot hetero lower bound}~{} 
  (i)\ For all odd $k$, the optimal deterministic ratio for \OneOutOfmPaging m with $m=(k+1)/2$
  is at least $\binom k m = \Omega(2^k/\sqrt{k})$.  For all $k$,
  the optimal ratio  with $m=\lfloor (k+1)/2\rfloor$  is $\Omega(2^k/\sqrt{k})$.
  (ii)\ For any even $m\ge 2$ and any $k > m$ that is an odd multiple
  of $m-1$, the optimal deterministic ratio for \OneOutOfmPaging{m} is at least
  $\binom{m-1}{m/2} \big( \frac{k}{m-1} \big)^{m/2} =
  \Theta((4k/m)^{m/2}/\sqrt m) = \Omega(\sqrt{|\slotsetfamily|})$,  where $\slotsetfamily =\binom{[k]}{m}$.
\end{theorem}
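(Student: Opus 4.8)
The plan is to give, for every deterministic online algorithm, an adaptively constructed request sequence on which the algorithm faults at every step while some offline solution pays only a constant per ``phase,'' where each phase consists of $\binom km$ requests; repeating phases without bound then forces the competitive ratio up to $\binom km$. It is convenient to use only a two-page alphabet $\{a,b\}$, so a configuration is determined by the set $X\subseteq[k]$ of slots holding $b$ (the remaining slots holding $a$), a request $\langle b,S\rangle$ is satisfied iff $S\cap X\neq\emptyset$, and $\langle a,S\rangle$ is satisfied iff $S\not\subseteq X$. The first, easy, point is that the adversary can always force a fault: since $2m=k+1$ we have $k-m=m-1$, so any configuration $X$ satisfies $|X|\le m-1$ or $|X|\ge m$; in the first case some $m$-set $S\subseteq[k]\setminus X$ makes $\langle b,S\rangle$ a fault, and in the second some $m$-set $S\subseteq X$ makes $\langle a,S\rangle$ a fault. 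The whole difficulty is to fault the algorithm in this way while keeping a coherent cheap offline solution.

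The offline solution should sit throughout phase $j$ at one configuration $X^{(j)}$ that satisfies \emph{every} request of that phase, moving only at phase boundaries. To guarantee that such an $X^{(j)}$ exists even against an adaptive algorithm, the adversary does not commit to $X^{(j)}$ in advance: it maintains a family $\mathcal F\subseteq 2^{[k]}$ of candidate offline configurations, all satisfying the requests issued so far this phase, and whenever it faults the algorithm it chooses the faulting request so as to delete as few members of $\mathcal F$ as possible. The combinatorial heart of the argument — the ``set family not previously studied'' — is the design of $\mathcal F$: I want $|\mathcal F|=\binom km$, I want every faulting move the adversary is ever forced to make to delete only $O(1)$ members of $\mathcal F$, and I want a surviving member at the end of a phase, with distinct phases' survivors chainable so that consecutive offline configurations differ in only $O(1)$ slots. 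A natural starting point is $\mathcal F=\binom{[k]}{m-1}$: it has size $\binom{k}{m-1}=\binom km$ (using $k-(m-1)=m$), every forced $b$-request deletes exactly one member (namely $[k]\setminus S$) and every $a$-request deletes none; the remaining work is to show the adversary can always fault the algorithm using a request that is \emph{currently} forced to delete at most one member and never empties $\mathcal F$ prematurely, and to set up the inter-phase chaining, which is where a more carefully tailored family is needed. I expect this design-and-verification of $\mathcal F$ to be the main obstacle in part (i).

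Granting such a family, in each phase the algorithm faults $\binom km$ times (at least one retrieval each) while the offline solution pays $O(1)$ (with care, exactly one retrieval), so running unboundedly many phases shows the ratio is at least $\binom km=\Omega(2^k/\sqrt k)$. For even $k$, where $m=\lfloor(k+1)/2\rfloor=k/2$, I would freeze one slot (fill it once with a page requested nowhere else) and invoke the odd-$k$ construction on the remaining $k-1$ slots, or simply rerun the argument directly with $m=k/2$; either way the bound is $\Omega(2^k/\sqrt k)$.

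For part (ii), write $k=(2\ell+1)(m-1)$ and partition $[k]$ into $g=2\ell+1$ groups of size $m-1$ (odd, since $m$ is even). I would run the construction as an essentially independent product over $m/2$ coordinates: in each coordinate the adversary plays a paging-like game that forces $\Theta(g)=\Theta(k/(m-1))$ expensive offline moves, with an additional within-group factor $\binom{m-1}{m/2}$ (the central binomial coefficient of $m-1$) from the choices of the relevant half of a group, and the requestable $m$-sets are formed by combining the per-coordinate selections so that each single request simultaneously constrains all coordinates — which is what prevents the online algorithm from amortizing progress across coordinates. Since the lower bounds of independent coordinates multiply, the ratio is at least $\binom{m-1}{m/2}\,(k/(m-1))^{m/2}$, and a Stirling estimate rewrites this as $\Theta((4k/m)^{m/2}/\sqrt m)=\Omega(\sqrt{|\slotsetfamily|})$ for $\slotsetfamily=\binom{[k]}{m}$. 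The main obstacle here is making the product faithful: arranging the $m/2$ coordinates and the $m$-sets so that the coordinates interact only through the shared cache-capacity constraint, so that the per-coordinate bounds genuinely multiply.
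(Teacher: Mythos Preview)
Your setup for Part~(i) is correct and you have essentially found the right family: with two pages and configurations encoded by $X\subseteq[k]$, the family $\mathcal F=\binom{[k]}{m-1}$ has exactly $\binom km$ members, and every request the adversary is forced to issue faults at most one member of $\mathcal F$. But the phase-based packaging is a detour. The ``inter-phase chaining'' obstacle you identify is self-inflicted: you do not need a single offline solution that hops between survivors. Instead, treat each $Z\in\mathcal F$ as a separate \emph{static} offline strategy that stays at $C_Z$ forever, paying $2$ whenever it faults. Since each adversarial request faults at most one strategy, after $K$ requests the $|\mathcal F|$ strategies together pay at most $2K$, so the best of them pays at most $2K/|\mathcal F|$, giving ratio $\ge|\mathcal F|/2=\tfrac12\binom km$. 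This is the paper's approach (its Lemma~\ref{lem: lower bound via eluders}); the paper then doubles the count of strategies by adding the complements $\binom{[k]}{m}$ and shows that still only one strategy faults per step, recovering the full $\binom km$. Your candidate family is exactly the complement half of theirs.

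For Part~(ii) there is a genuine gap. First, your partition is transposed relative to what works: the paper partitions $[k]$ into $m-1$ \emph{odd-length cycles} of size $\ell=k/(m-1)$, not into $k/(m-1)$ groups of size $m-1$. More importantly, the ``product over $m/2$ independent coordinates'' idea is not substantiated, and lower bounds for online problems do not multiply across coordinates without a concrete mechanism. The paper's argument is not a product: it builds a single family $\mathcal G$ whose sets are unions of $m/2$ edges, one edge from each of $m/2$ distinct cycles (giving $|\mathcal G|=\binom{m-1}{m/2}\ell^{m/2}$), and a matching family $\mathcal Z$ of configurations, then verifies the same averaging lemma. The odd cycle length is what guarantees that every $2$-coloring of a cycle contains a monochromatic edge, which is how property~(gz0) is established. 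Your proposal does not supply any analogue of this construction or of the verification that each forced request eliminates at most one offline strategy.
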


  Before proving Theorem~\ref{thm: slot hetero lower bound}
  we prove Lemma~\ref{lem: lower bound via eluders}, 
which establishes a general lower bound on the competitive ratio for \SlotHeteroPaging for some
requestable set families $\slotsetfamily$. Namely, if, for a given $\slotsetfamily$, we can
identify a family $\calZ \subseteq 2^{[k]}$ with certain properties then no online algorithm
can have competitive ratio smaller than $|\calZ|$.  
The proof of Theorem~\ref{thm: slot hetero lower bound} then constructs such families $\calZ$ for appropriate families
$\slotsetfamily$ of requestable sets.

Throughout this section $\complX$ denotes the complement of set $X \subseteq [k]$, that is
$\complX = [k] \setminus X$.


\begin{lemma}\label{lem: lower bound via eluders}
  For some $\slotsetfamily\subseteq 2^{[k]}$,
  suppose there are two set families $\calG\subseteq\slotsetfamily$ and $\calZ\subseteq 2^{[k]}$   such that
  \vspace{-0.075in}
  \begin{description} \setlength{\itemsep}{-0.04in}
  \item{(gz0)}
    For each $X\subseteq [k]$ there is $\slotset\in\calG$
    such that $\slotset\subseteq X$ or  $\slotset\subseteq \complX$.
    %
  \item{(gz1)} If $Z\in\calZ$ then $\complZ\notin\calZ$.
  \item{(gz2)}
    For each $\slotset\in\calG$ there is  $Y\in\calZ$
    such that $\slotset\not\subseteq Z$  and $\slotset\not\subseteq \complZ$ for all $Z\in \calZ \setminus \braced{Y}$.       
  \end{description}
  \vspace{-0.075in}
  Then the optimal deterministic ratio for \SlotHeteroPaging with family $\slotsetfamily$    is at least $|\calZ|$.
\end{lemma}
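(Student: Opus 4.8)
The plan is to run an adversary argument against an arbitrary deterministic online algorithm $\algA$, using a cache of size $k$ with exactly $|\calZ|+1$ relevant pages. Label the pages $q_Z$ for $Z\in\calZ$ together with one extra page $q_0$. The adversary will maintain the invariant that, at all times, for each $Z\in\calZ$ the page $q_Z$ "wants" to sit in the slots of $Z$, and $q_0$ wants to sit in $\compl{\bigcup}$ of something; more precisely, the adversary's own reference configuration always places $q_Z$ in every slot of some canonical copy and $q_0$ in the complementary slots. The key point is that condition (gz1) ($Z\in\calZ \Rightarrow \complZ\notin\calZ$) lets these $|\calZ|$ "territorial" pages coexist: their associated sets can be chosen to be pairwise "compatible" in the sense needed for a single configuration to host all of them, because no two of them are forced onto complementary slot-sets.

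The forcing mechanism is where (gz0) and (gz2) come in. At each step the adversary looks at $\algA$'s current configuration $C_{t-1}$ and picks, via (gz0), a set $\slotset\in\calG$ that is contained either in the set $X$ of slots currently holding "the wrong thing" or in its complement $\complX$; this guarantees that issuing a request of the form $\requestpair{q}{\slotset}$ for a suitable page $q$ forces $\algA$ to fault, i.e.\ the adversary can always charge $\algA$ one retrieval per step. Meanwhile (gz2) ensures that the adversary, holding the reference configuration that is "almost right everywhere," can satisfy the same request by a single move: for the chosen $\slotset\in\calG$ there is a unique $Y\in\calZ$ such that $\slotset\not\subseteq Z$ and $\slotset\not\subseteq\complZ$ for every other $Z\in\calZ\setminus\{Y\}$, so only the territory of $q_Y$ (and the complementary region of $q_0$) needs adjusting; all other territorial pages $q_Z$, $Z\ne Y$, already intersect $\slotset$ and hence can be the page used to satisfy the request without any move. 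The adversary thus pays at most $1$ each time the "active" index $Y$ changes (and $0$ otherwise), so over a long request sequence the adversary's cost is bounded by (number of changes of $Y$) $+O(1)$ while $\algA$'s cost is at least (number of steps). Choosing the request sequence so that $Y$ is forced to cycle, the ratio $\cost_{\algA}/\opt$ approaches $|\calZ|$: intuitively, the adversary parks the page configuration on one "corner" indexed by some $Y$, forces $\algA$ to fault repeatedly there, and only occasionally (at most once per full rotation through the $|\calZ|$ possibilities) pays to move.

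The main obstacle will be making the bookkeeping of the adversary's reference configuration precise and verifying that it is always a legal single configuration satisfying the issued request at cost matching the count of changes of the active index — that is, turning the three combinatorial conditions into a clean potential/charging argument that yields exactly the factor $|\calZ|$ rather than something off by a constant. In particular one must argue that when (gz0) selects $\slotset\subseteq\complX$ (rather than $\slotset\subseteq X$) the request can still be set up to force a fault, and that the "$q_Z$ for $Z\ne Y$ already works" claim of the previous paragraph is literally the statement $\slotset\not\subseteq\complZ$ (so $\slotset$ meets $Z$) from (gz2); pinning down that the adversary can phrase every request as "$\requestpair{q}{\slotset}$ where $q$ is whichever territorial page currently occupies a slot of $\slotset$ in the reference configuration, unless no such page does, in which case $q$ is the new active page $q_Y$" is the crux. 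Once that is set up, a standard averaging over a sufficiently long periodic sequence gives the lower bound $|\calZ|$ on the competitive ratio.
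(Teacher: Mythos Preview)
Your plan has a genuine structural gap. You propose $|\calZ|+1$ pages and a single adversary ``reference configuration'' that places each $q_Z$ in the slots of its set $Z$, but the sets in $\calZ$ typically overlap heavily (in the application to Theorem~\ref{thm: slot hetero lower bound}(i), $\calZ=\binom{[k]}{(k+1)/2}$), so no single configuration can host all those pages in their territories. Condition~(gz1) only forbids $Z$ and $\complZ$ from both lying in $\calZ$; it says nothing about disjointness and does not make the territories ``compatible'' in your sense. There is a second, related problem: property~(gz0) is a statement about an arbitrary bipartition $(X,\complX)$ of $[k]$, tailored to a two-page setup in which $X$ is the set of slots holding one page. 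With $|\calZ|+1$ pages there is no natural bipartition of $\algA$'s cache to feed into~(gz0), so the ``always force a fault'' step does not go through as written.

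The paper's proof uses only two pages, $p_0$ and $p_1$; a configuration is then determined by the set $X\subseteq[k]$ of slots holding $p_0$. At each step~(gz0) yields some $S\in\calG$ with $S\subseteq X$ or $S\subseteq\complX$, and the adversary requests the other page in $S$, forcing a fault. For the optimum bound the paper does \emph{not} maintain one reference configuration; instead it considers $2|\calZ|$ static configurations $C_Z$, one for each $Z\in\calZ\cup\complements{\calZ}$ (these are distinct by~(gz1)). Property~(gz2) then says that each adversary request is missed by at most one of these $2|\calZ|$ strategies, so their average cost over $K$ steps is at most $K/|\calZ|$, yielding the ratio~$|\calZ|$. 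The averaging over many fixed strategies---rather than a single evolving one---is the missing ingredient in your outline.
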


\begin{proof}
  The proof is an adversary argument based on the following idea.
  We consider a suitably chosen fixed set of $2|\calZ|$ solutions. 
  At each step, the adversary chooses a request, using one of the sets in $\calG$, which forces the algorithm to fault
  but among our chosen solutions only at most two will fault. 
   At the end, the algorithm's total cost
  is at least $|\calZ|$ times the average cost of these chosen solutions, so its competitive ratio is at least $|\calZ|$.
  This general approach is common for lower bounds on deterministic online algorithms
  (see e.g.\ lower bounds on the optimal ratios for \kServer~\cite{DBLP:journals/jal/ManasseMS90},
  for Metrical Task Systems~\cite{DBLP:journals/jacm/BorodinLS92}
  and for \GenerServer on uniform metrics~\cite{DBLP:journals/tcs/KoutsoupiasT04}).
  
  Here are the details. Let $\algA$ be any deterministic online algorithm for \SlotHeteroPaging with slot-set family $\slotsetfamily$.
  The adversary will request just two pages, $p_0$ and $p_1$. 
  For a set $X\subseteq [k]$, let $Q_X$ denote the cache configuration where
  the slots in $X$ contain $p_0$ and the slots in $\complX$ contain $p_1$.
  Without loss of generality assume that each slot of $\algA$'s cache
  always holds $p_0$ or $p_1$---its cache configuration is $Q_X$ for some $X$.

  At each step, if the current configuration of $\algA$ is $Q_X$,
  the adversary chooses $\slotset\in\calG$ such that either $\slotset\subseteq X$ or $\slotset\subseteq \complX$.
  (Such an $\slotset$ exists by Property~(gz0).)  If $\slotset\subseteq X$, then all slots in $\slotset$ hold $p_0$,
  and the adversary requests $\requestpair{p_1}{\slotset}$, causing a fault.
  Otherwise, $\slotset\subseteq \complX$, so all slots in $\slotset$ hold $p_1$.
  In this case the adversary requests $\requestpair{p_0}{\slotset}$, causing a fault.
  The adversary repeats this $K$ times, where $K$ is arbitrarily large.
  Since  $\algA$ faults at each step, the overall cost of $\algA$ is at least $K$.

  It remains to bound the optimal cost.  Let $\complements{\calZ} = \braced{\complZ\suchthat Z\in \calZ}$.
  By~(gz1), we have $\complements{\calZ} \cap \calZ = \emptyset$.
  For each $Z\in\calZ\cup \complements{\calZ}$ define a solution
  called the \emph{$Z$-strategy}, as follows.  The solution starts in configuration $Q_Z$. 
  It stays in $Q_Z$ for the whole computation, except that on requests $\requestpair{p_a}{\slotset}$
  that are not served by $Q_Z$ 
  (that is, when in configuration $Q_Z$ all slots of $\slotset$ contain $p_{1-a}$), 
  it retrieves $p_a$ to any slot $j\in \slotset$, serves the request,
  then retrieves $p_{1-a}$ back into slot $j$,  restoring configuration $Q_Z$.  This costs 2.

  We next observe that in each step at most one $Z$-strategy faults (and pays $2$).
  Assume that the request at a given step is to $p_0$  (the case of a request to $p_1$ is symmetric).
  Let this request be $\requestpair{p_0}{\slotset}$, where $\slotset\in\calG$.  Let $Y\subseteq [k]$ be the set from Property~(gz2).
 Then, for all $Z\in (\calZ \cup \complements{\calZ})\setminus\smallbraced{Y,\complY}$
  we have $\slotset\cap Z \neq\emptyset$,  
  implying that configuration $Q_Z$ has a slot in $\slotset$
  that contains $p_0$---in other words, configuration $Q_Z$ satisfies $\slotset$. 
  Also, either  $\slotset\cap Y \neq\emptyset$ or $\slotset\cap \complY \neq\emptyset$,
  so one of the configurations $Q_{Y}$ or $Q_{\complY}$ also satisfies $\slotset$.
  Therefore only one $Z$-strategy ($Y$ or $\complY$)  might not satisfy $\slotset$.
  So, in each step, at most one $Z$-strategy faults (and pays $2$).

  Thus the combined total cost for all $Z$-strategies (not counting the cost of at most $k$ for moving to $Z$ at the beginning)    is at most $2K$.
  There are $2|\calZ|$ such strategies, so their average cost is at most $(2K+k)/2|\calZ|$.
  The cost of $\algA$ is at least $K$,    so the ratio is at least 
	  $\frac{K}{(2K+k)/2|\calZ|} = \frac{|\calZ|}{1 + k/2K}$.
  Taking $K$ arbitrarily large, the lemma follows.
\end{proof}


\begin{proof}[Proof of Theorem~\ref{thm: slot hetero lower bound}]
  \emph{Part (i).}
  Recall that $m=\lfloor (k+1)/2\rfloor$. First consider the case when $k$ is odd. Apply Lemma~\ref{lem: lower bound via eluders},
  taking both $\calG$ and $\calZ$ to be $\binom{[k]}{m}$.
  Properties~(gz0) and~(gz1) follow directly from $k$ being odd and the definitions of $\calG$ and $\calZ$.
  Property~(gz2) also holds with $Y=\slotset$. (For any $\slotset\in\calG$, every $Z\in\calZ$
  satisfies $|Z| = |\slotset| > |\complZ|$, so $\slotset\not\subseteq \complZ$,
  while $\slotset\subseteq Z$ implies $Z=\slotset$.)  Thus, by Lemma~\ref{lem: lower bound via eluders}, the ratio is at least
  $|\calZ| = \binom{k}{(k+1)/2} = \Omega(2^k/\sqrt{k})$.  This proves Part~(i) for odd $k$.
  
  For even $k$, let $k' = k-1$. Then apply Part~(i) to $k'$ using just cache slots in $[k']$,
  that is, using slot-set family $\slotsetfamily' = \binom{[k']} m \subseteq \binom {[k]} m = \slotsetfamily$,
  with slot $k$ playing no role as it is never requested.
  This proves Part (i).
  
  \smallskip
  \emph{Part (ii).}
  Fix such an $m$ and $k$.  Let $\ell = k/(m-1)$ so $\ell\ge 3$ is odd.
  Recall that $\slotsetfamily = \binom{[k]} m$ is the family of requestable slot sets.
  Partition $[k]$ arbitrarily into $m-1$ disjoint subsets $B^1,B^2,\ldots,B^{m-1}$, each of cardinality $\ell$.
  For each $B^e$, order its slots arbitrarily as $B^e = \braced{b^e_1,b^e_2,\ldots,b^e_\ell}$.
  For any index $c\in \braced{1,2,\ldots,\ell}$ and an integer $i$,
  let $c\oplus i$ denote $((c+i-1)\bmod\ell) + 1$.
  In other words, we view each $B^e$ as an odd-length cycle,
  and this cyclic structure is important in the proof.
  Any consecutive pair $\smallbraced{b^e_c,b^e_{c\oplus 1}}$ of slots on this cycle
  is called an \emph{edge}. Thus each cycle $B^e$ has $\ell$ edges.

  First we define $\calG \subseteq \slotsetfamily$ for Lemma~\ref{lem: lower bound via eluders}.
  The sets $\slotset$ in $\calG$ are those obtainable as follows:
  choose any $m/2$ edges, no two from the same cycle,
  then let $S$ contain the $m$ slots in those $m/2$ chosen edges.
  (The six slots inside the three dashed ovals in Figure~\ref{fig: one-out-of-m lower bound}
  show one $\slotset$ in $\calG$.)  This set of $m/2$ edges uniquely determines $\slotset$, and vice versa.

  We verify that $\calG$ has Property~(gz0) from Lemma~\ref{lem: lower bound via eluders}.
  Indeed, consider any $X\subseteq [k]$.
  Call the slots in $X$ \emph{white} and the slots in $\complX$ \emph{black}.
  Each cycle $B^e$ has odd length, so has an edge $\{b^e_c, b^e_{c\oplus 1}\}$
  that is white (with two white slots) or black (with two black slots).
  So either (i) at least half the cycles have a white edge, or (ii) at least half have a black edge.
  Consider the first case (the other is symmetric).
  There are $m-1$ cycles, and $m$ is even, so at least $m/2$ cycles have a white edge.
  So there are $m/2$ white edges with no two in the same cycle.
  The set $\slotset$ comprised of the $m$ white slots from those edges
  is in $\calG$, and is contained in $X$ (because its slots are white).  So $\calG$ has Property~(gz0).
  

  \begin{figure}[t]
    \begin{center}
      \includegraphics[width = 6.2in]{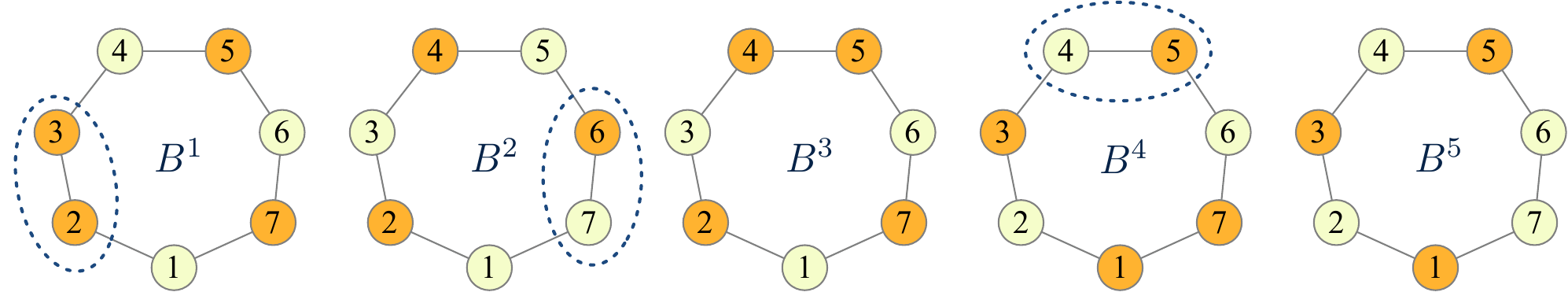}
    \end{center}
    \caption{Illustration of the proof of Theorem~\ref{thm: slot hetero lower bound} Part (ii) for $k = 35$, $m = 6$, and $\ell = 7$.
      The figure shows the partition of all slots into $m-1 = 5$ sets $B^1, \ldots, B^5$, each represented by a cycle.
      To avoid clutter, each slot $b^e_c$ is represented by its index $c$ within $B^e$.
      The picture shows set $\slotset=\{b^1_2, b^1_3, b^2_6, b^2_7, b^4_4, b^4_5\}\in\calG$, marked by dashed ovals.
      It also shows $Z_{\slotset'}\in\calZ$, represented by orange/shaded circles, for $\slotset'=\{b^1_2, b^1_3, b^3_4, b^3_5, b^4_7, b^4_1\}$.
    }\label{fig: one-out-of-m lower bound}
  \end{figure}


  Next we define $\calZ \subseteq 2^{[k]}$ for Lemma~\ref{lem: lower bound via eluders}.
  The set $\calZ$ contains,  for each set $\slotset'\in \calG$,  one set $Z_{\slotset'}$, defined as follows.
  For each of the $m/2$ cycles $B^e$ having an edge $\{b^e_c, b^e_{c\oplus 1}\}$ in $\slotset'$,
  add to $Z_{\slotset'}$ the two slots on that edge, together with the $(\ell-3)/2$ slots
  $b^e_{c\oplus 3}, b^e_{c\oplus 5}, \ldots, b^e_{c \oplus (\ell-2)}$.
  For each of the $m/2-1$ remaining cycles $B^e$,  add to $Z_{\slotset'}$ the $(\ell-1)/2$ slots  $b^e_{1}, b^e_{3}, \ldots, b^e_{\ell-2}$.
  (The orange/shaded slots in Figure~\ref{fig: one-out-of-m lower bound} show one set $Z_{\slotset'}$ in $\calZ$.)
  Then $Z_{\slotset'}$ contains exactly $m/2$ edges (the ones in $\slotset'$)
  while its complement $\complZ_{\slotset'}$ contains exactly $m/2-1$ edges (one from each cycle with no edge in $\slotset'$).
  This implies Property~(gz1).  Note that $Z_{\slotset'} \neq Z_{\slotset''}$ for different sets $S',S''\in \calG$.

  Next we show Property~(gz2).  Given any set $\slotset\in \calG$, let $Y = Z_{\slotset} \in \calZ$.
  Consider any $Z_{\slotset'} \in \calZ$
  such that $\slotset \subseteq Z_{\slotset'}$ or $\slotset\subseteq \complZ_{\slotset'}$.
  We need to show $Z_{\slotset'} = Z_{\slotset}$, i.e., $\slotset'=\slotset$.
  It cannot be that $\slotset\subseteq \complZ_{\slotset'}$, because
  $\slotset$ contains $m/2$ edges,  whereas $\complZ_{\slotset'}$ contains $m/2-1$ edges.
  So $\slotset\subseteq Z_{\slotset'}$.  But $\slotset$ and $Z_{\slotset'}$ each contain exactly $m/2$ edges,
  which therefore must be the same.  It follows from the definition of $Z_{\slotset'}$
  that $\slotset'=\slotset$.  So Property~(gz2) holds.

  So $\calG$ and $\calZ$ have Properties~(gz0)-(gz2) from Lemma~\ref{lem: lower bound via eluders}.
 Directly from definition we have \(|\calZ| = |\calG|\),  while \(|\calG|= \binom{m-1}{m/2} \ell^{m/2}\)
  because there are $\binom{m-1}{m/2}$ ways to choose $m/2$ distinct cycles,
  and then for each of these $m/2$ cycles there are $\ell$ ways to choose one edge. 
  Lemma~\ref{lem: lower bound via eluders} and $\ell=k/(m-1)$
  imply that the optimal deterministic ratio is at least $f(m, k) = \binom{m-1}{m/2} (k/(m-1))^{m/2}$.  To complete the proof of part~(ii) we lower-bound $f(m,k)$. We observe that
  \begin{equation}
    4^m
    \;=\;
    \Omega(\sqrt m\, (k/(k-m))^{k-m+1/2}).
    \label{eqn: aux claim for one-of-m paging}
  \end{equation}
  This can be verified by considering two cases:
  If $k\ge m+2$ then, using $1+z\le \ee^z$, we have
  $\sqrt m \, (k/(k-m))^{k-m+1/2}
        = \sqrt m (1+m/(k-m))^{k-m+1/2} 
        \le \sqrt m\cdot \ee^{5m/4}
        \le 2\cdot 4^m$, for all $m\ge 1$.
  In the remaining case, for $k = m+1$, we have  $ \sqrt m (k/(k-m))^{k-m+1/2}
  = \sqrt m (1+m)^{3/2}  \le  2\cdot 4^m$.
  Thus~(\ref{eqn: aux claim for one-of-m paging}) indeed holds.  Now, recalling that $f(m, k) = \binom{m-1}{m/2} (k/(m-1))^{m/2}$, we derive
  \begin{align}
    f(m, k) &{} \;=\;\Theta \big ((2^m/\sqrt m) \cdot (k/(m-1))^{m/2}\big) && (\textit{Stirling's approximation}) 
	\notag\\
    &{} = \;\Theta \big ((4k/m)^{m/2} \cdot (1+1/(m-1))^{m/2}/\sqrt m\big) && (\textit{rewriting})
	\notag\\
    &{} =\; \Theta \big ((4k/m)^{m/2}/\sqrt m\big) && ((1+1/(m-1))^{m/2} \le \ee) &&
	\label{eqn: one-of-m paging first bound}
\end{align}
This gives us one estimate on the competitive ratio in Theorem~\ref{thm: slot hetero lower bound}(ii).
To obtain a second estimate, squaring both sides of~(\ref{eqn: one-of-m paging first bound}), we obtain
\begin{align*}
    f(m, k)^2 &{} \;= \; \Omega \big((4k/m)^m/m\big) = \Omega \big ((k/m)^{m}\cdot 4^m / m\big) &&  
	\notag\\
    	&{} = \;\Omega\big((k/m)^{m}\cdot (k/(k-m))^{k-m+1/2}/\sqrt m\big) && (\textit{using~\eqref{eqn: aux claim for one-of-m paging}})
	 \notag\\
    &\textstyle{} = \;\Omega(\binom k m) = \Omega(|\slotsetfamily|) && (\textit{Stirling's approximation})
  \end{align*}
 Therefore $f(m, k) = \Omega(\sqrt{|\slotsetfamily|})$, as claimed,
 completing the proof of Theorem~\ref{thm: slot hetero lower bound}(ii).
\end{proof}

  It is worth noting
  that, as can be seen from the proofs in this section, both Theorem~\ref{thm: slot hetero lower bound}
and Lemma~\ref{lem: lower bound via eluders} hold even for instances with just two pages.


\subsection{Lower bound for randomized \SlotHeteroPaging}%
\label{sec: randomized slot hetero lower bounds}



Next we present a lower bound on the optimal competitive ratio for randomized algorithms:

\begin{theorem}\label{thm: lower bound power set randomized}
  The optimal randomized ratio for \OneOutOfmPaging{m} with
  cache size $k$ and
  $m = \floor{k/2}$ is $\Omega(k)$,
  even for inputs that use only two pages.
\end{theorem}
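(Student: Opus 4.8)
The plan is to prove the bound by a reduction \emph{from} standard \Paging with a cache of size $N:=2^{\lfloor k/2\rfloor}-1$, whose optimal randomized ratio is the classical $H_N=\Theta(\log N)=\Theta(k)$. I would first recall the standard hard distribution witnessing that bound: identify the $N{+}1$ pages with $\{0,1\}^{\lfloor k/2\rfloor}$ and present requests in \emph{phases}, where phase $i$ requests, in a uniformly random order, all pages except one ``hole'' page $z_i$; this forces every deterministic algorithm to fault $\Omega(H_N)$ times per phase in expectation, while the offline optimum faults only $O(1)$ times per phase (keep $z_i$ out of cache throughout phase $i$, and swap the hole at phase boundaries). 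The one extra ingredient I would insist on is that the holes $z_1,z_2,\dots$ trace a walk on the hypercube (consecutive $z_i$'s differ in exactly one coordinate): this costs nothing in the lower bound, since the per-phase hardness comes entirely from the random intra-phase order, but it will keep the \emph{simulated} optimum's movements local.

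Next, the encoding into \OneOutOfmPaging{m} with $m=\lfloor k/2\rfloor$. Use only two pages $p_0,p_1$, so a configuration is the set $X\subseteq[k]$ of slots holding $p_0$; group $[k]$ into $m$ disjoint pairs (if $k$ is odd, one leftover slot is never touched). Call a configuration \emph{canonical} if every pair holds one $p_0$ and one $p_1$; the $2^m$ canonical configurations correspond bijectively to $\{0,1\}^m$ and thus to the $2^m$ possible \Paging caches (via ``which page is the hole''), and the cost of moving between two canonical configurations is exactly twice the Hamming distance of the corresponding vectors. I would translate a \Paging request to page $v$ into a block of size-$m$ requests: first an \emph{enforcer block} consisting of $\langle p_1,\,\mathrm{pair}_i\cup A\rangle$ and $\langle p_0,\,\mathrm{pair}_i\cup A\rangle$ for every pair $i$ and every $(m{-}2)$-element set $A$ disjoint from $\mathrm{pair}_i$; then the two requests $\langle p_1,S_v\rangle$ and $\langle p_0,T_v\rangle$, where $S_v$ (resp.\ $T_v$) is the set of slots holding $p_0$ (resp.\ $p_1$) in the canonical configuration encoding $v$. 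One verifies that (i) no canonical configuration is forbidden by any enforcer request, whereas every non-canonical configuration has a monochromatic pair that, together with a suitable $A$, makes an enforcer request it fails; and (ii) among canonical configurations, each of $\langle p_1,S_v\rangle$ and $\langle p_0,T_v\rangle$ forbids exactly the one encoding $v$. All these requests have size exactly $m$.

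Finally, the two-sided cost accounting. In one direction, translating the offline optimum of the \Paging instance gives a \OneOutOfmPaging{m} solution that stays canonical (so every enforcer block is free) and changes its canonical configuration only at phase boundaries, along a single hypercube edge, i.e.\ at cost $2$; hence the \OneOutOfmPaging{m} optimum is $O(1)$ times the \Paging optimum. In the other direction, from any deterministic \OneOutOfmPaging{m} algorithm I would extract a deterministic \Paging algorithm whose hole, before each simulated request, is read off from the algorithm's canonical configuration when it has one: whenever the algorithm starts a block in a non-canonical configuration it has already been charged at least $1$ on that enforcer block, and whenever it starts in the canonical configuration encoding the requested page $v$ it is charged at least $1$ on $\{\langle p_1,S_v\rangle,\langle p_0,T_v\rangle\}$; either way, each fault of the extracted \Paging algorithm is matched by $\Omega(1)$ units of the \OneOutOfmPaging{m} algorithm's cost, so that cost is $\Omega(1)$ times the \Paging fault count of the extraction. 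Since that fault count is $\Omega(H_N)$ times the \Paging optimum on the hard distribution, Yao's principle yields an $\Omega(H_N)=\Omega(k)$ lower bound on the optimal randomized ratio of \OneOutOfmPaging{m}. I expect the main obstacle to be making both cost blow-ups genuinely $O(1)$: proving precisely that the enforcer block forbids all and only the non-canonical configurations (and contributes nothing to the optimum), and — this is exactly the point of the hypercube-walk choice — ensuring the optimum never has to move more than $O(1)$ in the non-uniform, Hamming-type metric of \OneOutOfmPaging{m}, so that the $\Theta(\log N)$ factor survives the reduction intact.
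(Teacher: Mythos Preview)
Your reduction \emph{from} \Paging is the right idea and your enforcer-block/canonical-configuration encoding is fine, but the crucial claim ``this costs nothing in the lower bound, since the per-phase hardness comes entirely from the random intra-phase order'' is false, and this breaks the argument. The per-phase hardness of the marking distribution comes from the online algorithm's \emph{uncertainty about $z_i$}, not merely from the random order. If you force $z_i$ to be a random hypercube neighbour of $z_{i-1}$, then at the start of phase~$i$ the algorithm already knows $z_{i-1}$ (it is the unique unrequested page of phase $i{-}1$) and hence knows that $z_i$ lies in a set of only $m$ candidates. Playing Cat-and-Rat on those $m$ candidates while the other requests stream by costs only $\Theta(H_m)=\Theta(\log k)$ faults per phase in expectation, not $\Theta(H_N)=\Theta(k)$. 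With your $O(1)$ optimum per phase this yields only an $\Omega(\log k)$ lower bound. Conversely, if you drop the hypercube-walk constraint so that $z_i$ is uniform and the $\Theta(H_N)$ hardness survives, your translated optimum must jump between canonical configurations at Hamming distance $\Theta(m)$, so the \OneOutOfmPaging{m} optimum is $\Theta(k)$ per phase and the ratio collapses to $\Omega(1)$.

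The paper resolves exactly this tension by a different mechanism: rather than trying to keep the optimum's per-step cost $O(1)$, it picks the $N=2^{\Theta(k)}$ reference configurations to form a \emph{code} with pairwise Hamming distance $\Omega(k)$. Then a forcing sequence (repeated $k$ times) pins the algorithm to one of these configurations, so every Cat-and-Rat move of the extracted algorithm witnesses $\Omega(k)$ \OneOutOfmPaging{m} cost, while the optimum pays at most $k$ per move. The two $\Theta(k)$ factors cancel, and the full $H_{N-1}=\Theta(k)$ Cat-and-Rat lower bound transfers intact. In short, the missing idea in your plan is the well-separated code; without it you cannot simultaneously keep the optimum cheap and the hardness high.
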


The proof is by a reduction from standard \Paging with some $N$ pages
and a cache of size $N-1$.  For any $N$, this problem has optimal
randomized competitive ratio $H_{N-1} = \Theta(\log
N)$~\cite{fiat_etal_competitive_paging_1991}. This and the next lemma
imply the theorem.


\begin{lemma}\label{lemma: lower bound power set randomized}
  Every $f(k)$-competitive (randomized) online algorithm $\algA$ for \OneOutOfmPaging{m}  with $m=\floor{k/2}$
  can be converted into an $O(f(k))$-competitive (randomized) online algorithm $\algB$
  for standard \Paging with $N$ pages and a cache of size $N-1$,  where $N=2^{\Theta(k)}$.
\end{lemma}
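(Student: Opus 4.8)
I would build $\algB$ so that it runs $\algA$ on a derived request sequence. Assume first $k$ is even and $m=k/2$ (for odd $k$, use only slots $1,\dots,k-1$, which reduces to the even case with $k'=k-1$, $m=k'/2$). Fix a maximum \emph{complement-free} family $\calM$ of $m$-element subsets of $[k]$ --- one set from each pair $\{S,\overline S\}$ --- so $|\calM|=\tfrac12\binom{k}{k/2}=2^{\Theta(k)}=:N$. The paging instance $\algB$ solves has page set $\calM$ and cache size $N-1$. $\algB$ will run $\algA$ on a \OneOutOfmPaging{m} instance over just two pages $p_0,p_1$: a paging request to a page $S\in\calM$ is translated into the two requests $\slotrequestpair{p_1}{S},\slotrequestpair{p_0}{S}$ (legal since $|S|=m$). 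We may assume $\algA$ keeps only $p_0$ or $p_1$ in every slot (replacing any other page by $p_1$ never increases cost nor unsatisfies one of our requests), so $\algA$'s configuration is $C_X$ = ``$p_0$ in the slots of $X$, $p_1$ elsewhere'' for some $X\subseteq[k]$. From $C_X$, $\algB$ reads off a ``missing page'' $\Phi(X)\in\calM$: since $|X|+|\overline X|=k=2m$, one of $X,\overline X$ contains an $m$-subset $S_0$; put $\Phi(X)=S_0$ if $S_0\in\calM$, else $\Phi(X)=\overline{S_0}$ (which lies in $\calM$ and sits inside the other of $X,\overline X$). $\algB$'s cache is $\calM\setminus\{\Phi(X)\}$.

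\textbf{Correctness and $\cost_\algB\le\cost_\algA+O(1)$.} $\algB$ changes its cache only when $\Phi$ of $\algA$'s configuration changes, which forces $\algA$ to make a retrieval, so $\cost_\algB\le\cost_\algA+O(1)$. If $\algB$ faults on request $S$ then, with $\algA$ at $C_X$, we have $\Phi(X)=S$, hence $S\subseteq X$ or $S\subseteq\overline X$: in the first case $C_X$ holds $p_0$ on all of $S$ and fails $\slotrequestpair{p_1}{S}$; in the second it holds $p_1$ on all of $S$ and fails $\slotrequestpair{p_0}{S}$. Either way $\algA$ faults. After $\algA$ serves both $\slotrequestpair{p_1}{S}$ and $\slotrequestpair{p_0}{S}$, its new configuration $C_{X'}$ satisfies $S\not\subseteq X'$ and $S\not\subseteq\overline{X'}$, so $\Phi(X')\ne S$ and $\algB$'s cache again contains $S$; thus $\algB$ serves its request legally.

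\textbf{Bounding $\opt$ of the derived instance --- the hard part.} It remains to show $\opt(\text{derived})\le O(\opt_{\text{paging}})+O(1)$; then $\cost_\algB\le\cost_\algA+O(1)\le f(k)\opt(\text{derived})+O(1)\le O(f(k))\opt_{\text{paging}}+O(1)$, so $\algB$ is $O(f(k))$-competitive, and since paging with cache $N-1$ has randomized ratio $H_{N-1}=\Theta(\log N)=\Theta(k)$, this forces $f(k)=\Omega(k)$. The structural fact I would exploit is that a \emph{balanced} configuration $C_X$ ($|X|=m$) fails the translated requests of exactly one page of $\calM$ --- the unique member $W_X$ of $\{X,\overline X\}$ that lies in $\calM$ --- and even for that page only one of its two translated requests (because $|S|=|X|=m$ makes $S\subseteq X\Leftrightarrow S=X$ and $S\subseteq\overline X\Leftrightarrow S=\overline X$). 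So while an optimal paging solution has missing page $\mu$, the configuration $C_\mu$ (with $W_{C_\mu}=\mu$) serves that whole phase for free, since it contains only requests to pages $\ne\mu$; the only cost is at phase boundaries, one per paging fault. The delicate point, which I expect to be the main obstacle, is that moving eagerly from $C_\mu$ to $C_{\mu'}$ at a boundary costs $|\mu\triangle\mu'|$, which can be $\Theta(k)$ and would only give $\opt(\text{derived})\le k\cdot\opt_{\text{paging}}$. My plan to avoid the $k$ loss is to re-balance lazily: at each forced re-balancing step choose a \emph{balanced} successor configuration whose unique bad page has already been requested earlier in the current phase, so it will not recur there; a charging argument then caps the total re-balancing cost at $O(\opt_{\text{paging}})$, using that the extremal paging instances realizing the $H_{N-1}$ bound have each phase request every page, making ``already requested earlier in the phase'' available for all but $O(1)$ re-balancings. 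Making this charging precise --- in particular absorbing the few boundary re-balancings that have no earlier occurrence to charge against --- is the step I expect to demand the most care.
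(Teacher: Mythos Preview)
Your reduction is a different route from the paper's, and the step you flag as ``the hard part'' is a genuine gap, not just a matter of care.

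\medskip
\textbf{The gap.} You obtain the tight inequality $\cost_\algB \le \cost_\algA + O(1)$ and then need $\opt(\text{derived}) \le O(\opt_{\text{paging}})$. But your derived instance replaces each paging request by only \emph{two} One-of-$m$ requests, and a balanced configuration $C_\mu$ fails exactly the pair for the single page $W_\mu$. So solving the derived instance is essentially Cat-and-Rat on $\calM$ with the \emph{Hamming metric} rather than the unit metric: when the optimal paging solution changes its missing page from $\mu$ to $\mu'$, the natural derived solution pays $|\mu\triangle\mu'|$, which can be $\Theta(k)$. Your ``lazy rebalancing'' tries to avoid this by hopping to neighbours whose bad page was already seen, but (i) the lemma must hold for \emph{every} paging instance, not only the ``extremal'' ones you invoke, and (ii) even on extremal instances, early in a phase there need not be any neighbour whose bad page has already appeared, and an adversarial ordering of the phase can force many such hops. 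Without a concrete charging scheme this step does not go through, and I do not see how to push it to $O(1)$ per paging fault. (There is also a smaller bug: your map $\Phi$ is only well-defined for balanced $X$; for unbalanced $X$ the chosen $S_0$ need not satisfy $S_0\subseteq X$ or $\bar S_0\subseteq \bar X$, so the sentence ``$\bar{S_0}$ \ldots\ sits inside the other of $X,\bar X$'' is false, and your correctness claim $\Phi(X')\ne S$ can fail.)

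\medskip
\textbf{How the paper handles it.} The paper does \emph{not} attempt to prove $\opt(\text{derived})\le O(\opt_{\text{paging}})$. Instead it accepts the easy bound $\opt(\text{derived})\le k\cdot\opt_{\text{paging}}$ and recovers the lost factor of $k$ on the \emph{other} side, by arranging that each move of $\algB$ forces $\algA$ to pay $\Omega(k)$, so that $\cost_\algB = O(\cost_\algA/k)$. Two ideas make this work: first, the holes are not all of $\calM$ but a \emph{code} $\calC$ of $N=2^{\Theta(k)}$ balanced configurations at pairwise Hamming distance $\Omega(k)$; second, each Cat-and-Rat request $C_t$ is translated not into two requests but into $k$ repetitions of a \emph{forcing sequence} $\psi(\calC\setminus\{C_t\})$ that only the configurations in $\calC\setminus\{C_t\}$ satisfy at zero cost. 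Then either $\algA$ visits some $R_t\in\calC\setminus\{C_t\}$ during this block (and $\algB$ sets its hole to $R_t$), or $\algA$ pays at least $k$. Any change $R_{t-1}\ne R_t$ costs $\algA$ at least the minimum code distance $\Omega(k)$, so $\cost_\algB = O(\cost_\algA/k) = O(f(k)\,\opt(\text{derived})/k) = O(f(k)\,\opt_{\text{paging}})$, and the two factors of $k$ cancel. The code plus the forcing sequence are the missing ingredients; once you have them, the $\opt$ bound you were struggling with becomes unnecessary.
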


\begin{proof}
  Fix a sufficiently large $k$.  Assume without loss of generality that $k$ is even
  (otherwise apply the construction below to slots in $[k-1]$, ignoring slot $k$ as it is never requested).
  Take $N=\floor{\eulere^{k/16}}$.

  To ease exposition,  view the \Paging problem with $N$ pages and a cache of size $N-1$
  as the following equivalent online \emph{Cat and Rat} game
  on any set $\calH$ of $N$ \emph{holes} (see e.g.~\cite[\S 11.3]{borodin_el-yaniv_on_randomization_1999}).
  The input is a sequence $\catratrequestsequence=(R_0,C_1, \ldots, C_T)$ of holes (i.e.,  $R_0\in \calH$ and
  $C_t\in \calH$ for all $t$).  A solution is any sequence $(R_1, \ldots, R_T)$ of holes
  such that $R_t \ne C_t$ for all $t\in [T]$.  Informally, at each time $t\in [T]$, the cat inspects hole $C_t$,
  and if the rat's hole $R_{t-1}$ at time $t-1$ was $C_t$,
  the rat is required to move to some other hole $R_t \in \calH \setminus\{C_t\}$.
  (For the solution to be online, $R_t$ must be independent of $C_{t+1}, C_{t+2}, \ldots, C_T$ for all $t \ge 1$.)
  The goal is to minimize the number of times the rat moves, that is $|\{ t\in[T] : R_t \ne R_{t-1}\}|$. 

The claimed algorithm~$\algB$ for \Paging will work by 
reducing a given instance $\catratrequestsequence$ on a set $\calH$ of $N$ holes
to an instance $\slotrequestsequence$ of \OneOutOfmPaging{(k/2)}, simulating $\algA$ on $\slotrequestsequence$,
and converting its solution to a solution for $\catratrequestsequence$.
This instance $\slotrequestsequence$ uses just two pages, $p_0$ and $p_1$.  
To describe the reduction, we need a few more definitions and observations.

For two disjoint sets $S_0, S_1 \subseteq [k]$, by $\twopageconfig(S_0,S_1)$ we denote 
the cache configuration that assigns $p_0$ to slots in $S_0$ and $p_1$ to slots in $S_1$, with the remaining slots empty.
A configuration $\twopageconfig(S_0,S_1)$ is called \emph{balanced} if $|S_0| = |S_1| = k/2$. (This obviously implies that
$\complS_0 = S_1$, where $\complS_0 = [k]\setminus S_0$.)  
Note that the Hamming distance between any two balanced configurations (that is, the number of
slots whose contents differ in these configurations) is always even.

The following easy observation will be useful:


\begin{observation}\label{obs: balanced}
Let $S\subseteq [k]$ with $|S| = k/2$. Any request $\requestpair{p_0}{S}$ is satisfied by 
every balanced configuration except $\twopageconfig(\complS,S)$, and any request $\requestpair{p_1}{S}$
is satisfied by every balanced configuration except $\twopageconfig(S,\complS)$.
\end{observation}

For any set $\calC$ of cache configurations,  a \emph{forcing sequence for $\calC$} is a
request sequence such that the cache configurations that satisfy all requests in this sequence without cost are exactly those in $\calC$.


\begin{claim}\label{claim: sh paging lower bound forcing}
Let $\calC$ be a set of balanced cache configurations such that any two configurations in $\calC$ are at Hamming distance 
strictly greater than $2$.
Then there is a request sequence $\forcingseq(\calC)$ that is forcing for $\calC$.
\end{claim}

  To verify the claim, take $\forcingseq(\calC)$ to be the sequence formed by (any ordering of) all those allowed requests
  that are satisfied by all configurations in $\calC$. Specifically, $\forcingseq(\calC)$
  consists of all requests $\requestpair {p_i} S$ such that $i\in\{0,1\}$ and $|S|=k/2$,
  with $S \cap S_i \ne \emptyset$ for all $\twopageconfig(S_0, S_1)\in \calC$. By definition,
  each configuration in $\calC$ satisfies all requests in $\forcingseq(\calC)$.

  It remains to show that for any configuration $\twopageconfig(S'_0, S'_1)\notin \calC$ there is
  a request in $\forcingseq(\calC)$ not satisfied by $\twopageconfig(S'_0, S'_1)$. 
  In the case that $\twopageconfig(S'_0, S'_1)$ is balanced, we can take
  this request to be $\requestpair {p_1} {S'_0}$, because, by Observation~\ref{obs: balanced},
  it is included in $\forcingseq(\calC)$ but it is not satisfied by $\twopageconfig(S'_0, S'_1)$.
  
  Next consider the case that $\twopageconfig(S'_0, S'_1)$ is not balanced.
  Without loss of generality, assume that $\twopageconfig(S'_0, S'_1)$ has no
  empty slots, since filling empty slots with $p_0$ or $p_1$ can only increase the number of
  requests in $\forcingseq(\calC)$ satisfied by $\forcingseq(\calC)$.
  Likewise assume by symmetry that $|S'_0| > k/2$.
  Let $B_0$ and $B'_0$ be any two size-$k/2$ subsets of $S'_0$
  such that $|B_0\setminus B'_0| = |B'_0\setminus B_0| = 1$.
  Then $\twopageconfig(B_0, \complB_0)$ and $\twopageconfig(B'_0, \complB'_0)$ are at Hamming distance 2 so both cannot be in $\calC$.
  Assume without loss of generality that $\twopageconfig(B_0, \complB_0)$ is not in $\calC$.
  Then request $\requestpair {p_1} {B_0}$ is satisfied by every configuration in $\calC$ because, by Observation~\ref{obs: balanced},
  the only balanced configuration that doesn't satisfy this request is $\twopageconfig(B_0, \complB_0)$,
  which is not in $\calC$. So $\requestpair {p_1} {B_0}$ is in $\forcingseq(\calC)$.
  On the other hand, since $B_0 \subseteq S'_0$, request $\requestpair {p_1} {B_0}$ is not satisfied by $\twopageconfig(S'_0, S'_1)$.
  This proves Claim~\ref{claim: sh paging lower bound forcing}.

\smallskip

We now describe how Algorithm~$\algB$ computes its solution $R =(R_1, \ldots, R_T)$ for a given request sequence $\catratrequestsequence$.
To streamline presentation, we present it first as an offline algorithm. At the beginning $\algB$ chooses
any collection $\calC$ of $N$ balanced configurations such that the Hamming distance between every two distinct configurations in $\calC$ is at least $k/16$.
Such a collection $\calC$ can be constructed using a greedy method, as in~\cite{DBLP:conf/isaac/BienkowskiJS19}. (Here we only need existence, which
can be also established by a probabilistic proof:  if one forms $\calC$ by randomly and uniformly sampling $N$ times with replacement from the balanced configurations,
  then by a standard Chernoff bound and the na{\"\i}ve union bound $\calC$ has the required property with positive probability.)
Algorithm~$\algB$ lets $\calH=\calC$, identifying holes with cache configurations.
Then, for each time $t\in [T]$, it replaces the request $C_t$ in $\catratrequestsequence$
by $\forcingseq(\calC\setminus \braced{C_t})^{k}$, that is, $k$ repetitions of the forcing sequence for $\calC\setminus\braced {C_t}$.
(This sequence exists if $k$ is large enough, by Claim~\ref{claim: sh paging lower bound forcing} and the choice of $\calC$.)
This produces sequence $\slotrequestsequence$, an instance of \OneOutOfmPaging{m}, with $m = k/2$.

Next, $\algB$ simulates $\algA$ on $\slotrequestsequence$. Let $D$ be a solution for $\slotrequestsequence$ produced by $\algA$.
Without loss of generality we can assume that, for each $t\in[T]$,  as $D$ responds to $\forcingseq(\calC\setminus \braced{C_t})^{k}$
it uses at least one configuration $P\in \calC\setminus\braced{C_t}$. (This is because otherwise $D$ would incur cost
at least $k$ on each such $\forcingseq(\calC\setminus \braced{C_t})^{k}$. 
We could then replace $D$ by a solution $D'$ that at the end of each such forcing sequence
moves into any configuration in $\calC\setminus \braced{C_t}$. Being in a different configuration could increase the future cost of $D'$
by at most $k$, so the total cost of $D'$ will be at most twice that of $D$.)
Finally $\algB$ takes $R_t$ to be $P$.
The produced sequence $R=(R_1, \ldots, R_T)$ is a valid solution to $\catratrequestsequence$,  because $R_t \in \calC\setminus\braced{C_t}$ for $t\in[T]$.  

To complete the description of $\algB$, it remains to observe that $R$ can indeed be produced in an online fashion,
because $R_t$ does not depend on any future requests in $\catratrequestsequence$. Also, $\algB$ is deterministic if $\algA$ is. 


\begin{claim}\label{claim: sh paging lower bound opt}
$\opt(\slotrequestsequence) \le k\, \opt(\catratrequestsequence)$.
\end{claim}

To prove this claim,
  given an optimal solution $(R^\ast_1, \ldots, R^\ast_T)$ for $\catratrequestsequence=(R_0,C_1, \ldots, C_T)$,
  consider the corresponding solution $D^\ast$ for $\slotrequestsequence$  that starts in configuration $R^\ast_0 = R_0$,
  then, for each $t\in [T]$, responds to $\forcingseq(\calC\setminus\braced{C_t})^k$ 
  by having its cache in configuration $R^\ast_t\in\calC\setminus\braced{C_t}$
  for all requests in $\forcingseq(\calC\setminus C_t)^k$.  
  For each $t\in [T]$,
  the response of $D^\ast$ to $\forcingseq(\calC\setminus \braced{C_t})^k$ costs $0$ if $R^\ast_{t-1} = R^\ast_t$ (the rat didn't move)
  and otherwise at most $k$ (to transition the cache from $R^\ast_{t-1}$ to $R^\ast_t$).
  This proves Claim~\ref{claim: sh paging lower bound opt}.


\begin{claim}\label{claim: sh paging lower bound approximation}
Algorithm~$\algB$ is $O(f(k))$-competitive.
\end{claim} 


  Since $\algA$ is $f(k)$-competitive, $\cost(D) \le f(k) \opt(\slotrequestsequence) + c_k$, where $c_k\in \mathbb R$ is determined by $k$ alone.
  Whenever the rat moves (i.e., $R_{t-1}\ne R_{t}$),
  by the definition of $\calC$,  the Hamming distance between $R_{t-1}$ and $R_{t}$ is $\Omega(k)$,
so $D$ pays
$\Omega(k)$ to transition from $R_{t-1}$ to $R_t$ (possibly in multiple steps).
  Using Claim~\ref{claim: sh paging lower bound opt},
  we obtain that $\cost(R) = O(\cost(D)/k)=O((f(k) \opt(\slotrequestsequence) + c_k)/k)= O(f(k)\opt(\catratrequestsequence) + c_k/k)$.
  So $R$ is an $O(f(k))$-competitive solution for $\mu$.
  This proves Claim~\ref{claim: sh paging lower bound approximation},
  completing the proof of the lemma.
\end{proof}



\section{Upper Bounds for \PageLaminarPaging}%
\label{sec: page laminar paging}

Recall that \PageLaminarPaging generalizes \Paging by allowing each request to be a set $\pageset$ of pages.
The request $\pageset$ is satisfiable by having any page $p\in\pageset$ in the cache.
We require $\pageset\in \pagesetfamily$, where $\pagesetfamily$ is a pre-specified laminar collection of sets of pages,
whose height we denote by $h$. (See the example in Figure~\ref{fig: page laminar example}.)
To our knowledge, this problem has not been yet studied in the literature. In particular, we do not know
whether the optimum solution can be computed in polynomial time. 


  \begin{figure}[t]
    \begin{center}
      \includegraphics[width = 4in]{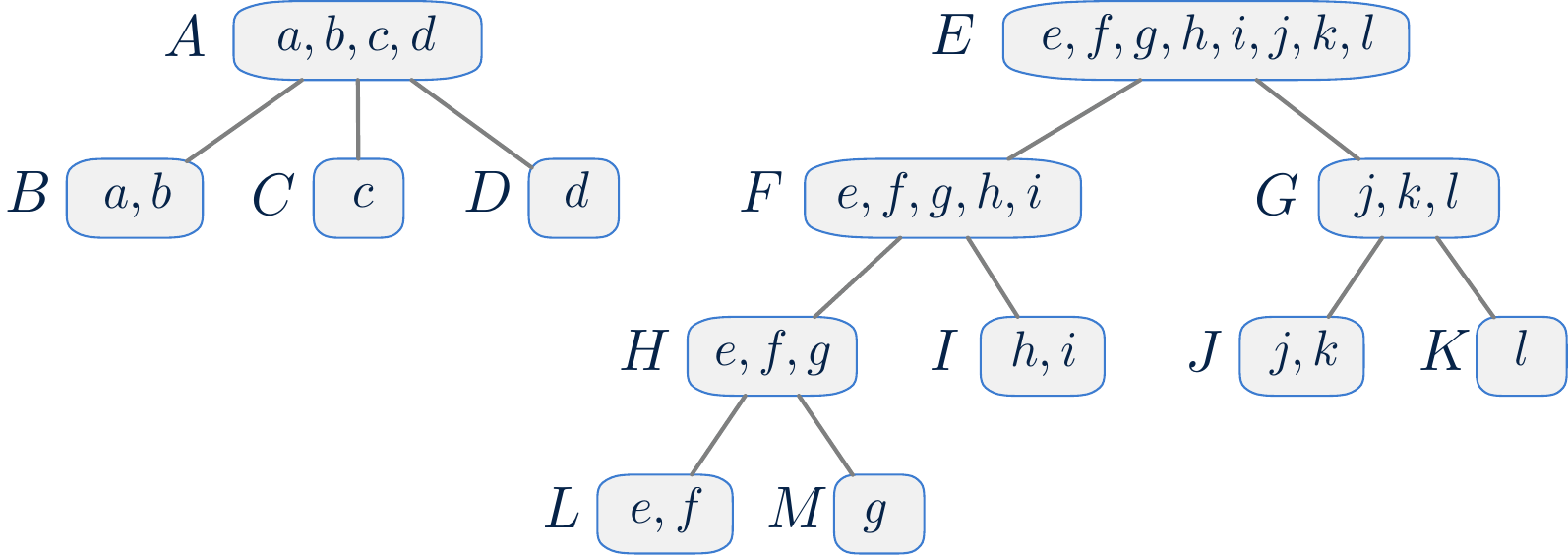}
    \end{center}
    \caption{An example of a laminar family $\pagesetfamily$ of height $4$.}%
    \label{fig: page laminar example}
  \end{figure}


\begin{theorem}\label{thm: page laminar}
  \PageLaminarPaging admits the following polynomial-time algorithms:
  an $hk$-competitive deterministic online algorithm, 
  an $h H_k$-competitive randomized online algorithm, and
  an offline $h$-approximation algorithm.
\end{theorem}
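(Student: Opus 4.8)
I would prove all three parts through a single online reduction to standard \Paging with the \emph{same} cache size $k$, losing only a factor of $h$ in the optimum. The plan is to define an online \emph{page-selection rule} $\rho$ that reads the prefix $\pagerequest_1,\dots,\pagerequest_t$ of set-requests (together with its own previous outputs) and returns a single page $p_t=\rho(\pagerequest_1,\dots,\pagerequest_t)\in\pagerequest_t$. Replacing each set request $\pagerequest_t$ by the page request $p_t$ turns the \PageLaminarPaging instance $\pagerequestsequence$ into a \Paging instance $\slotrequestsequence=(p_1,\dots,p_T)$ with cache size $k$. Any cache configuration that holds $p_t$ also satisfies $\pagerequest_t$, so every solution for $\slotrequestsequence$ is a solution for $\pagerequestsequence$; in particular $\opt(\pagerequestsequence)\le\opt(\slotrequestsequence)$, and the whole content of the reduction is the matching bound
\[
  \opt(\slotrequestsequence)\;\le\;h\cdot\opt(\pagerequestsequence).
\]

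\noindent\textbf{The rule and the main lemma (the crux).} Viewing $\pagesetfamily$ as a forest (with $\pagerequest\subsetneq\pagerequest'$ meaning $\pagerequest$ is a descendant of $\pagerequest'$), the rule keeps a ``current'' selected page and changes it only when forced: on request $\pagerequest_t$, if the current page still lies in $\pagerequest_t$ keep it; otherwise re-select $p_t$ canonically, descending from $\pagerequest_t$ along a fixed choice of child at each node to a designated leaf set of $\pagesetfamily$ contained in $\pagerequest_t$ and taking a designated page of that leaf. Some adaptivity is genuinely required: a non-adaptive choice $p_t=\rho(\pagerequest_t)$ fails by an unbounded factor, since alternating requests between a set and a superset can force the derived sequence to oscillate between two pages while the original is served by one fixed page. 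The main lemma is that with the above rule $\opt(\slotrequestsequence)\le h\cdot\opt(\pagerequestsequence)$, and I expect this to be the principal obstacle. I would prove it by taking an optimal solution $\{C^\ast_t\}$ for $\pagerequestsequence$ --- which at each step holds some page $q_t\in\pagerequest_t$, and which we may assume always uses the ``deepest available'' page so that $q_t$ changes only when $C^\ast$ fetches --- and building a solution $D$ for $\slotrequestsequence$ that tracks $C^\ast$ but substitutes the selected page $p_t$ for $q_t$; a charging argument over the laminar structure then bounds the number of fetches of $D$ attributable to any one fetch of $C^\ast$ by $h$, essentially one per level of the at most $h$-long ancestor chain of the set involved, since a change of the selected page propagates up that chain but no further before it stabilizes. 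An alternative is induction on $h$, peeling off the top level of $\pagesetfamily$: for $h=1$ the family is an antichain of pairwise disjoint sets and the rule collapses each set to a single page with no loss, and each additional level of nesting costs an additive $\opt(\pagerequestsequence)$.

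\noindent\textbf{Deriving the three bounds.} The derived sequence $\slotrequestsequence$ depends only on $\pagerequestsequence$, so it is produced online; the online algorithm $\algB$ for \PageLaminarPaging simulates an online \Paging algorithm $\algA$ on $\slotrequestsequence$ and uses $\algA$'s cache as its own cache configuration. After serving $p_t$ that cache contains $p_t\in\pagerequest_t$, so it satisfies $\pagerequest_t$, and $\cost_{\algB}(\pagerequestsequence)=\cost_{\algA}(\slotrequestsequence)$; hence if $\algA$ is $c$-competitive then, by the main lemma, $\algB$ is $ch$-competitive. Instantiating $\algA$ with LRU (or any marking algorithm), $c=k$, yields the $hk$-competitive deterministic online algorithm; instantiating $\algA$ with an $H_k$-competitive randomized online algorithm for \Paging yields the $h H_k$-competitive randomized online algorithm; both run in polynomial time per request because $\rho$ and $\algA$ do. For the offline statement, compute the entire sequence $\slotrequestsequence$ and solve it optimally in polynomial time by Belady's furthest-in-future rule; the resulting solution costs $\opt(\slotrequestsequence)\le h\cdot\opt(\pagerequestsequence)$ and, since it holds $p_t\in\pagerequest_t$ at every step, is a valid solution for $\pagerequestsequence$ --- an offline $h$-approximation.
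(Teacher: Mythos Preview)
Your high-level plan matches the paper exactly: replace each set request $\pagerequest_t$ by a single page $p_t\in\pagerequest_t$ chosen online, then run a \Paging algorithm on the resulting sequence $\sigma$, and show $\opt(\sigma)\le h\cdot\opt(\pagerequestsequence)$. The gap is your specific selection rule: maintaining \emph{one} global current page and, when it falls outside $\pagerequest_t$, resetting to a \emph{fixed} canonical page of $\pagerequest_t$, does not yield the bound. Take $k=2$, pages $a,b,c$, laminar family $\pagesetfamily=\{\,\{a\},\{b\},\{a,b\},\{c\}\,\}$ (height $h=2$), and suppose your canonical descent from $\{a,b\}$ picks $a$. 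On the repeated sequence $\{b\},\{c\},\{a,b\},\{c\},\ldots$ your rule produces $\sigma=b,c,a,c,b,c,a,c,\ldots$, which needs $\Omega(T)$ faults with $k=2$; yet $\opt(\pagerequestsequence)=O(1)$ since the cache $\{b,c\}$ serves every request. Hence $\opt(\sigma)/\opt(\pagerequestsequence)\to\infty$, not $\le h$. (If your canonical choice at $\{a,b\}$ were $b$, the symmetric sequence with $\{a\}$ in place of $\{b\}$ breaks it.) The single-page memory is too coarse: after a request to $\{c\}$, all information about which leaf of $\{a,b\}$ was last visited is lost, so the next request to $\{a,b\}$ defaults to the canonical page regardless of history.

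The paper's rule fixes exactly this: it maintains, for \emph{each} non-leaf set $P\in\pagesetfamily$, a \emph{preferred child} $c_t(P)$, namely the child of $P$ whose subtree contained $\pagerequestsequence$'s most recent request to a proper descendant of $P$; the preferred page of $P$ is obtained by following preferred children down to a leaf. Thus a request to $\{a,b\}$ resolves to $b$ whenever $\{b\}$ was requested more recently than $\{a\}$, independent of intervening requests outside $\{a,b\}$. With this per-set memory the paper shows $\opt(\sigma)\le h\,\opt(\pagerequestsequence)$ by partitioning time, for each $P$, into \emph{$P$-phases} (maximal intervals with no request to a proper descendant of $P$), and arguing that repairing each phase of an optimal solution for $\pagerequestsequence$ into one for $\sigma$ costs at most $1$ per phase and that the number of costly $P$-phases is bounded by the number of retrievals $\opt$ makes of pages in $P$. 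Your charging sketch (``one per level of the ancestor chain'') is in the right spirit, but it needs the per-set preferred-child rule to go through; with a single global current page there is no such bound.
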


The proof is by reduction to standard \Paging. Known polynomial-time algorithms for standard \Paging
include an optimal offline algorithm~\cite{Belady1966Study},
a deterministic $k$-competitive online algorithm~\cite{DBLP:journals/cacm/SleatorT85}
and a randomized $H_k$-competitive online algorithm~\cite{DBLP:journals/tcs/AchlioptasCN00}.
Theorem~\ref{lemma: page laminar} follows directly from composing these known results with 
the following lemma. 


\begin{lemma}\label{lemma: page laminar}
  Every $f(k)$-approximation algorithm \algA for \Paging
  can be converted into an $h f(k)$-approximation algorithm \algB for \PageLaminarPaging,
  preserving the following properties: being polynomial-time, online, and/or deterministic.
\end{lemma}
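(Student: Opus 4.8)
The plan is to realize $\algB$ as a thin online wrapper around $\algA$. As the requests $\pageset_1,\pageset_2,\dots$ of the \PageLaminarPaging instance $\pi$ arrive, $\algB$ rewrites them one at a time into single-page requests $\rho_1,\rho_2,\dots$ with $\rho_t\in\pageset_t$, thereby producing a standard \Paging instance $\sigma$ with the same cache size $k$; it runs $\algA$ on $\sigma$ and, at each step, outputs whatever cache configuration $\algA$ uses. Since $\rho_t\in\pageset_t$, every configuration $\algA$ produces for $\rho_t$ already satisfies $\pageset_t$, so this is a legal online solution for $\pi$ and $\cost_{\algB}(\pi)=\cost_{\algA}(\sigma)\le f(k)\cdot\opt(\sigma)+b$. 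The rewriting is online, deterministic, and polynomial-time, so $\algB$ inherits all three properties from $\algA$, and the lemma reduces to proving
\[
\opt(\sigma)\;\le\; h\cdot\opt(\pi)+O(k).
\]

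For the rewriting I would keep a single ``current representative'' page $\rho$, namely the page issued at the previous step. On request $\pageset_t$: if $\rho\in\pageset_t$, re-issue it ($\rho_t=\rho$); otherwise perform a \emph{refresh}, picking a new $\rho_t\in\pageset_t$. The one delicate design choice is which page to pick on a refresh; I would pick one lying in an inclusion-minimal member of $\pagesetfamily$ that is contained in $\pageset_t$ (ties broken by a fixed rule), so that the new representative is ``as deep as possible'' in the laminar forest and thus stays valid for as long as possible; equivalently one can maintain, alongside $\rho$, a pointer to the laminar set $\rho$ is anchored to, letting the anchor only descend until a refresh resets it. Note that $\sigma$ changes value only at refreshes, and a refresh at $t$ means $\rho\notin\pageset_t$ while $\rho$ lay in all requested sets since the previous refresh.

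The heart of the argument is the inequality $\opt(\sigma)\le h\cdot\opt(\pi)+O(k)$. I would fix an optimal \emph{offline} solution $S^\ast$ for $\pi$ and build from it a solution $D$ for $\sigma$ with $\cost(D)\le h\cdot\cost(S^\ast)+O(k)$; since $\opt(\sigma)\le\cost(D)$ and $\cost(S^\ast)=\opt(\pi)$, this is exactly what is needed. The key use of laminarity: consider a maximal time interval during which one fixed slot of $S^\ast$ holds one fixed page $q$, and let $J$ be the set of times in that interval at which $S^\ast$ uses that slot to satisfy the request. Every $\pageset_t$ with $t\in J$ contains $q$, so these sets pairwise intersect, hence — by laminarity — form a chain $Q_1\supsetneq\dots\supsetneq Q_s$ with $s\le h$. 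This is the mechanism by which the $h$ (and no more) enters: a slot that $S^\ast$ never touches over a long stretch nonetheless ``covers'' only a height-$h$ chain of requests, so the representative-tracking rewrite needs only boundedly many (at most about $h$) fresh representatives to follow that slot, and $D$ can afford to carry them. I would then charge each retrieval of $S^\ast$ at most $h$ retrievals of $D$, plus an $O(k)$ start-up term for the initial cache.

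The step I expect to be the real obstacle is making this charging rigorous, for two intertwined reasons: the representative is a single global object, so the representatives ``belonging'' to different $S^\ast$-slots are interleaved in time, and one must still exhibit a size-$k$ (not size-$hk$) cache $D$; and one must prove that within each slot's interval the rewrite re-positions its representative only $O(h)$ times, which needs the ``pick the deepest page'' refresh rule (or the anchor invariant) to be used carefully. A cleaner alternative that I would develop in parallel is induction on the height $h$: the maximal sets of $\pagesetfamily$ are pairwise disjoint, so requests to them are handled by an $h=1$-type argument with a fixed representative per maximal set (losing only an additive $O(k)$, no multiplicative factor); the remaining requests live in a laminar family of height $h-1$ and are handled recursively; and the two resulting $\sigma$-solutions are merged slot by slot, losing one extra level. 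I would then present whichever of the two routes turns out to be the less technical.
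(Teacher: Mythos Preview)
Your high-level framework is exactly the paper's: rewrite each set request $\pageset_t$ to a single page $\rho_t\in\pageset_t$, feed the resulting \Paging instance $\sigma$ to $\algA$, and reduce everything to showing $\opt(\sigma)\le h\cdot\opt(\pagerequestsequence)$. That part is fine.

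The gap is in the rewrite rule and, consequently, in the missing proof of the key inequality. Your rule keeps one \emph{global} representative $\rho$ and, on a refresh at $\pageset_t$, jumps to a fixed deepest page of $\pageset_t$. This discards all locality information about $\pageset_t$: once $\rho$ leaves a subtree (because some disjoint set was requested), the next request to that subtree resets to the fixed tiebreak page rather than to whatever page the optimal solution is holding there. You correctly identify the resulting difficulty---the representatives associated with different slots of $S^\ast$ are interleaved, and your per-slot charging does not obviously produce a size-$k$ solution $D$---but you do not resolve it. Both sketched routes (charging and induction on $h$) stop short of an argument, and the obstacle you name is real, not merely ``technical.''

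The paper's rewrite rule is precisely engineered to avoid this. It maintains, for \emph{every} set $\pageset\in\pagesetfamily$, a pointer $c_t(\pageset)$ to the child whose subtree was most recently requested, and defines the \emph{preferred page} $p_t(\pageset)$ by tracing these pointers down to a leaf. The request $\pageset_t$ is then replaced by $p_t(\pageset_t)$. The crucial consequence is that throughout any maximal interval with no request to a proper descendant of $\pageset$ (a ``$\pageset$-phase''), $\pageset$ and its preferred child share the same preferred page; this lets the paper ``repair'' an optimal solution for $\pagerequestsequence$ into one for $\sigma$ phase by phase, bottom-up, charging each costly $\pageset$-phase to a retrieval of a page in $\pageset$ by the optimal solution. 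Because each page lies in at most $h$ sets, the total blow-up is $h$. Your single-representative rule does not give you this nesting property, which is why your charging sketch stalls: when you come back to $\pageset$ after visiting a disjoint set, you have no link between the page you now emit for $\pageset$ and the page you were emitting for its child. If you want to salvage your approach, you will need per-node memory like the paper's preferred-child pointers, not a single global $\rho$.
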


\begin{proof}
  Let \algA be any (possibly online, possibly randomized)  $f(k)$-approximation algorithm for \Paging.
  Let \PageLaminarPaging instance $\pagerequestsequence$ be the input to algorithm \algB.
  For any time step $t$ and set $\pageset\in\pagesetfamily$, let $\prevc \pageset t$ denote the child of $\pageset$ whose subtree
  contains $\pagerequestsequence$'s most recent request to a proper descendant of $\pageset$.
  This is the child $c$ of $\pageset$ such that $\pagerequest_{t'} \subseteq c$,
  where $t'=\max\{i \le t : \pagerequest_i \subset \pageset\}$.
  If there is no such request ($t'$ is undefined 
  {or $P$ is a leaf}), then define $\prevc \pageset t = \pageset$.
  Define $\prevp \pageset t$ inductively via $\prevp \pageset t = \prevp {\prevc \pageset t} t$ when $\prevc \pageset t \ne \pageset$,
  and otherwise $\prevp \pageset t$ is an arbitrary (but fixed) page in $\pageset$.
  Call $\prevc \pageset t$ and $\prevp \pageset t$ the \emph{preferred child} and \emph{preferred page} of $\pageset$ at time $t$.
  At any time, $\pageset$'s preferred page can be found by starting at $\pageset$ and tracing the path down through preferred children.

  Define a \Paging instance $\sigma$ from the given instance $\pagerequestsequence$
  by replacing each request $\pagerequest_t$ in $\pagerequestsequence$
  by its preferred page $\prevp {\pagerequest_t} t$  (so $\sigma_t = \prevp {\pagerequest_t} t$).
  Algorithm \algB just simulates \Paging algorithm \algA on input $\sigma$,
  and maintains its cache exactly as \algA does. (Note that $\sigma$ can be computed online, deterministically, in polynomial time.) 
  Algorithm \algB is correct because any solution to $\sigma$ is also a solution to $\pagerequestsequence$
  (because $\sigma_t = \prevp {\pagerequest_t} t \in \pagerequest_t$).
  And $\cost(\algB(\pagerequestsequence)) = \cost(\algA(\sigma))$.
  To finish proving the lemma, we show $\opt(\sigma) \le h\, \opt(\pagerequestsequence)$.

  For any requested set $\pageset$,
  define a \emph{$\pageset$-phase} of $\pagerequestsequence$ to be a maximal contiguous interval $[i, j] \subseteq [1, T]$
  such that 
  $
  {\pagerequestsequence_t} \not\subset \pageset$ for $t \in [i+1, j]$.
  The $\pageset$-phases for a given $\pageset$ partition $[1, T]$.
  Each $\pageset$-phase $[i, j]$ (except possibly the first, with $i=1$) starts with a request to a proper descendant of $\pageset$,
  but there are no such requests during $[i+1, j]$.
  It follows that $\prevc \pageset i$ and $\prevp \pageset i$ remain the preferred child and page of $\pageset$ throughout the phase,
  that the preferred child $\prevc \pageset i$ of $\pageset$ also has the same preferred page $\prevp \pageset i$ throughout the phase,
  and that $\pageset$-phase $[i, j]$ is contained in some $\prevc \pageset i$-phase.
  By definition of $\sigma$,  each request to $\pageset$ in the given instance $\pagerequestsequence$ in interval $[i, j]$
  is replaced in $\sigma$ by a request to $\pageset$'s preferred page, $\prevp \pageset i$.

\myparagraph{Example.}
Consider the laminar family $\pagesetfamily$ in Figure~\ref{fig: page laminar example}. Consider also a 
request sequence $\pi$ whose requests $\pi_{61},\pi_{62},\ldots,\pi_{79}$ are shown in the table below:
%


{\small
    \[\arraycolsep=3.1pt
    \begin{array}{r c c c c c c c c c c c c c c c c c c c c c}
        \text{time step}: & \cdots & 61 & 62 & 63 & 64 & 65 & 66 & 67 & 68 & 69 & 70 & 71 & 72 & 73 & 74 & 75 & 76 & 77 & 78 & 79 & \cdots
        \\
        \text{sequence } \pi: & \cdots & A & B & H & E & C & K & A & I & D & E & M & B & H & A & D & E & G & A & F & \cdots
        \\ \hline
        E\text{-phases}: & \cdots & - & - & H & - & - & K & - & I & - & - & M & - & H & - & - & - & G & - & F & \cdots
        \\
        F\text{-phases}: & \cdots & - & - & H & - & - & - & - & I & - & - & M & - & H & - & - & - & - & - & - & \cdots
        \\
        \text{sequence } \sigma: & \cdots & a & a & e & e & c & l & c & h & d & h & g & a & g & a & d & g & l & d & g & \cdots
    \end{array}
    \]
}


  \noindent
The third row in the table shows $E$-phases, marking the beginning of each phase with the request at that step.
The fourth row shows $F$-phases. 
Assume that at time $61$ the preferred page of each set in $\pagesetfamily$ is the leftmost page in the leftmost leaf of its subtree. For example, 
$p_{61}(E) = e$. Then the fifth row shows the sequence of preferred pages that forms the resulting sequence $\sigma$.


  \begin{figure}
    \centering\framebox{\parbox{0.97\textwidth}{\setlength{\parindent}{1em}
        \begin{steps}
          \step Initialize the current instance $\pagerequestsequence'$ and current solution $\C'$ to the given instance $\pagerequestsequence$ and its solution $\C$.

          \step Incrementally modify $\pagerequestsequence'$ and $\C'$ by \emph{repairing} each phase, as follows.

          \step
          While there is an unrepaired phase,
          choose any unrepaired $\pageset$-phase $[i, j]$
          such that all proper descendants of $\pageset$ have already been repaired,
          then repair the chosen phase as follows:
          \begin{steps}
            \step
            Modify the current instance $\pagerequestsequence'$ by
            replacing each request to $\pageset$ during $[i,j]$ in $\pagerequestsequence'$ by a request to $\pageset$'s preferred page $p = \prevp \pageset i$.
            (So, after all phases are repaired, the current instance $\pagerequestsequence'$ will equal $\sigma$.)
            
            \step\label{step: repair solution}
            Modify the current solution $\C'$ during $[i, j]$ accordingly,
            to ensure that $\C'$ continues to satisfy $\pagerequestsequence'$.
            To do that, we will establish a stronger property throughout $[i, j]$, namely:
            \emph{whenever $\C'$ has at least one page in $\pageset$ cached, $\C'$ has $p$ cached.}
			
            Say that time $t\in [i, j]$ \emph{needs repair}
            if, at time $t$, $\C'$ caches at least one page in $\pageset$, but not $p$.
            For $t\gets i, i+1, \ldots, j$, if time $t$ needs repair,  modify what $\C'$ caches at time $t$
            by replacing one of its currently cached pages $q_t\in \pageset$ by $p$,
            where $q_t$ is defined greedily as follows
            \begin{align*}
              q_t  = \begin{cases}
                q_{t-1} & \textit{if \(q_{t-1}\) is defined and still cached at time \(t\)} \\
                \textit{any page in \(\pageset\) cached at time \(t\)}& \textit{otherwise.}
              \end{cases}
            \end{align*}
            This completes the repair of this $\pageset$-phase $[i, j]$.
            The algorithm terminates after it has repaired all phases.
          \end{steps}
        \end{steps}
      }
    }
    \caption{The algorithm that transforms $(\pagerequestsequence, \C)$ into $(\sigma, \C')$,
      by repairing each phase.}\label{fig: repair}
  \end{figure}


\smallskip

  To prove that $\opt(\sigma) \le h\, \opt(\pagerequestsequence)$, we will start with an optimal
  solution for $\pagerequestsequence$ and convert it into a solution of $\sigma$ while
  increasing the cost at most by a factor of $h$.
  So let $\C=({\C_1}, \ldots, \C_T)$ be an optimal solution for $\pagerequestsequence$. The
  conversion of $\C$ into a solution for $\sigma$ is given in Figure~\ref{fig: repair}, and is described as 
  an algorithm that incrementally modifies, or ``repairs'', both $\C$ and $\pagerequestsequence$,  phase by phase.
  It  maintains the invariant that the current solution, denoted $\C'$,
  is always correct for the current instance, denoted $\pagerequestsequence'$.
  (In $\pagerequestsequence'$, some requests will be to a page, rather than a set.
  Any such request is satisfied only by having the requested page in the cache.)
  At the end the modified instance $\pagerequestsequence'$ will equal $\sigma$,
  so that the modified solution $\C'$ will be a correct solution for $\sigma$.

  Specifically, we will show the following claim (whose proof we postpone): 


  \begin{claim}\label{claim: correctness invariant}
    The repair algorithm maintains the invariant
    that the current solution $\C'$ is correct for the current instance $\pagerequestsequence'$,
    so at termination $\C'$ is a correct solution for $\sigma$.
  \end{claim}
 
  Next we bound the cost, as follows.  Call a phase \emph{costly} if its repair increases the cost of $\C'$,
  and \emph{free} otherwise.  We show that the number of costly phases is at most $(h-1)\cost(\C)$,
  and that the repair of each phase increases the cost of $\C'$ by at most 1.
  This implies that the final cost of $\C'$  is at most $\cost(\C) + (h-1)\cost(\C) = h\,\cost(\C)$, as desired.
  Specifically, we will show the following claims (whose proofs we postpone): 


  \begin{claim}\label{claim: costly one}
    For any requested set $\pageset$, the repair of any $\pageset$-phase $[i, j]$ increases the cost of $\C'$ by at most $1$,
    and only if $j\ne T$.
  \end{claim}

  
  \begin{claim}\label{claim: costly number}
    For any non-leaf set $\pageset$,
    the number of costly $\pageset$-phases is at most the cost paid by $\C$ for pages in $\pageset$
    (that is, the number of retrievals of pages in $\pageset$ by $\C$).
  \end{claim}
  
  By the definition of $\pageset$-phases, each leaf set $\pageset$ has only one $\pageset$-phase $[1, T]$,
  so by Claim~\ref{claim: costly one} only non-leaf sets have costly phases.
  Each page $p$ is in at most $h-1$ non-leaf sets $\pageset$,  so Claim~\ref{claim: costly number}
  implies that the total number of costly phases is at most $(h-1)\cost(\C)$.
  This and Claim~\ref{claim: costly one} imply that the final cost of $\C'$ is at most $h\, \cost(\C) = h\,\opt(\pagerequestsequence)$,
  proving Lemma~\ref{lemma: page laminar}. 

  It remains only to prove the three claims.
  
  \begin{proof}[Proof of Claim~\ref{claim: correctness invariant}.]
    The invariant holds initially when $\C' = \C$ and $\pagerequestsequence' = \pagerequestsequence$,
    just because by definition $\C$ is an (optimal) solution for $\pagerequestsequence$.
    Suppose the invariant holds just before the repair of some $\pageset$-phase $[i, j]$.
    We will show that it continues to hold after. The repair modifies $\pagerequestsequence'$
    by replacing each request to $\pageset$ during $[i, j]$ by a request to its preferred page $p=\prevp \pageset i$.
	
    Consider any time $t\in [i, j]$. First consider the case that (before the repair)
    $\pagerequestsequence'$ requested $\pageset$ at time $t$.  In this case, $\C'$ cached some page in $\pageset$,
    so (by Step~\ref{step: repair solution} of the repair algorithm) after the repair $\C'$ has the preferred page $p$ cached,
    and thus $\C'$ satisfies the modified request (for $p$).
	
    The other case is when $\pagerequestsequence'$ requested page at time $t$ is not $\pageset$. 
	Then the repair doesn't modify the request in $\pagerequestsequence'$ at time $t$.
    In this case, either the repair doesn't modify the cache at time $t$
    (in which case $\C'$ continues to satisfy $\pagerequestsequence'$),
    or the repair replaces some page $q_t$ in the cache by $p$. If that happens, the page $q_t$ is also in $\pageset$.
    Also, the request in $\pagerequestsequence$ at time $t$ cannot be to a proper descendant of $\pageset$
    (by definition of $\pageset$-phase), so the request in $\pagerequestsequence'$ at time $t$
	is either to an ancestor of $\pageset$, a set disjoint from $\pageset$, or an already repaired page not in $\pageset$.
    (We use here that no proper ancestors of $\pageset$ have yet had their phases repaired.)
    In all three cases, after swapping $p$ for $q_t$ (with $p,q_t\in \pageset$), the request must still be satisfied.
    So the invariant holds after the repair. At termination the invariant holds so $\C'$ is correct for $\sigma$. 
  \end{proof}

  \begin{proof}[Proof of Claim~\ref{claim: costly one}]
    Consider the repair of any $\pageset$-phase $[i, j]$ for any requested set $\pageset$.
    This repair modifies the cache only at times in $[i, j]$.
    Recall that the cost of $\C'$ at time $t$ is the number of retrievals at time $t$,
    where a retrieval is a page that $\C'$ caches at time $t$ but not at time $t-1$.

    At each time $t\in [i, j]$ that needed repair (as defined in Step~\ref{step: repair solution} of the algorithm),
    the repair of the phase replaced some page $q_t$ in the cache at time $t$ by the preferred page $p = \prevp{\pageset}{i}$.
    This can increase the cost only at times in $[i, j+1]$, and by at most $1$ at each such time.

    We claim the cost cannot increase at time $i$. Indeed, in the case that $i=1$,
    the cost at time $i$ equals the number of pages cached at time $i$,  which a repair at time $1$ doesn't change.
    In the case that $i>1$,
    at time $i$ no repair is done, because $\pagerequestsequence$ requests a proper descendant $d$ of $\pageset$,
    and the $d$-phase containing $[i, j]$ has already been repaired,
    so $\pagerequestsequence'$ requests $\prevp d i$ at time $i$, and now our invariant
	implies that $\C'$ already caches $\prevp d i$ at time $i$.
    But by definition $\prevp{\pageset}{i} = \prevp d i$, so $\C'$ already caches $\pageset$'s preferred page $p$ at time $i$.
    Thus the cost cannot increase at time $i$.

    Next, consider any time $t\in [i+1, j]$. To prove the claim, we show that the repair didn't increase the cost at time $t$.
    The repair can introduce up to two new retrievals at time $t$:
    a new retrieval of $p$,  and/or a new retrieval of $q_{t-1}$.
    We show that, for each retrieval that the repair introduced, it removed another one.

    Suppose it introduced a new retrieval of $p$ at time $t$. 
    That is, it replaced $q_t$ by $p$ in the cache at time $t$
    and (after modification) $\C'$ doesn't cache $p$ at time $t-1$.
    The latter property (by inspection of Step~\ref{step: repair solution} of the algorithm)
    implies that $\C'$ caches no pages in $\pageset$ at time $t-1$ (before or after modification).
    It follows that the repair removed one retrieval of $q_t$ at time $t$.

    Now suppose the repair introduced a new retrieval of $q_{t-1}$ at time $t$. 
    That is, it removed $q_{t-1}$ from the cache at time $t-1$ (replacing it by $p$)
    and (after modification) $\C'$ caches $q_{t-1}$ at time $t$.
    The latter property implies that time $t$ did not need repair
    (because if it did the repair would have taken $q_t=q_{t-1}$).
    So $\C'$ cached $p$ at time $t$ (before and after modification).
    Thus the repair removed a retrieval of $p$ at time $t$.

    Overall, for each retrieval introduced at times in $[i,j]$, another was removed, and therefore the cost at
	times in $[i,j]$ didn't increase. The cost increase at time $j+1$, if any, can
	only be caused by the repair at time $t=j$ that replaced 
	$q_j$ by $p$, so this increase is at most $1$. This completes the proof of the claim.
  \end{proof}

  \begin{proof}[Proof of Claim~\ref{claim: costly number}]
    Fix any non-leaf set $\pageset$. First consider the repair of any $\pageset$-phase $[i, j]$ with $i>1$.
    Let $c=\prevc \pageset i \ne \pageset$ and $p=\prevp \pageset i = \prevp c i$
    be the preferred child and page throughout the phase.  When the repair starts,
    the $c$-phase containing $[i, j]$ has already been repaired.
    By inspection, that repair established the following property of $\C'$ throughout $[i, j]$:
    \emph{whenever $\C'$ has at least one page in $c$ cached, $\C'$ has $c$'s preferred page $p$ cached.}
    None of the ancestors of $c$ (including $\pageset$) have had their phases repaired since then,
    so this property still holds just before the repair of $\pageset$.

Since $i>1$, the definition of $\pageset$-phases implies that at time $i$
    the instance $\pagerequestsequence$ requests a descendant of $c$,  so $\C$ caches at least one page $p'$ in $c$.
    Suppose that $\C$ doesn't evict $p'$ during $[i+1, j]$.
    Then, \emph{at every time during $[i, j]$, $\C$ caches at least one page in $c$.}
    Each repair for $c$ or a descendant of $c$ preserves this property
    (because such a repair only replaces cached pages in $c$ by other pages in $c$).
    Throughout $[i, j]$, then, $\C'$ also caches at least one page in $c$
    and, by the previous paragraph, must cache $\pageset$'s preferred page $p$.
    (Recall that $\pageset$ and $c$ have the same preferred page throughout $[i, j]$.)
    In this case, by inspection of the algorithm the repair of this $\pageset$-phase does not change $\C'$,
    and the phase is not costly.  We conclude that, for the phase to be costly, $\C$ must evict $p'$ during $[i+1, j]$.

    Note that $p'\in c\subset \pageset$. Also, by Claim~\ref{claim: costly one}, this is not the final $\pageset$-phase (that is, $j<T$).
    It follows that the number of costly $\pageset$-phases $[i,j]$ with $i>1$
    is at most the number of evictions of pages in $\pageset$ by $\C$,  before the final $\pageset$-phase (with $j=T$).

    Regarding the $\pageset$-phase $[i, j]$ with $i=1$, if it is costly, then $j<T$
    and, by the reasoning in the paragraph before last,
    in the final $\pageset$-phase, there is a page $p'$ in $\pageset$ that $\C$ either evicts  or leaves in the cache at time $T$.

    By the above reasoning, the number of costly $\pageset$-phases  is at most the number of evictions by $\C$ of pages in $\pageset$,
    plus the number of pages left cached by $\C$ at time $T$.
    This sum is the number of retrievals of pages in $\pageset$ by $\C$,  proving Claim~\ref{claim: costly number}.
  \end{proof}

As explained earlier, Claims~\ref{claim: correctness invariant},~\ref{claim: costly one} and~\ref{claim: costly number}
imply the lemma, thus the proof is now complete.
\end{proof}



\section{\SlotLaminarPaging}%
\label{sec: slot laminar paging}

In this section we prove upper bounds for \SlotLaminarPaging given in
Table~\ref{table: results}.  Recall that in \SlotLaminarPaging the family
$\slotsetfamily$ is assumed to be a laminar family of slot sets whose
height we denote by $h$.  Theorem~\ref{thm: slot laminar} bounds the
optimal ratios by $3h^2k$ (deterministic), $3h^2 H_k$ (randomized) and
$3h^2$ (offline polynomial-time approximation).  The proof of
Theorem~\ref{thm: slot laminar} (Section~\ref{sec: slot laminar}) is
by a reduction of \SlotLaminarPaging to \PageLaminarPaging, 
studied in Section~\ref{sec: page laminar paging}.  Theorem~\ref{thm:
  laminar upper bound}, presented in Section~\ref{sec: laminar
  k-server}, tightens the deterministic upper bound to $2hk$.


\subsection{Upper bounds for randomized and offline  \SlotLaminarPaging}%
\label{sec: reduction of slot laminar to paging}

{}\label{sec: slot laminar}


\begin{theorem}\label{thm: slot laminar}
  \SlotLaminarPaging admits the following polynomial-time algorithms:
  a deterministic online $3h^2 k$-competitive algorithm, 
  a randomized online $3h^2 H_k$-competitive algorithm, and
  an offline $3h^2$-approximation algorithm.
\end{theorem}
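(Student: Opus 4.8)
The plan is to reduce \SlotLaminarPaging to \PageLaminarPaging, losing a factor of $O(h)$ in the competitive ratio, and then invoke Theorem~\ref{thm: page laminar}. Since that theorem supplies an $hk$-competitive deterministic online algorithm, an $h H_k$-competitive randomized online algorithm, and an offline $h$-approximation for \PageLaminarPaging, a reduction that loses a factor of $3h$ and preserves being online, deterministic, and polynomial-time yields exactly the three bounds $3h^2 k$, $3h^2 H_k$, and $3h^2$ claimed in Theorem~\ref{thm: slot laminar}. After trivial preprocessing we may assume every slot lies in some set of $\slotsetfamily$, since a slot belonging to no requestable set can never help and may be ignored.

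The reduction has two parts. The first is a \emph{relaxation}: from a \SlotLaminarPaging instance $\instance$ form the instance $\tilde\instance$ obtained by dropping the constraint that each slot holds at most one page, while still requiring the cache to hold at most $k$ page-copies in total; a request $\requestpair{p}{S}$ is still satisfied by having a copy of $p$ in some slot of $S$. Every genuine solution is a relaxed solution, so $\opt(\tilde\instance)\le\opt(\instance)$. Now $\tilde\instance$ is equivalent to a \PageLaminarPaging instance with a laminar family of height $h$ and cache size $k$: take as ``pages'' the pairs $(p,j)$ with $p$ an original page and $j\in[k]$, identify a relaxed configuration with the set of pairs $(p,j)$ such that $p$ is assigned to slot $j$ (so the capacity $k$ carries over), and replace each request $\requestpair{p}{S}$ by the page-set $\{(p,j):j\in S\}$. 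Because $\slotsetfamily$ is laminar, the family $\{\,\{(p,j):j\in S\} : p\text{ a page},\, S\in\slotsetfamily\,\}$ is laminar of the same height $h$, so this is a legitimate \PageLaminarPaging instance. Running the algorithm of Theorem~\ref{thm: page laminar} on it and reading the output back produces, online/deterministically/in polynomial time as appropriate, a relaxed solution $\tilde\C$ for $\tilde\instance$ with $\cost(\tilde\C)\le g(k)\cdot\opt(\tilde\instance)+O(1)\le g(k)\cdot\opt(\instance)+O(1)$, where $g(k)\in\{hk,\;hH_k,\;h\}$.

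The second and harder part is \emph{rounding}: convert a relaxed solution $\tilde\C$ of cost $c$ into a genuine \SlotLaminarPaging solution $\C$ of cost at most $3h\cdot c$, online. The idea is to have $\C$ track $\tilde\C$ region by region along the forest of $\slotsetfamily$: for each node $S$, look at the pages that $\tilde\C$ caches in a slot of $S$ but in no slot of a proper descendant of $S$, and place them in the ``own'' slots of $S$ (those of $S$ used by no descendant). When $\tilde\C$ fetches a page this may overfill some region $S$ (more such pages than own-slots of $S$); repair it by pushing one page from $S$ up into its parent's region, which may overfill the parent, and so on up the tree — a cascade of at most $h$ intra-cache moves; symmetrically, when $\tilde\C$ evicts a page a region may underfill and pull a page back down, another cascade of length at most $h$. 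Counting an intra-cache move as a retrieval and charging both cascades to the single relaxed retrieval that triggered them gives $\cost(\C)\le 3h\cdot c$, and the same bookkeeping shows $\C$ is produced online and deterministically, in polynomial time since the forest has fewer than $2k$ nodes. Composing with the previous paragraph gives $\cost(\C)\le 3h\cdot g(k)\cdot\opt(\instance)+O(1)$, i.e.\ the three stated ratios.

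The main obstacle is this rounding step: one must pin down exactly which page to promote or demote in each cascade so that (i) $\C$ remains a valid configuration satisfying the current request — in particular, a copy of $p$ promoted out of region $S$ must never be the one a later request $\requestpair{p}{S'}$ needs with $S'$ a descendant of $S$ — and (ii) the amortized number of retrievals of $\C$ per retrieval of $\tilde\C$ is $O(h)$. The structural fact that makes both possible is that $\tilde\C$ itself keeps, at every time, a copy of $p$ inside any region $S$ for which it currently satisfies a request $\requestpair{p}{S}$; hence $\C$ can mimic $\tilde\C$'s per-region placement and only ever needs to shuffle overflow along root-to-leaf paths, whose length is bounded by $h$.
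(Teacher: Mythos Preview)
The relaxation step---viewing the instance as \PageLaminarPaging over virtual pages $(p,j)$---is correct and matches the paper. The gap is in your rounding step: the claim that any relaxed solution $\tilde\C$ of cost $c$ can be converted to a genuine \SlotLaminarPaging solution of cost at most $3h\cdot c$ is false. Take $k=2$, $\slotsetfamily=\{\{1\},\{1,2\}\}$ (so $h=2$), and alternate the requests $\requestpair{p_1}{\{1\}},\requestpair{p_2}{\{1\}},\requestpair{p_1}{\{1\}},\ldots$ for $T$ steps. The relaxed instance has optimal cost $2$: cache the two virtual pages $(p_1,1)$ and $(p_2,1)$ once and never evict them---and the algorithm of Theorem~\ref{thm: page laminar} will do exactly this. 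But every genuine solution must swap $p_1$ and $p_2$ in slot~$1$ at each step, paying $T$. No rounding can turn a cost-$2$ relaxed solution into a cost-$O(1)$ genuine one; the ratio is unbounded. Your cascade bookkeeping only accounts for moves triggered by relaxed \emph{retrievals}, but here the relaxed solution stops retrieving while the genuine one cannot. The ``structural fact'' you invoke does not help: the relaxed solution does keep a copy of $p_2$ ``in'' region $\{1\}$, but that copy collides with $p_1$, and you never specify a promotion rule that resolves such collisions without breaking correctness on the very next request.

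The paper sidesteps this by \emph{not} rounding a single global relaxation. Instead it runs the \PageLaminarPaging algorithm separately on the relaxation of each subinstance $\sigma_S$ (the requests to subsets of $S$), once for every $S\in\slotsetfamily$, giving each run its own cache of size $|S|$. These per-node solutions are then stitched into one \SlotLaminarPaging solution via a slot-alignment invariant; each retrieval by any $\algA(\pi_S)$ triggers at most three retrievals at the root (a rotation trick). The factor $h$ arises not from cascade length but from the inequality $\sum_{S\in\slotsetfamily}\opt(\sigma_S)\le h\cdot\opt(\sigma)$, which holds because the sets at each level of the laminar family are disjoint. Running one instance per node is exactly what guarantees that the solution for $S$ caches at most $|S|$ virtual pages---the per-region capacity constraint that a single global relaxation discards and that your rounding cannot recover.
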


Our focus here is on uniform treatment of the three variants of
\SlotLaminarPaging in the above theorem.   The ratios in this
theorem have not been optimized. For example, in Section~\ref{sec:
  laminar k-server} we give a better deterministic algorithm.  For the
special case when $h=2$ the problem can be reduced to \AllOrOnePaging,
for which the ratio can be improved even
further~\cite{castenow+fkmd:server}.

The proof of Theorem~\ref{thm: slot laminar} is by a reduction of
\SlotLaminarPaging to \PageLaminarPaging, presented in Lemma~\ref{lemma: slot laminar} below.
The reduction uses a relaxation of \SlotLaminarPaging that relaxes the
constraint that each slot hold at most one page (but still enforces
the cache-capacity constraint), yielding an
instance of \PageLaminarPaging.  The reduction simulates the given
\PageLaminarPaging algorithm on multiple instances of
\PageLaminarPaging --- one for each set $S\in\slotsetfamily$, obtained
by relaxing the subsequence that contains just those requests
contained in $S$ --- then aggregates the resulting \PageLaminarPaging
solutions to obtain the global \SlotLaminarPaging solution.
Lemma~\ref{lemma: slot laminar} and Theorem~\ref{thm: page laminar}
(for \PageLaminarPaging) immediately imply Theorem~\ref{thm: slot laminar}.


\begin{lemma}\label{lemma: slot laminar} 
  Every $f_h( k)$-approximation algorithm \algA for \PageLaminarPaging
  can be converted into a $3h f_h( k)$-approximation algorithm \algB
  for \SlotLaminarPaging, preserving the following properties: being
  polynomial-time, online, and/or deterministic.
\end{lemma}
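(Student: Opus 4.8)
\noindent\emph{Proof plan.}
The plan is to follow the ``relax, solve, round'' route sketched before the statement. Fix a \SlotLaminarPaging instance $(k,\slotsetfamily,\slotrequestsequence)$ and view $\slotsetfamily$ as a forest (adding the virtual root $[k]$ and each missing singleton $\{j\}$ as a leaf, which changes nothing). Define a \emph{relaxed} problem in which a configuration no longer pins pages to individual physical slots: it only attributes each cached page to a node of the forest, subject to the hierarchy of capacity constraints ``for every $S\in\slotsetfamily$, at most $|S|$ cached pages are attributed to the subtree rooted at $S$'' (so in particular at most $k$ pages overall), and a request $\requestpair pS$ counts as satisfied once some copy of $p$ is attributed to the subtree of $S$. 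Since every genuine slot assignment is a relaxed configuration, the optimum of the relaxed instance is at most $\opt(\slotrequestsequence)$. For each $S\in\slotsetfamily$ let $\slotrequestsequence_S$ be the subsequence of requests $\requestpair p{S'}$ with $S'\subseteq S$, regarded as an instance over the subtree of $S$ with cache $|S|\le k$; one checks (this is the ``natural equivalence'' alluded to above) that the relaxation of $\slotrequestsequence_S$ is an instance of \PageLaminarPaging with a laminar page-family of height $\le h$, so running $\algA$ on it yields a relaxed solution of cost at most $f_h(k)\cdot\opt(\slotrequestsequence_S)+O(1)$.

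Next I would aggregate the per-$S$ solutions into one genuine \SlotLaminarPaging solution, processing the forest top-down: having fixed which pages occupy the slots of a node $S$, use $\algA$'s solution for each child $S_i$ to decide which pages occupy the slots of $S_i$, inserting/evicting pages exactly as that child's solution dictates. Correctness is immediate: a request $\requestpair pS$ is served because $\algA$'s solution for the subinstance at $S$ keeps $p$ inside the relaxed ``subtree of $S$'', and the rounding realises this as an actual slot of $S$. For the cost, I would first note $\sum_{S\in\slotsetfamily}\opt(\slotrequestsequence_S)\le h\,\opt(\slotrequestsequence)$: restricting an optimal solution for $\slotrequestsequence$ to the slots of $S$ and to the timesteps of $\slotrequestsequence_S$ gives a feasible solution for $\slotrequestsequence_S$ whose cost is at most the number of retrievals the optimal solution makes into slots of $S$, and a retrieval into slot $j$ is charged only to the $\le h$ sets $S\in\slotsetfamily$ containing $j$. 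Hence $\sum_S\cost(\algA(\slotrequestsequence_S)) \le h f_h(k)\opt(\slotrequestsequence)+O(k)$. Then, via a repair/charging argument in the spirit of Claims~\ref{claim: correctness invariant}--\ref{claim: costly number}, I would show each retrieval of the final rounded solution can be charged, with $O(1)$ multiplicity, to a retrieval in $\algA$'s solution for one of those $\le h$ subinstances, giving $\cost_{\algB}(\slotrequestsequence)\le 3h\,f_h(k)\,\opt(\slotrequestsequence)+O(1)$. Online-ness, determinism, and polynomial running time are all preserved because the maps $\slotrequestsequence\mapsto\slotrequestsequence_S$, the simulation of $\algA$, and the top-down rounding at each step are causal and (given $\algA$) deterministic and efficient.

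The main obstacle is the rounding step and its cost bound. The delicate point is to keep the charging at each level of multiplicity $O(1)$ rather than $O(h)$ --- so that summing over the $\le h$ levels produces the factor $h$ and not $h^2$. This calls for a careful invariant relating, at every time $t$ and every node $S$, the set of pages the rounded solution holds in the slots of $S$ to the set the subinstance solution for $S$ keeps attributed to the subtree of $S$, together with a proof that a temporary violation of the invariant can always be mended with at most one extra retrieval at the level of $S$, and that each such violation is itself paid for by a retrieval performed by $\algA$. Pinning down this invariant, and verifying the equivalence of the relaxed subinstances with \PageLaminarPaging, are where the real work lies; the relaxation bound, the averaging inequality $\sum_S\opt(\slotrequestsequence_S)\le h\,\opt(\slotrequestsequence)$, and the bookkeeping for preservation of the three properties are routine.
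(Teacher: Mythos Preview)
Your plan is essentially the paper's route: relax to \PageLaminarPaging, run \algA on the subinstance $\slotrequestsequence_S$ for every $S\in\slotsetfamily$, use $\sum_S\opt(\slotrequestsequence_S)\le h\,\opt(\slotrequestsequence)$, and round back with an $O(1)$-per-operation invariant. The part you flag as ``where the real work lies'' is exactly where the paper's content is, and two of its choices differ slightly from your sketch. First, the relaxation uses virtual pages $v(p,s)$ indexed by \emph{individual slots} $s$ (not by nodes of the forest): the request $\requestpair p S$ becomes the page set $\{v(p,s):s\in S\}$, and these sets form a page-laminar family of the same height $h$ --- this is what makes the equivalence with \PageLaminarPaging immediate. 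Second, the rounding is not a single top-down pass; the paper maintains a genuine slot-assigned solution $\algB(\slotrequestsequence_S)$ for \emph{every} $S$ simultaneously, coupled by the invariant that whenever both $\algB(\slotrequestsequence_S)$ and a child $\algB(\slotrequestsequence_c)$ cache the same virtual page $v(p,s)$, they hold it in the same physical slot. When some $\algA(\pagerequestsequence_S)$ performs a retrieval, the invariant is restored by ``rotating'' the contents of at most three slots in $\algB(\slotrequestsequence_S)$ \emph{and in every ancestor}; only the rotations at the root count toward the output cost, so each retrieval by any $\algA(\pagerequestsequence_S)$ triggers at most $3$ retrievals in $\algB(\slotrequestsequence_R)$ --- that is where the constant $3$ in $3hf_h(k)$ comes from. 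Your ``one extra retrieval at the level of $S$'' is thus off in two ways (it is three, and what matters is the root), but the overall charging picture you describe is the right one.
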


\begin{proof}
  We first define the \PageLaminarPaging \emph{relaxation} of a given \SlotLaminarPaging instance.
  The idea is to relax the constraint that each slot can hold at most one page,
  while keeping the cache-capacity constraint. The relaxed problem is equivalent to a \PageLaminarPaging instance
  over ``virtual'' pages $v(p, s)$ corresponding to page/slot pairs $(p,s)$.
  This virtual page  can be placed in any slot, although it represents page $p$ being in slot $s$.  

Formally, this relaxation is defined as follows.
    Fix any $k$-slot \SlotLaminarPaging instance $\slotrequestsequence=(\slotrequest_1, \ldots, \slotrequest_T)$
    with requestable slot-set family $\slotsetfamily$.  
	For any page $p$ and $\slotset\in\slotsetfamily$,
    define $V(p, \slotset) = \{v(p, s) \suchthat s\in \slotset\}$,  where $v(p, s)$ is a \emph{virtual page} for the pair $(p, s)$.
	Define the \emph{relaxation} of $\slotrequestsequence$
    to be the $k$-slot  \PageSubsetPaging  instance $\pagerequestsequence=(\pagerequest_1, \ldots, \pagerequest_T)$
    defined by   $\pagerequest_t = V(p_t, \slotset_t)$
    (where $\slotrequest_t = \slotrequestpair {p_t} {\slotset_t}$, for $t\in [T]$).
    The requestable-set family for $\pagerequestsequence$ is
    $\pagesetfamily = \big\{V(p, \slotset) \suchthat p \text{ is any page and } \slotset\in\slotsetfamily\big\}$.
    Crucially, if $\slotsetfamily$ is slot-laminar with height $h$,
    then $\pagesetfamily$ is page-laminar with the same height $h$.
 
  Instance $\pagerequestsequence$ is a relaxation of $\slotrequestsequence$
  in the sense that for any solution $\C$ for $\slotrequestsequence$ there is a solution $\C'$ for $\pagerequestsequence$
  with $\cost(\C') \le \cost(\C$).
  Namely, this $\C'$
 mimics
 $\C$ by keeping in its cache the virtual pages $v(p, s)$ such that $\C$ has page $p$ cached in slot $s$.
 It follows that $\opt(\pagerequestsequence) \le \opt(\slotrequestsequence)$.

\begin{figure}[t]
\centering
\includegraphics[width = 5in]{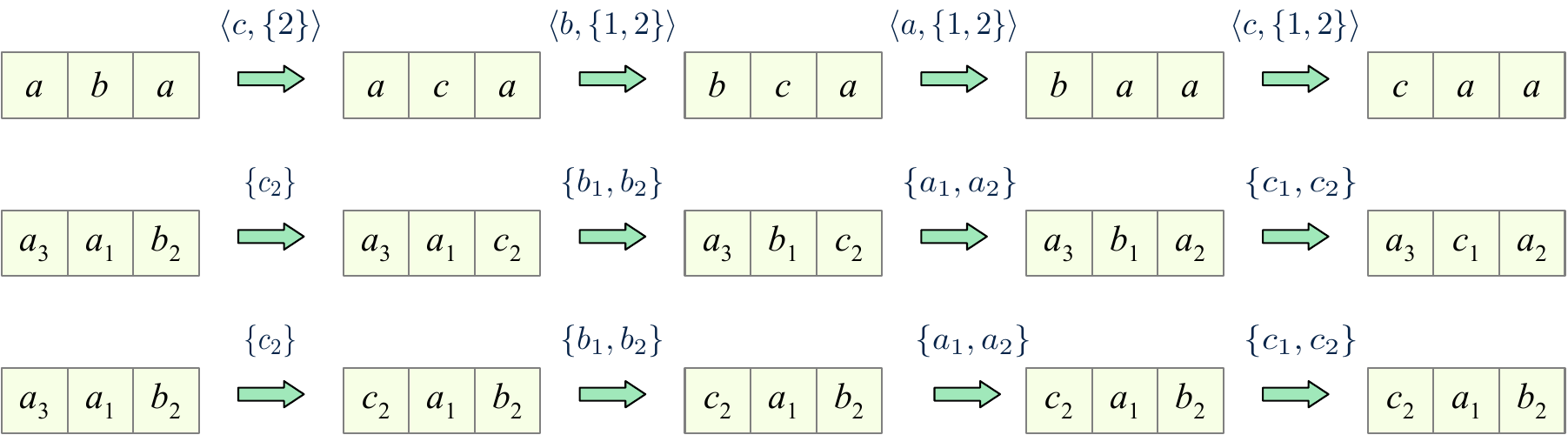}
\caption{At the top, an instance of \SlotLaminarPaging and its solution $\C$.
In the middle row, the corresponding solution $\C'$ of the relaxed instance.
The bottom row shows a cheaper solution of the relaxed instance.
}\label{fig: laminar example}
\end{figure}

  \myparagraph{Example.}
  We illustrate these concepts with a simple example for a cache of size $k=3$
  (see Figure~\ref{fig: laminar example}).
  Let the laminar family be $\slotsetfamily = \braced{\braced{1},\braced{2},\braced{3},\braced{1,2},\braced{1,2,3}}$.
  Suppose that the initial cache has page $a$ in slots $1$ and $3$ and page $b$ in slot $2$.
  To simplify notation for virtual pages, in this example we write $p_s$ instead of $v(p,s)$.
  Suppose also that in the corresponding relaxed instance the cache has initially virtual pages $a_1,b_2,a_3$ (the slot assignment here is not significant). 
  Now, consider the request sequence,
  $\slotrequestsequence = (\slotrequestpair{c}{\braced{2}}, \slotrequestpair{b}{\braced{1,2}}, \slotrequestpair{a}{\braced{1,2}}, \slotrequestpair{c}{\braced{1,2}})$,
  and its solution $C$ shown in the top row in Figure~\ref{fig: laminar example}.
  This solution has cost $3$.
  The sequence $\pagerequestsequence$ that is the relaxation of $\slotrequestsequence$ is
  $\pagerequestsequence = (\braced{c_2}, \braced{b_1,b_2}, \braced{a_1,a_2}, \braced{c_1, c_2})$.
  The second row shows the solution $\C'$ for $\pagerequestsequence$ corresponding to $\C$. Its cost is also $3$.
  The bottom row shows a solution for $\pagerequestsequence$ whose cost is only $1$.

\smallskip

  For the rest of the proof, we assume that the family $\slotsetfamily$ has just one root $R$ with $|R|\le k$.
  (This is without loss of generality, as multiple roots, being disjoint,
  naturally decouple any \SlotLaminarPaging instance into independent problems, one for each root.)
  
  For each $\slotset\in\slotsetfamily$, define $\slotset$'s \SlotLaminarPaging
  \emph{subinstance} $\slotrequestsequence_\slotset$
  to be obtained from $\slotrequestsequence$ by deleting all requests that are not subsets of $\slotset$.
    Let $\pagerequestsequence_\slotset$ denote the (\PageLaminarPaging) relaxation of $\slotrequestsequence_\slotset$.

\smallskip

  Next we define the algorithm \algB.  Fix an $f_h( k)$-approximation algorithm \algA for \PageLaminarPaging.
  Fix the input $\slotrequestsequence$
  with $\slotrequest_t=\slotrequestpair {p_t} {\slotset_t}$ (for $t\in[T]$)  to \SlotLaminarPaging algorithm \algB.
  For ease of presentation we will focus on the online case: 
  we assume that Algorithm \algA is an online algorithm,
  and we convert it into Algorithm \algB that is also online.
  If \algA is offline, the reduction we give below can be naturally modified to produce \algB as an offline algorithm instead.

	The description of \algB, below, is top-down: we start with an overview that explains the 
	fundamental strategy. We then formulate an invariant that \algB needs to maintain to 
	ensure correctness. Next we provide the details, proving that \algB indeed satisfies
	this invariant. We wrap up the proof by bounding \algB's cost.

  Algorithm \algB on input $\slotrequestsequence$ executes,
  simultaneously, $\algA(\pagerequestsequence_\slotset)$ for every requestable set $\slotset\in\slotsetfamily$,
  giving each execution $\algA(\pagerequestsequence_\slotset)$ its own independent cache of size $|\slotset|$ composed of copies of the slots in $\slotset$.
  Guided by \algA, for each such $\slotset$, Algorithm \algB will internally build its own solution, denoted $\algB(\slotrequestsequence_\slotset)$, for $\slotrequestsequence_\slotset$,
  also using its own designated cache of size $|\slotset|$ composed of copies of the slots in $\slotset$.
  These solutions are not independent---the choices made for some $\slotset$ may affect the solution constructed for its ancestors.
  (This will be captured by Invariant~(I) given below.)
  We stress that all the above actions are internal to  \algB.
  The actual output produced by \algB on input $\slotrequestsequence$ for a cache of size $k$ is 
  $\algB(\slotrequestsequence_R)$. (Recall that $\slotrequestsequence_R = \slotrequestsequence$.)
  All other solutions $\slotrequestsequence_\slotset$ are used by \algB, roughly, only as a way to
  represent information about the past.


  For internal bookkeeping purposes, in presenting Algorithm \algB, for each page $p$,
  we let the virtual pages $v(p, s)$ (for every slot $s$, as defined for \PageLaminarPaging) represent copies of page $p$.
  We have \algB maintain cache configurations that place these virtual pages in specific slots,
  with the understanding that the actual cache configurations are obtained by
  replacing each virtual page $v(p, s)$ (in whatever slot it's in) by a copy of page $p$.
  So, when a virtual page $v(p, s)$ is in a slot $s'$ (with, possibly, $s'\ne s$) this satisfies any request $\slotrequestpair{p}{S'}$ with $s'\in S'$.
  We then overcount the cost by considering two copies $v(p, s)$ and $v(p', s')$
  to be distinct unless $(p', s') = (p, s)$.
  In particular, if \algB evicts $v(p, s)$ while retrieving $v(p, s')$ (with $s'\ne s$) in the same slot,
  we still charge 1 to the cost of \algB.
  We will upper bound \algB's cost overcounted in this way.

Algorithm \algB will maintain the following invariant over time:

\begin{description}\setlength{\itemsep}{-0.02in}
\item{\textbf{Invariant (I)}:} For each requestable set $\slotset$, for each virtual page $v(p, s)$ currently cached by $\algA(\pagerequestsequence_\slotset)$:
  \begin{description} 
    \item{(I1)} the solution $\algB(\slotrequestsequence_\slotset)$ caches $v(p, s)$ in some slot in $\slotset$, and
%
    \item{(I2)} if $\slotset$ has a child $c$ with $s\in c$, and $\algB(\slotrequestsequence_c)$ has $v(p, s)$ in its cache $c$, 
      then in $\algB(\slotrequestsequence_\slotset)$ copy $v(p, s)$ is in the same slot as in $\algB(\slotrequestsequence_c)$.
    \end{description}
\end{description}

As explained in Section~\ref{sec: preliminaries}, by a child of $\slotset$ in Condition~(I2) we mean
a child of $\slotset$ in the forest representation of the laminar family $\slotsetfamily$.

  The invariant above suffices to guarantee correctness of the solution $\algB(\slotrequestsequence_\slotset)$ for each instance $\slotrequestsequence_\slotset$.
  Indeed, when $\algB(\slotrequestsequence_\slotset)$ receives a request $\requestpair {p_t} {\slotset_t}$, its relaxation $\algA(\pagerequestsequence_\slotset)$
  has just received the request $\{v(p_t, s) : s\in \slotset_t\}$,
  so $\algA(\pagerequestsequence_\slotset)$ is caching a virtual page $v(p_t, s)$ (for some $s\in \slotset_t$) in $\slotset$.
  By 
  Condition~(I1), 
  then, $\algB(\slotrequestsequence_\slotset)$ also has $v(p_t, s)$ in some slot in $\slotset$.
  In the case $\slotset=\slotset_t$, this suffices for $\algB(\slotrequestsequence_\slotset)$ to satisfy the request.
  In the remaining case $\slotset$ has a child $c$ with $\slotset_t\subseteq c$,
  and $\algB(\slotrequestsequence_C)$ just received the same request,
  so (assuming inductively that $\algB(\slotrequestsequence_c)$ is correct for $\slotrequestsequence_c$)
  $\algB(\slotrequestsequence_c)$ has $v(p_t, s)$ in some slot $s'$ in $\slotset_t$,
  so by 
  Condition~(I2)
  of the invariant $\algB(\slotrequestsequence_\slotset)$ has $v(p_t, s)$
  in the same slot $s'$ in $\slotset_t$, as required.
  In particular, $\algB(\slotrequestsequence_R)$ will be correct for $\slotrequestsequence_R$.

  \begin{figure}
    \[
      \begin{array}{@{}r|c|c|c|} 
        \multicolumn 1 r {}& \multicolumn 3 c {\text{before}} \\ \cline{2-4}
        \multicolumn 4 r {} \\[-1em]
        \multicolumn 1 r {}
                             & \multicolumn 1 c {\textit{slot } s_1}
                             & \multicolumn 1 c {\textit{slot } s'}
                             & \multicolumn 1 c {\textit{slot } s_2} \\ \cline{2-4}
        \textit{root } R: 
                             & x_1 & y_1 & z_1 \\ \cline{2-4}
                             & \vdots & \vdots & \vdots \\ \cline{2-4}
                             & x_i & y_i & z_i \\ \cline{2-4} 
        \textit{parent }\algB(\slotrequestsequence_{P}): 
                             & x_{i+1} & y_{i+1} & v(p, s)  \\ \cline{2-4}
        \algB(\slotrequestsequence_\slotset): 
                             & x_{i+2} & v(p, s) & \color{gray} z_{i+2}\\ \cline{2-4} 
        \textit{child } \algB(\slotrequestsequence_c): 
                             & \color{gray} v(p, s) &  \color{gray} y_{i+3} & \color{gray} z_{i+3}\\ \cline{2-4}
      \end{array}
      {~~~~~\Longrightarrow~~~~~}
      \begin{array}{|c|c|c|} 
        \multicolumn 3 c {\text{after}} \\ \hline 
        \multicolumn 3 r {} \\[-1em]
        \multicolumn 1 c {\textit{slot } s_1}
        & \multicolumn 1 c {\textit{slot } s'}
        & \multicolumn 1 c {\textit{slot } s_2} \\ \hline
        z_1 & x_1 & y_1 \\ \hline
        \vdots & \vdots & \vdots \\ \hline
        z_i & x_i & y_i \\ \hline
        v(p, s) & x_{i+1} & y_{i+1}  \\ \hline
        v(p, s) & x_{i+2} & \color{gray} z_{i+2}\\ \hline
        \color{gray} v(p, s) &  \color{gray} y_{i+3} & \color{gray} z_{i+3} \\ \hline
      \end{array}
    \]
    \caption{``Rotating'' slots in $\algB(\slotrequestsequence_\slotset)$ and ancestors to preserve the invariant. Pages in grey are not moved.}\label{fig: rotate}
  \end{figure}

  To maintain 
  Invariant~(I)
  \algB does the following for each requestable set $\slotset$.
  Whenever the relaxed solution $\algA(\pagerequestsequence_\slotset)$ evicts a page $v(p, s)$, the solution $\algB(\slotrequestsequence_\slotset)$ also evicts $v(p, s)$.
  After this eviction both 
  Conditions~(I1) and~(I2)
  will be preserved.
  Whenever $\algA(\pagerequestsequence_\slotset)$ retrieves a page $v(p, s)$, the solution $\algB(\slotrequestsequence_\slotset)$ also retrieves $v(p, s)$,
  into any vacant slot in $\slotset$
  (there must be one, because $\algA(\slotrequestsequence_\slotset)$ caches at most $|\slotset|$ pages).
  This retrieval can cause up to two violations of 
  Condition~(I2)
  of the invariant: one at $\algB(\slotrequestsequence_\slotset)$,
  because $v(p, s)$ is already cached by a child $\algB(\slotrequestsequence_c)$ but in some slot $s_1 \ne s'$;
  the other at the parent $\algB(\slotrequestsequence_P)$ of $\algB(\slotrequestsequence_\slotset)$
  (if any), because $v(p, s)$ is already cached by the parent, but in some slot $s_2 \ne s'$.
  In the case that the retrieval does create two violations (and $s_1\ne s_2$), \algB restores the invariant by
  ``rotating'' the contents of the slots $s_1$, $s'$, and $s_2$
  in $\algB(\slotrequestsequence_\slotset)$ \emph{and in each ancestor}, as shown in Figure~\ref{fig: rotate}.
  Note that $y_{i+3}$ and $z_{i+2}$ cannot be $v(p, s)$,
  so moving $v(p, s)$ out of slots $s'$ and $s_2$ doesn't introduce a violation there.
  Thus this rotation indeed restores the invariant, at the expense of three retrievals at the root.
  (The retrievals at other nodes only modify the internal state of \algB.)
  There are three other cases: two violations with $s_1 = s_2$, one violation at $\algB(\slotrequestsequence_\slotset)$,
  or one violation at its parent, but all these three cases
  can be handled similarly, also with at most three retrievals (in fact at most two) at the root.

  \myparagraph{Total cost.}
  Each retrieval by $\algA(\pagerequestsequence_\slotset)$ causes at most 3 retrievals in $\algB(\slotrequestsequence_R)$, so $\cost(\algB(\slotrequestsequence_R))$ is at most
  \begin{align*}
    \le\, 
    \sum_{\slotset\in\slotsetfamily} 3\,\cost(\algA(\pagerequestsequence_\slotset))
    \,\le\, 
    \sum_{\slotset\in\slotsetfamily} 3\, f_h( |\slotset|) \opt(\pagerequestsequence_\slotset)
    \,\le\, 
    3 f_h( k) \sum_{\slotset\in\slotsetfamily} \opt(\slotrequestsequence_\slotset)
    \,\le\, 
    3 h f_h( k)\, \opt(\slotrequestsequence_R).
  \end{align*}
  The second step uses that $\algA(\pagerequestsequence_\slotset)$ is $f_h( |\slotset|)$-competitive for $\pagerequestsequence_\slotset$.
  The third step uses that $\pagerequestsequence_\slotset$ is a relaxation of $\slotrequestsequence_\slotset$
  so $\opt(\pagerequestsequence_\slotset) \le \opt(\slotrequestsequence_\slotset)$,
  and that $|\slotset|\le k$ so $f_h( |\slotset|) \le f_h( k)$.\footnote
  {We assume here that $f_h(k')\le f_h( k)$ for $k'\le k$, which is without loss of generality
    as one can simulate a cache of size $k'$ using a cache of size $k$
    by introducing artificial requests that force $k-k'$ slots to be continuously occupied.}
  The last step uses that the sets within any given level $i\in\{1,2,\ldots, h\}$ of the laminar family
  are disjoint, so $\opt(\slotrequestsequence_R)$ is at least the sum, over the sets $\slotset$ within level $i$,
  of $\opt(\slotrequestsequence_\slotset)$. This shows that $\algB$ is a $3hf_h(k)$-approximation algorithm.
  To finish, we observe that \algB is polynomial-time, online, and/or deterministic if \algA is.
\end{proof}


\subsection{Improved upper bound for deterministic \SlotLaminarPaging}%
\label{sec: laminar k-server}

For \SlotLaminarPaging, this section presents a deterministic algorithm with competitive ratio $O(hk)$,
improving upon the bound of $O(h^2k)$ from Theorem~\ref{thm: slot laminar}.  The algorithm, \algRefSearch, 
refines \algExhSearch.  Like \algExhSearch, it is phase-based and maintains a configuration that can satisfy all requests in a phase;
 however, in order to satisfy the next request in the current phase, the particular configuration is chosen by judiciously moving 
 pages in certain slots that are serving requests along a path in the laminar hierarchy.  


\begin{theorem}\label{thm: laminar upper bound}
For \SlotLaminarPaging,
Algorithm \algRefSearch (Fig.~\ref{fig: refsearch})
has competitive ratio
at most $2\cdot \sumofsizes(\slotsetfamily) - k \le (2h-1)k$.
\end{theorem}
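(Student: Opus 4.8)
The plan is to reuse the phase-based framework from the proof of Theorem~\ref{thm: subset server upper bound}, replacing its crude ``up to $k$ retrievals per step'' accounting by an amortized bound that exploits laminarity. Like \algExhSearch, the algorithm \algRefSearch partitions the input into phases and only ends a phase when no single cache configuration satisfies all requests seen so far in that phase; exactly as in the proof of Theorem~\ref{thm: subset server upper bound}, this forces every optimal solution to incur at least one retrieval in each phase except possibly the last. Hence it suffices to prove that \algRefSearch performs at most $2\sumofsizes(\slotsetfamily)-k$ retrievals within any single phase: the claimed competitive ratio then follows, with the first and last phases and the (arbitrary) initial configuration absorbed into the additive constant.

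Fix one phase. I would begin by recalling the two structural facts established in the proof of Theorem~\ref{thm: subset server upper bound}: within a phase all requests are pairwise distinct, and for each $S\in\slotsetfamily$ at most $|S|$ distinct requests use the set $S$. Next I would spell out the invariant that \algRefSearch maintains throughout a phase, namely that its current configuration satisfies every request seen so far, and describe how it restores this invariant when a new request $\requestpair{p}{S}$ is not satisfied: it updates only the slots lying on a single path of the laminar forest through the node $S$ --- pushing the page blocking a slot of $S$ out, and cascading displaced ``pinned'' pages along the path into slots that are no longer needed --- in such a way that the updated configuration still serves all earlier requests, while each level of the path contributes only a bounded number of retrievals.

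The main work, and the step I expect to be the real obstacle, is the amortized counting: I would show that during a phase the number of retrievals into any fixed slot $s$ is at most $2d_s-1$, where $d_s=|\{S\in\slotsetfamily : s\in S\}|$ (the sets containing $s$ form a chain by laminarity, so $d_s\le h$). The idea is to charge each retrieval into $s$ to the innermost set on that chain whose subtree contains the set of the request currently being served, and to argue --- using the laminar-path structure of the updates, the ``at most $|S|$ requests per set $S$'' bound, and a careful invariant tracking which slot of each requestable set is currently serving each active earlier request (distinguishing pages pinned by a request in a proper descendant from pages pinned by a request at the set itself) --- that each set on the chain is charged only $O(1)$ times and that the innermost level's contribution can be shaved by one. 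Summing $\sum_{s\in[k]}(2d_s-1)=2\sum_{s\in[k]}d_s-k=2\sumofsizes(\slotsetfamily)-k$ gives the per-phase bound, and $d_s\le h$ turns this into $(2h-1)k$. As a consistency check, for standard \Paging ($\slotsetfamily=\{[k]\}$) every $d_s=1$, recovering the familiar ``at most $k$ retrievals per phase,'' and for \AllOrOnePaging every $d_s=2$, recovering the $3k$ bound.
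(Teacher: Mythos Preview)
Your phase framework and the lower bound of one on $\opt$ per phase are fine, and they match the paper. The gap is in your amortized counting: the per-slot bound ``at most $2d_s-1$ retrievals into slot $s$'' is \emph{false} for \algRefSearch as specified. Take $k=2$, $\slotsetfamily=\{\{1,2\},\{1\}\}$ (so $d_1=2$, $d_2=1$), and the phase
\[
\requestpair{a}{\{1,2\}},\ \requestpair{b}{\{1,2\}},\ \requestpair{b}{\{1\}}.
\]
A valid run of \algRefSearch (the algorithm does not prescribe which free slot to use) places $a$ in slot~$1$, then $b$ in slot~$2$, and on the third request finds the cascade $s_1=1$, $s_2=2$ via case~(ii.2) (slot~$2$ holds $b$ serving $\requestpair{b}{\{1,2\}}\supsetneq\{1\}$), moving $b$ to slot~$1$ and $a$ to slot~$2$. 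Slot~$2$ now has two retrievals but $2d_2-1=1$. The total is still $4=2\sumofsizes(\slotsetfamily)-k$, so the theorem survives, but your decomposition does not: retrievals cannot be bounded slot by slot because case~(ii.2) can push a page into a slot whose chain is \emph{shorter} than that of the slot it vacated.

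The paper's accounting avoids this by charging to \emph{requests} in $\rep(R)$ rather than to slots. Each cascade of length $m$ either (case~(ii.1)) adds a brand-new request $\sigma_t$ to $\rep(R)$ at some depth $\ge m$ and charges $m$ to it (rate~$1$ per unit depth), or (case~(ii.2)) replaces an existing request $\requestpair{p}{S'}\in\rep(R)$ by the deeper $\sigma_t$, which inherits the old charges; the depth increases by at least $m-1$ while the new charge is $m\le 2(m-1)$ (rate~$\le 2$ per unit depth). Since every element of the final $\rep(R)$ was created once via (ii.1) and possibly deepened several times via (ii.2), its total charge is at most $2d-1$ where $d$ is its final depth, and summing over $\rep(R)$ (whose elements occupy distinct slots, with the depth of a request bounded by the chain length $d_s$ of its slot) gives $2\sumofsizes(\slotsetfamily)-k$. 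The missing idea in your proposal is precisely this request-centric bookkeeping that tracks how case~(ii.2) collapses an ancestor into a descendant; a slot-centric scheme cannot see that collapse.
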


\begin{figure}
  \framebox{\parbox{0.98\textwidth}{
      \begin{steps}
      \item[] \textbf{input:} \SlotLaminarPaging instance $(k, \slotsetfamily, \sigma = (\sigma_1, \ldots, \sigma_T))$
        \step for $t\gets 1, 2, \ldots, T$,  respond to the current request $\sigma_t=\requestpair p S$ as follows:
        \begin{steps}
			 \step\label{step: new phase} if $t=1$ or $R_{t-1}\cup \braced{\sigma_t}$ is not satisfiable:   let $R_{t-1} = \emptyset$ and empty the cache  \algcomment{--- start new phase}
			\step let $R_t = R_{t-1} \cup \{\sigma_t\}$
          	\step\label{step: slp-redundant} if $C_{t-1}$ satisfies  $\sigma_t=\requestpair p S$: 
          let $C_t = C_{t-1}$                                          \algcomment{--- redundant request}
          	\step  else:  \algcomment{--- non-redundant request}
         	 \begin{steps}
         		\step\label{step: determine sequences}
				find sequences  $\langle s_1, \ldots, s_m\rangle$,  $\langle S_0 = S, S_1, \ldots, S_{m-1}\rangle$, and  $\langle p_0 = p, p_1, \ldots, p_{m-1}\rangle$ s.t.\
     		   \\ \mbox{\ \ \ \ \ }(i) $S_{i-1} \subsetneq S_{i}$ and slot $s_i \in S_{i-1}$ of $C_{t-1}$ satisfies $\requestpair{p_i}{S_i}\in \rep(R_{t-1})$, for $1 \le i < m$, and
     		   \\ \mbox{\ \ \ \ \ }(ii) slot $s_m\in S_{m-1}$ of $C_{t-1}$ either
			   \\ \mbox{\ \ \ \ \ \ \ \ } (ii.1) does not satisfy any requests in $\rep(R)$, or 
			   \\ \mbox{\ \ \ \ \ \ \ \ } (ii.2) satisfies a request $\requestpair{p}{S'} \in \rep(R_{t-1})$ such that $S' \supsetneq S_{m-2}$
       		 	\step \label{step: serving one request}
					to obtain $C_t$ and satisfy $\requestpair{p_{i-1}}{ S_{i-1}}$, place $p_{i-1}$ in slot $s_i$, for $1\leq i \leq m$ 
        	  \end{steps}
        \end{steps}
      \end{steps}
    }}
  \caption{Deterministic online \SlotLaminarPaging algorithm~\algRefSearch. 
  Note that in Step~\ref{step: determine sequences}
  we have  $m\le k+1 - |S|$, and that in~(ii), if $s_m$ satisfies $\requestpair{p}{S'} \in \rep(R_{t-1})$ then $m\ge 2$
  (because $C_{t-1}$ does not satisfy $\sigma_t$); thus $S_{m-2}$ is well-defined.
  }\label{fig: refsearch}
\end{figure}

We begin by defining the terminology used in the algorithm and the
proof, and establish some useful properties.  Recall that a
configuration $D$ satisfies a request $r = \requestpair{p}{S}$ if
there exists a slot $s$ in $S$ such that $s$ holds $p$ in $D$; in this
case, we also say that slot $s$ satisfies $r$ in $D$.  A configuration
$D$ is said to \emph{satisfy a set $R$} of requests if it satisfies
every request in $R$.  A set $R$ of requests will be called
\emph{satisfiable} if there exists a configuration that satisfies $R$.
To determine if a set $R$ of requests is satisfied by a configuration,
it is sufficient (and necessary) to examine the maximal subset of
``deepest'' requests in the laminar hierarchy\ignore{, as observed in
Lemma~\ref{obs:satisfiable request sets} below}.  Formally, a request
$\requestpair{p}{S}$ is an \emph{ancestor} (resp., \emph{descendant})
of $\requestpair{p}{S'}$ if $S \supseteq S'$ (resp., $S \subseteq
S'$).  For any set $R$ of requests, define $\rep(R)$ as the set of
requests in $R$ that do not have any proper descendants in $R$.  That
is, $\rep(R) = \{\requestpair{p}{S} \in R \suchthat \forall S'
\subsetneq S, \requestpair{p}{S'} \notin R \}$.  For
$r=\requestpair{p}{S}$, define $\dep(r,R) = \{\requestpair{p}{S'} \in
R \suchthat S\subseteq S'\}$.  Lemma~\ref{obs:satisfiable request
  sets} 
establishes
some basic properties of $\rep(R)$.


\begin{lemma}%
  \label{obs:satisfiable request sets}
  Let $R$ be a set of requests.  Then,\\
(i) In any configuration, each slot can satisfy at most one request in $\rep(R)$.\\
(ii) A configuration satisfies $R$ if and only if it satisfies $\rep(R)$.\\
(iii) $R$ is satisfiable iff for any requestable set $S$, $\rep(R)$ has at most $|S|$ requests to subsets of $S$.
\eat{
  \begin{description}\renewcommand{\itemsep}{0in}
	  \item{(i)} In any configuration, each slot can satisfy at most one request in $\rep(R)$.
	  \item{(ii)} A configuration satisfies $R$ if and only if it satisfies $\rep(R)$.  
	  \item{(iii)} $R$ is satisfiable iff for any requestable set $S$, $\rep(R)$ has at most $|S|$ requests to subsets of $S$.
  \end{description}
  }
\end{lemma}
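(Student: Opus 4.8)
The plan is to establish the three parts in order, with part~(i) serving as the combinatorial engine for the other two. For part~(i), I would argue by contradiction: if a single slot $s$ satisfied two requests $\requestpair{p}{S}$ and $\requestpair{q}{S'}$ in $\rep(R)$, then $s$ holds both $p$ and $q$, so $p=q$ (a slot holds at most one page), and $s\in S\cap S'$, so $S$ and $S'$ intersect. Laminarity of $\slotsetfamily$ then forces $S\subseteq S'$ or $S'\subseteq S$; if the containment were proper, the smaller request would be a proper descendant of the larger in $R$, contradicting that the larger lies in $\rep(R)$. Hence $S=S'$ and the two requests coincide.

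For part~(ii), the ``only if'' direction is immediate since $\rep(R)\subseteq R$. For ``if'', given a configuration $D$ that satisfies $\rep(R)$ and an arbitrary $\requestpair{p}{S}\in R$, I would pick a set $S^\ast$ minimal under inclusion in the family $\{S'\suchthat \requestpair{p}{S'}\in R \text{ and } S'\subseteq S\}$, which is nonempty because it contains $S$. Minimality forces $\requestpair{p}{S^\ast}\in\rep(R)$, so $D$ holds $p$ in some slot of $S^\ast\subseteq S$, and therefore $D$ satisfies $\requestpair{p}{S}$.

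For part~(iii), the key reformulation is that a configuration satisfies $\rep(R)$ exactly when the family of request-sets $\{S\suchthat \requestpair{p}{S}\in\rep(R)\}$ (indexed by $\rep(R)$, with repetitions allowed) admits a system of distinct representatives that assigns to each request $\requestpair{p}{S}$ a slot lying in $S$: from such an SDR one builds $D$ by placing $p$ in its representative slot, with injectivity of the SDR guaranteeing that no slot receives two pages, and conversely part~(i) converts any satisfying configuration into an SDR. By Hall's theorem, then, $R$ is satisfiable iff for every subfamily $Q\subseteq\rep(R)$ the union of the request-sets appearing in $Q$ has size at least $|Q|$. The ``only if'' half of part~(iii) follows directly from part~(i): for a fixed requestable $S$, the slots witnessing the requests of $\rep(R)$ to subsets of $S$ are pairwise distinct and all lie in $S$, so there are at most $|S|$ of them.

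For the ``if'' half, I would reduce the full Hall condition to the stated one using a laminar decomposition. Given an arbitrary $Q\subseteq\rep(R)$, let $T_1,\dots,T_j$ be the maximal sets among $\{S\suchthat\requestpair{p}{S}\in Q\}$; maximality plus laminarity makes them pairwise disjoint, and their union $U$ contains every request-set occurring in $Q$. Partitioning $Q$ into groups $Q_1,\dots,Q_j$, where $Q_i$ collects the requests whose set lies inside $T_i$, each $T_i$ is requestable, so the hypothesis gives $|Q_i|\le|T_i|$; summing yields $|Q|=\sum_i|Q_i|\le\sum_i|T_i|=|U|$, which is the Hall condition. I expect this last step to be the main obstacle: recognizing that ``at most $|S|$ requests of $\rep(R)$ under each requestable $S$'' is precisely Hall's condition restricted to laminar sets, and that laminarity is exactly what is needed to lift it to the general Hall condition. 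Everything else is short once part~(i) is in place.
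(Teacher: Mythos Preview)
Your proof is correct. Parts~(i) and~(ii) match the paper's reasoning essentially verbatim. For part~(iii), your forward direction is the same as the paper's, but your reverse direction takes a genuinely different route: you invoke Hall's theorem, reformulating satisfiability of $\rep(R)$ as the existence of a system of distinct representatives, and then reduce the full Hall condition to the laminar-restricted one by decomposing an arbitrary $Q\subseteq\rep(R)$ according to its maximal slot-sets. The paper instead gives a direct, self-contained construction: it builds a satisfying configuration top-down through the laminar tree, at each node distributing the relevant requests of $\rep(R)$ among the node's children so that no child is overloaded (possible because the children are disjoint and their sizes sum to at least the parent's load), and finally placing pages into slots at the leaves. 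Your approach cleanly isolates \emph{why} the laminar hypothesis on requestable sets suffices for Hall, and would generalize immediately to any family for which maximal members are disjoint; the paper's approach avoids the external dependence on Hall's theorem and is explicitly constructive. Both are short and either would be fine here.
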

\begin{proof}
(i) This part holds because any two requests in $\rep(R)$ request
  either different pages or disjoint slot sets.  (ii) Since $\rep(R)
  \subseteq R$, if $R$ is satisfiable, so is $\rep(R)$.  On the other
  hand, if a configuration $D$ satisfies $\rep(R)$ then $D$ satisfies
  $R$, because every $r$ in $R$ is an ancestor of some $r'$ in
  $\rep(R)$ and can be satisfied by the slot satisfying $r'$.

(iii) Suppose that $R$ is satisfiable. If $D$ is a configuration that satisfies $R$ then it also satisfies $\rep(R)$, by~(ii). By~(i), for any requestable set $S$,
    all requests in $\rep(R)$ to subsets of $S$ must be satisfied in $D$ by different slots of $S$, so there can be at most $|S|$ such requests.
	To prove the reverse implication, assume that for any requestable set $S$ there are at most
	  			$|S|$ requests in $\rep(R)$ to subsets of $S$. 
    We construct $D$ top-down. Let $T$ be the root of the laminar hierarchy $\slotsetfamily$. (We could assume that $T = [k]$, but it's not necessary.)
	By our assumption, there are at most $|T|$ requests in $\rep(R)$.
	The children of $T$ in $\slotsetfamily$ are disjoint, so we can distribute these requests to the children of $T$
	in such a way that each child $Q$ is assigned at most $|Q|$ requests from $\rep(R)$, and each request assigned
	to $Q$ is to a subset of $Q$. Continuing this recursively down the tree, we will end up with requests assigned
	to leaves. Then, for any leaf $L$ we can satisfy its assigned requests by different slots in $L$.
\end{proof}

Algorithm~{\algRefSearch} is given in Figure~\ref{fig: refsearch}. It consists of phases. The first phase starts in time step~1, and
each phase ends when adding the current request to the request set from this phase makes it unsatisfiable.
Within a phase, redundant requests, that is those satisfied by the current configuration, are ignored (Step~\ref{step: slp-redundant}).
To serve a non-redundant request $\sigma_t=\requestpair p S$, the cache content is rearranged to free a slot in $S$.
This rearrangement involves shifting the content of some slots that serve requests in $\rep(R)$ along the path from $S$ to the root,
to find a slot that is either unused or holds $p$ (Step~\ref{step: serving one request}).

For technical reasons, in the analysis of Algorithm~{\algRefSearch} it will be useful to introduce a slightly refined
concept of configurations. Given a request set $R$, an \emph{$R$-configuration} is a configuration $D$ in which
each request in $\rep(R)$ is served by exactly one slot.
(By Lemma~\ref{obs:satisfiable request sets}(i), each slot can serve only one request in $\rep(R)$, but in general
in a configuration serving $R$ there may be multiple slots that serve the same request in $\rep(R)$.)
Slots in $D$ that do not serve requests in $\rep(R)$ are called \emph{free in $D$}. 
Observe that each configuration $C_t$ of Algorithm~{\algRefSearch} implicitly is an $R_t$-configuration
-- due to the assignment of slots in Step~\ref{step: serving one request}. 
Also, if the slot $s_m$ chosen by the algorithm in Step~\ref{step: determine sequences}
satisfies condition~(ii.1) then $s_m$ is a free slot of $D$, according to our definition.

The following helper claim, which characterizes when a particular
request is not satisfied by a given configuration, follows directly
from Lemma~\ref{obs:satisfiable request sets}(iii).


 \begin{claim}%
  \label{claim:server_exists_higher}
  Let $R$ be a set of requests and $D$ be an $R$-configuration.
  Let also $r = \requestpair{p}{S}$ be a request such that $D$ does not satisfy $r$, yet $R \cup \{r\}$ is satisfiable.  
  Then $D$ has a slot $s$ in $S$ that is either free or satisfies a request $\requestpair{p'}{S'}\in \rep(R)$ where $S\subsetneq S'$. 
\end{claim}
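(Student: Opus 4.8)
The plan is to argue by contradiction, reducing everything to Lemma~\ref{obs:satisfiable request sets}(iii). Suppose the conclusion fails, i.e.\ every slot $s\in S$ of $D$ is non-free and, whenever such a slot $s$ satisfies a request $\requestpair{p'}{S'}\in\rep(R)$, the set $S'$ is not a proper superset of $S$. I first want to pin down the structure this forces. Since $D$ is an $R$-configuration it satisfies $\rep(R)$, hence $R$ by Lemma~\ref{obs:satisfiable request sets}(ii); combined with the hypothesis that $D$ does not satisfy $r=\requestpair{p}{S}$, no slot of $S$ holds $p$ in $D$. Consequently no request $\requestpair{p}{S''}$ with $S''\subseteq S$ can lie in $R$ (such a request, being in $R$, would be satisfied by $D$ via a slot inside $S''\subseteq S$ holding $p$). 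In particular $r\notin R$ and $r$ has no proper descendant in $R$, so $r\in\rep(R\cup\{r\})$.

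Next I would count the requests of $\rep(R\cup\{r\})$ lying ``inside'' $S$. Because $D$ is an $R$-configuration, its $|S|$ slots in $S$ each serve exactly one request of $\rep(R)$, and distinct slots serve distinct requests; call these $r_1,\dots,r_{|S|}$. For each $r_j=\requestpair{p'_j}{S'_j}$ the serving slot lies in $S\cap S'_j\neq\emptyset$, so by laminarity of $\slotsetfamily$ the sets $S$ and $S'_j$ are nested; the contradiction hypothesis gives $S\not\subsetneq S'_j$, hence $S'_j\subseteq S$. Moreover $p'_j\neq p$ (else $\requestpair{p}{S'_j}\in R$ with $S'_j\subseteq S$, contradicting the previous paragraph), so each $r_j\neq r$. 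Finally each $r_j$ survives in $\rep(R\cup\{r\})$: it already has no proper descendant in $R$, and $r$ is not a proper descendant of $r_j$ since that would require $S'_j\supsetneq S$. Thus $\rep(R\cup\{r\})$ contains the $|S|+1$ pairwise-distinct requests $r_1,\dots,r_{|S|},r$, all to subsets of $S$.

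Applying Lemma~\ref{obs:satisfiable request sets}(iii) with the requestable set $S$ then shows $R\cup\{r\}$ is not satisfiable, contradicting the hypothesis and completing the proof. I do not expect a serious obstacle; the only delicate point is the bookkeeping guaranteeing that the $|S|$ requests served by slots of $S$ are genuinely distinct from one another and from $r$, and that they stay in $\rep(R\cup\{r\})$ once $r$ is added — all of which rests on laminarity of $\slotsetfamily$ together with the definition of an $R$-configuration (each request of $\rep(R)$ served by exactly one slot). A slightly slicker variant avoids part (ii) by reading off directly from the $R$-configuration definition that the $|S|$ slots of $S$ witness $|S|$ distinct requests of $\rep(R)$ to subsets of $S$, and noting $r$ cannot coincide with any of them since $D$ fails to satisfy $r$; the contradiction-with-(iii) skeleton is unchanged.
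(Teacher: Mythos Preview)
Your proof is correct and follows exactly the approach the paper intends: the paper states only that the claim ``follows directly from Lemma~\ref{obs:satisfiable request sets}(iii)'', and your contradiction argument is precisely the natural way to cash that out. Your bookkeeping (that the $|S|$ requests served by slots of $S$ are distinct, distinct from $r$, and remain in $\rep(R\cup\{r\})$) is accurate and uses nothing beyond laminarity and the definition of an $R$-configuration.
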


The following lemma establishes the validity of Steps~\ref{step: determine sequences} and~\ref{step: serving one request} of Algorithm~\algRefSearch.


\begin{lemma}%
  \label{lem:sequence_exists}
  Let $R$ be a set of requests and $D$ be an $R$-configuration.
 Let $r = \requestpair{p_0}{S_0}$ be a request such that $r$ is not satisfied by $D$ and $R\cup \{r\}$ is satisfiable. 
  Then there exist sequences $\langle s_1, \ldots, s_m\rangle$,  $\langle S_0, S_1, \ldots, S_{m-1}\rangle$, and $\langle p_0, p_1, \ldots, p_{m-1}\rangle$ such that 
  (i) $S_{i-1} \subsetneq S_{i}$ and $s_i \in S_{i-1}$ is currently satisfying request $\requestpair{p_i}{S_i}\in \rep(R)$, for $1 \le i < m$, and
  (ii) $s_m\in S_{m-1}$ is either a free slot or is currently satisfying $\requestpair{p_0}{S'} \in  \rep(R)$ for some $S' \supsetneq S_{m-2}$. 
  Furthermore, transforming $D$ by moving page $p_{i-1}$ to slot $s_i$ (and modifying the slot assignment in $D$ accordingly), 
  for $1\leq i \leq m$, yields an ($R \cup \{r\}$)-configuration. 
\end{lemma}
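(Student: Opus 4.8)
The plan is to build the three sequences greedily, one index at a time, by iterating Claim~\ref{claim:server_exists_higher}, and then to check that the induced cyclic ``shift'' of pages turns $D$ into an $(R\cup\{r\})$-configuration. It is convenient to carry along a \emph{partially shifted} configuration: once $s_1,\dots,s_{j-1}$, $S_0\subsetneq\cdots\subsetneq S_{j-1}$ and $p_0,\dots,p_{j-1}$ have been committed, let $D_{j-1}$ be $D$ with $p_{l-1}$ moved into slot $s_l$ for $1\le l\le j-1$. The invariant I would maintain entering stage $j$ is: the slots $s_1,\dots,s_{j-1}$ are distinct; $D_{j-1}$ is a $\hat R_{j-1}$-configuration for a suitable request set $\hat R_{j-1}$ (obtained from $R$ by a controlled sequence of deletions of same-page ancestors and insertions of $r,\requestpair{p_1}{S_1},\dots,\requestpair{p_{j-2}}{S_{j-2}}$), and $\requestpair{p_{j-1}}{S_{j-1}}\notin\hat R_{j-1}$; $D_{j-1}$ does not satisfy $\requestpair{p_{j-1}}{S_{j-1}}$; $\hat R_{j-1}\cup\{\requestpair{p_{j-1}}{S_{j-1}}\}$ is satisfiable; and in $D_{j-1}$ each of $s_1,\dots,s_{j-1}$ serves only requests whose slot set is a proper subset of $S_{j-1}$. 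For $j=1$ this holds with $\hat R_0=R$, $D_0=D$, since $D$ (an $R$-configuration) fails to satisfy $r$ and therefore $r\notin R$, using the two hypotheses of the lemma.

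At stage $j$ I would apply Claim~\ref{claim:server_exists_higher} to $\hat R_{j-1}$, $D_{j-1}$ and $\requestpair{p_{j-1}}{S_{j-1}}$, obtaining $s_j\in S_{j-1}$ that is free in $D_{j-1}$ or serves some $\requestpair{q}{S_j}\in\rep(\hat R_{j-1})$ with $S_{j-1}\subsetneq S_j$. The last clause of the invariant forces $s_j\notin\{s_1,\dots,s_{j-1}\}$. If $s_j$ is free, stop with $m=j$ (condition~(ii.1)). If $s_j$ serves $\requestpair{q}{S_j}$ with $q=p_0$, stop with $m=j$ and $S'=S_j$: here necessarily $m\ge 2$ (for $j=1$ a slot of $S_0$ holding $p_0$ would make $D$ satisfy $r$), so $S'=S_j\supsetneq S_{j-1}\supsetneq S_{m-2}$, giving condition~(ii.2). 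Otherwise $q\notin\{p_0,\dots,p_{j-1}\}$ --- if $q=p_l$ with $l\ge 1$, then $\requestpair{q}{S_j}$ would be a proper ancestor in $\hat R_{j-1}$ of the inserted request $\requestpair{p_l}{S_l}$, contradicting $\requestpair{q}{S_j}\in\rep(\hat R_{j-1})$ --- and one checks $\requestpair{q}{S_j}\in\rep(R)$; set $p_j=q$, form $D_j$ by moving $p_{j-1}$ into $s_j$, and proceed to stage $j+1$. Termination is forced, since each non-stopping stage strictly enlarges the nested chain $S_0\subsetneq\cdots\subsetneq S_j$ within $\slotsetfamily$ and $|S_j|\le k$, so $m\le k-|S_0|+1$.

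Finally I would check that $C$ --- namely $D_{m-1}$ with $p_{m-1}$ additionally moved into $s_m$ --- is an $(R\cup\{r\})$-configuration. Only the distinct slots $s_1,\dots,s_m$, all contained in $S_{m-1}$, change relative to $D$. After the shift $s_1$ holds $p_0$, so $r$ is satisfied (via $s_1\in S_0$), and $s_{l+1}$ holds $p_l$ for $1\le l\le m-1$, so $\requestpair{p_l}{S_l}$ is satisfied (via $s_{l+1}\in S_l$); in case~(ii.2) the request $\requestpair{p_0}{S'}$ that $s_m$ served is now served by $s_1$, since $s_1\in S_0\subseteq S_{m-2}\subsetneq S'$; and (in case~(ii.1)) one verifies that $s_m$, free in $D_{m-1}$, is also free in $D$. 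No other request of $R$ is disturbed, because by the invariant every request some $s_l$ served in $D$ has slot set a subset of $S_{m-1}$ and page among $p_0,\dots,p_{m-1}$, hence is an ancestor of some $\requestpair{p_i}{S_i}$ (or of $r$) and remains satisfied by the appropriate new slot. Since $p_0,\dots,p_{m-1}$ are pairwise distinct and $s_1,\dots,s_m$ are pairwise distinct, the shift is a bijection on the relevant occupied slots, so each request of $\rep(R\cup\{r\})$ is served by exactly one slot of $C$ (using Lemma~\ref{obs:satisfiable request sets}); thus $C$ is an $(R\cup\{r\})$-configuration.

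The step I expect to be the main obstacle is the inductive maintenance of the invariant across a non-stopping stage --- verifying that passing from $(\hat R_{j-1},D_{j-1})$ to $(\hat R_j,D_j)$ by deleting $\dep(\requestpair{p_j}{S_j},\hat R_{j-1})$ and inserting $\requestpair{p_{j-1}}{S_{j-1}}$ again yields a representative-configuration in which $\requestpair{p_j}{S_j}$ is unsatisfied and whose augmentation by $\requestpair{p_j}{S_j}$ is satisfiable --- together with the related checks that the slots and pages produced are genuinely distinct and that a ``free'' slot in $D_{j-1}$ is free in $D$. These are fiddly but routine: they reduce to tracking how $\rep(\cdot)$ and satisfiability behave under deleting a request together with its same-page ancestors, and they rely on the laminarity of $\slotsetfamily$ and on Lemma~\ref{obs:satisfiable request sets} (in particular that in an $R$-configuration each representative request occupies exactly one slot).
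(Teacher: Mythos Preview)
Your proposal is correct and follows essentially the same approach as the paper. The paper presents the argument as an induction on the depth of $S_0$ in the laminar hierarchy, while you unroll this into an iterative construction with an explicit invariant; your $\hat R_{j-1}$ and $D_{j-1}$ are exactly the paper's $R'$ and $D'$ at the $(j{-}1)$st level of recursion, and both proofs hinge on the same application of Claim~\ref{claim:server_exists_higher} together with the update $\hat R_j = (\hat R_{j-1}\setminus\dep(\requestpair{p_j}{S_j},\hat R_{j-1}))\cup\{\requestpair{p_{j-1}}{S_{j-1}}\}$.
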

\begin{proof}
  The proof is by induction on the depth of $S_0$ in the laminar hierarchy. 
  For the induction base, consider $S_0 = [k]$. Since $r$ is not satisfied by $D$, $R \cup \{r\}$ is satisfiable, 
and every requestable slot set is subset of $[k]$, we obtain from Claim~\ref{claim:server_exists_higher} that there is a free slot $s_1\in S_0$.  The desired claim of the lemma holds 
with $m=1$ and sequences $\langle s_1 \rangle$, $\langle S_0 \rangle$ and $\langle p_0\rangle$ which satisfy (i). Since $s_1$ is free, 
 bringing page $p_0$ to slot $s_1$ yields a ($R\cup \{r\}$)-configuration. 

  We now establish the induction step.  Let $R$, $D$, and $r = \requestpair{p_0}{S_0}$ be as given.  By Claim~\ref{claim:server_exists_higher} there are two cases.  
  In the first case, there is a free slot $s_1 \in S_0$ in $D$.  Then the desired claim holds with $m=1$, and sequences $\langle s_1 \rangle$,  $\langle S_0 \rangle$ and $\langle p_0 \rangle$. 
   Furthermore, as in the base case, since $s_1$ is free, bringing page $p_0$ to slot $s_1$ yields an ($R \cup \{r\}$)-configuration.

  The remainder of this proof concerns the second case, in which there is a slot $s_1\in S_0$ currently satisfying a request $r' = (p_1, S_1)$ in $\rep(R)$ with $S_0 \subsetneq S_1$.  
  Let $D'$ denote the configuration that is identical to $D$ except that $D$ has $p_0$ in slot $s_1$.  
  Since $D$ is an $R$-configuration,
	 no other slot satisfies $r'$ in $D$; the same holds in $D'$.  Hence, $D'$ does not satisfy $r'$.  
	 Furthermore, $D'$ satisfies every request in $\rep(R)$ other than $r'$.   Let $R' = R\cup \{r\} \setminus \dep(r',R)$.   
	 In $D'$, $s_1$ satisfies $r$.  Consider any request $x$ in $R \setminus \dep(r', R)$.  
	 By definition of $\rep(R)$, there exists a request $x'$ in $\rep(R)$ that is a descendant of $x$.  
	 Since $R'$ does not include any ancestors of $r'$, $x'$ is not $r'$ and hence is satisfied by some slot in $D'$.  
	 We thus obtain that $D'$ satisfies $R'$ and, in fact $D'$ is an $R'$-configuration. In $D'$ slot $s_1$ is assigned to $r$,
	 and if there is a request $(p,S')$ in $\rep(R)$ then its assigned slot is designated as free in $D'$. 
	 At the same time, $D'$ does not satisfy $r'$. 
	 Further, since $R' \cup \{r'\}$ is a subset of $R \cup \{r\}$, which is satisfiable, $R' \cup \{r'\}$ is also satisfiable. 
   Since $S_1 \supsetneq S_0$, by the induction hypothesis, 
   there are sequences $\langle s_2,\ldots, s_m\rangle$, $\langle S_1,S_2, \ldots S_{m-1}\rangle$ and  $\langle p_1,p_2,\ldots, p_{m-1}\rangle$ 
   such that (i) $S_{i-1}\subsetneq S_i$ and $s_i \in S_{i-1}$ is currently satisfying  
   $(p_i,S_i)\in \rep(R')$, for $2\leq i < m$; 
   and either (ii.1) $s_m$ is a free slot in $D'$ or (ii.2) is currently satisfying a request $(p_1,S')\in \rep(R')$ for some $S' \supsetneq S_1$. 
Note, however, that $s_m$ has to be a free slot in $D'$ since (ii.2) above cannot hold: any request $(p_1, S')$ is in $\dep(r',R)$, 
all requests of which are excluded from $R'$.
    Furthermore, transforming $D'$ to $D''$ by moving page $p_{i-1}$ to $s_i$ for $2\leq i \leq m$, satisfies $R' \cup \{r'\}$.  

  We now establish the desired claim for $D$, $R$, and $r$.  Consider sequences $\langle s_1,\ldots, s_m\rangle$, $\langle S_0, S_1, \ldots S_{m-1}\rangle$ 
  and  $\langle p_0, \ldots, p_{m-1}\rangle$.  The desired condition~(i) follows from~(i) of the induction hypothesis above and the fact 
  that in $D$, $s_1 \in S_0$ is currently satisfying a request $(p_1, S_1)$ in $\rep(R)$ with $S_0 \subsetneq S_1$.
  For (ii), note that since $s_m$ is a free slot in $D'$, either 
  $s_m$ is a free slot in $D$ or $(p_0, S')$ is in $\rep(R)$ for some $S' \supsetneq S_{m-2}$, thus establishing (ii).  
  Finally, transforming $D$ to $D''$ by moving $p_{i-1}$ to $s_i$ for $1\leq i \leq m$, satisfies $R' \cup \{r'\}$. 
  Since any request satisfying $r'$ also satisfies all ancestors of $r'$, we have $\rep(R \cup \{r\}) = \rep(R' \cup \{r'\})$, implying that $D''$ also satisfies $R \cup \{r\}$.  
  This completes the induction step and the proof of the lemma.
\end{proof}


\begin{proof}[Proof of Theorem~\ref{thm: laminar upper bound}]
We first argue that at any time $t$, configuration $C_t$ of
{\algRefSearch} satisfies the set $R_t$ of requests from the current
phase of the algorithm.  The proof is by induction on the number of
steps within a phase. When the phase is about to start at time $t$
then $R_{t-1}$ is set to $\emptyset$, so the claim holds. For the
induction step, consider a step $t$ within a phase and assume that
$C_{t-1}$ satisfies $R_{t-1}$.  If $C_{t-1}$ satisfies new request
$\sigma_t$, then by Step~\ref{step: redundant}, $C_t$ satisfies $R_t$.
Otherwise, $R_{t-1} \cup \{\sigma_t\}$ is satisfiable but
$C_{t-1}$ does not satisfy $\sigma_t$.  Then, by
Lemma~\ref{lem:sequence_exists}, Steps~\ref{step: determine sequences}
and~\ref{step: serving one request} derive a configuration $C_t$
satisfying $R_t$, completing the induction step and the argument that
at any time $t$, $C_t$ satisfies $R_t$.

 We next analyze the competitive ratio.  We first show that the number
 of page retrievals during a phase of {\algRefSearch} is at most
 $2\cdot \sumofsizes(\slotsetfamily)$.  Let $R$ denote the set of
 requests in the current phase.  We charge the cost in this phase to
 the depths of the requests in $\rep(R)$.  The cost of Step~\ref{step:
   serving one request} is $m$.  If $s_m$ satisfies condition~(ii.1),
 then $\rep(R \cup \{\sigma_t\}) = \rep(R) \cup \{\sigma_t\}$ and the
 depth of $S$ is at least $m$, so the charge per unit depth is at most
 $1$.  Otherwise, condition~(ii.2) holds and $\rep(R \cup
 \{\sigma_t\})= \rep(R) \cup\{\sigma_t\} \setminus
 \{\requestpair{p}{S'}\}$. In this case we have $\sigma_t$ inherit the
 charges to $\requestpair{p}{S'}$, and we charge the cost of $m$ to
 the difference in depths of $S$ and $S'$, which is at least $m-1$
 (because $S_{m-2}\subsetneq S'$), so the charge per unit of depth is
 at most $m/(m-1)\le 2$. (Note that in this case $m\ge 2$.)  When the
 phase ends, a request at depth $d$ was charged at most $d$ times, and
 these charges include at least one unit charge, so its total
 charge is most $2d-1$. Thus the algorithm's cost per phase is at most
 $2\cdot\sumofsizes(\slotsetfamily)-k \le (2h-1)k$. The optimal cost
 in a phase is at least $1$ as no configuration satisfies all
 requests in the phase and the request that starts the next phase. The
 theorem follows.
\end{proof}



\section{\AllOrOnePaging}%
\label{sec: all or one paging}


Recall that \AllOrOnePaging is the extension of standard \Paging that allows two types of requests:
A general request for a page $p$,  denoted $\requestpair{p}{\ast}$, can be served by having $p$ in any cache slot.
A specific request $\requestpair{p}{j}$, where $j\in [k]$, must be served by having $p$ in slot $j$ of the cache. 
(Section~\ref{sec: preliminaries} gives a formal definition.) It is a restriction of  \SlotLaminarPaging with $h=2$.

  For \AllOrOnePaging,
  Section~\ref{sec: all or one paging improved lower} proves that the optimal  
randomized ratio is at least $2H_k-O(1)$.  (Compare it to the upper bound of $12 H_k$, that
follows from Theorem~\ref{thm: slot laminar} for $h=2$.)
Then,
Section~\ref{sec: np-completeness of all or one paging} shows
that the offline problem is $\NP$-hard.


\subsection{Lower bound for randomized \AllOrOnePaging}%
\label{sec: all or one paging improved lower}

\begin{theorem}\label{thm: all or one deterministic 2k-1 lower bound}
  Every online randomized algorithm $\algA$ for the \AllOrOnePaging problem has competitive ratio at least $2H_k - 1$.
\end{theorem}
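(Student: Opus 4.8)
The plan is to prove the bound by Yao's minimax principle, adapting the classical $H_k$ lower bound for randomized \Paging~\cite{fiat_etal_competitive_paging_1991} in the same way that the known deterministic $2k-1$ bound for \AllOrOnePaging~\cite{castenow+fkmd:server,haney:thesis} adapts the deterministic $k$ bound for \Paging. By Yao's principle it is enough to exhibit a distribution over request sequences $\sigma$ such that every \emph{deterministic} online algorithm $\algA$ satisfies $\E[\cost_\algA(\sigma)] \ge (2H_k-1)\,\E[\opt(\sigma)] - O(1)$.

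First I would build the hard distribution as a ``doubled'' version of the classical one. It uses $k+1$ pages and proceeds in phases; a phase maintains a shrinking pool of pages (initially all $k+1$) and a shrinking pool of slots (initially all $k$). A phase consists of $k$ rounds, and round $j$ issues a general request $\requestpair{p}{\ast}$ for a uniformly random page $p$ still in the page pool, then --- after $\algA$ has responded and so irrevocably committed $p$ to a particular slot --- issues a specific request $\requestpair{p}{s}$ for a slot $s$ drawn at random from the slot pool, after which $p$ and $s$ are removed from their pools (``pinned'') and the round ends. After $k$ rounds the phase ends, the pools are refreshed, and the intruder page rotates exactly as in the classical construction, so that the ``target configuration'' forced on OPT changes by only $O(1)$ slots from one phase to the next.

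Next I would lower-bound $\algA$'s expected cost per phase by $2H_k - O(1)$ by accounting for the two request types separately. The general requests contribute $\ge H_k - O(1)$ by the usual argument: in round $j$ the algorithm holds $k$ of the $k+1$ pages and is missing one of the $\ge k-j+2$ not-yet-pinned pages, so it faults on $\requestpair{p}{\ast}$ with probability $\ge 1/(k-j+2)$, and these sum to $H_{k+1}-1$. The specific requests should contribute another $\ge H_k - O(1)$: when $\algA$ answers $\requestpair{p}{\ast}$ it must place $p$ without knowing $s$, so its slot for $p$ is a fixed function of its state while the subsequent demand $s$ is random over a pool of $\Theta(k-j)$ slots, forcing $\algA$ to move $p$ unless it guessed right. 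The delicate point is that the slot pool and the rule for drawing $s$ must be tuned so that this second contribution is \emph{harmonic}, matching the page side (a naive rule makes the online cost linear in $k$, which is only seemingly stronger, since it also forces OPT to pay $\Omega(k)$ per phase). Then I would bound $\E[\opt]$ by $O(1)$ per phase: being offline, OPT knows all realized pages and slots, so it can dwell in a single configuration that satisfies every general request of the phase (which involve only that phase's $k$ pinned pages) and every specific request of the phase (mutually consistent by construction), needing only $O(1)$ retrievals to move from the previous phase's configuration to the current one.

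Combining over $m$ phases gives $\E[\cost_\algA] \ge (2H_k - O(1))m$ and $\E[\opt] \le O(1)\cdot m + O(1)$, hence ratio $2H_k - o(H_k)$; a careful bookkeeping of the additive $O(1)$'s --- noting, e.g., that the specific request of a phase's first round is satisfied for free, so the slot side really contributes only $H_k-1$ while the page side contributes $H_k$ --- sharpens this to exactly $2H_k - 1$. The main obstacle is precisely the tension flagged above: engineering the slot demands to be unpredictable enough to cost the online algorithm a harmonic amount, yet consistent and slowly varying enough that the offline optimum pays only a constant per phase. Matching these two constraints is the step that genuinely uses the \AllOrOnePaging structure.
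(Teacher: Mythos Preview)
Your proposal identifies the right tension --- slot demands must be unpredictable to the online algorithm yet nearly static for the offline optimum --- but does not actually resolve it, and the construction you sketch cannot work. In your phase, each round pins a fresh random page to a fresh random slot; after $k$ rounds this yields a uniformly random bijection from $k$ of the $k+1$ pages onto the $k$ slots. Two such bijections from consecutive phases agree on only $O(1)$ slots in expectation, so $\opt$ pays $\Theta(k)$ per phase to transition, not $O(1)$. Your accounting for the specific requests is also inverted: when $s$ is uniform over the $k-j+1$ unpinned slots and the algorithm has already committed $p$ to one of them, the fault probability is $(k-j)/(k-j+1)$, not $1/(k-j+1)$, so the per-phase specific-request cost is $k-H_k$, not $H_k$. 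Both the online and the optimal per-phase costs are $\Theta(k)$ and the ratio collapses to $O(1)$. Any attempt to ``tune'' the slot distribution so that the online cost becomes harmonic will force the assignment to be essentially deterministic --- which is exactly the idea you are missing.

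The paper resolves the tension by \emph{fixing} the page--slot correspondence (page $p_i$ always belongs in slot $i$) and randomizing only the \emph{order of revelation} together with the single slot that receives the extra page $p_{k+1}$. A phase picks a random permutation $(i_1,\ldots,i_k)$ of $[k]$, parks $p_{k+1}$ in slot $i_k$ (adversary cost $1$), issues one general request $\requestpair{p_{k+1}}{\ast}$, and then in stage $s=1,\ldots,k-1$ repeats the block $\requestpair{p_{i_1}}{i_1},\ldots,\requestpair{p_{i_s}}{i_s},\requestpair{p_{k+1}}{\ast}$ enough times to force the algorithm into a canonical configuration. At the start of stage $s$ the algorithm has $p_{k+1}$ in some slot of $[k]\setminus\{i_1,\ldots,i_{s-1}\}$; with probability $1/(k-s+1)$ that slot is $i_s$, forcing a swap of cost at least $2$. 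Summing over $s$ and adding $1$ for the initial fault gives $2(H_k-1)+1=2H_k-1$, against adversary cost $1$ per phase. Note that the entire harmonic cost comes from a single ``where does $p_{k+1}$ go'' game, not from two separate $H_k$ contributions as you propose.
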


\begin{proof}
 We establish our lower bound by giving a probability distribution on
the input sequences for which any deterministic algorithm $\algA$  has expected cost at least $2H_k-1$ times the optimum cost. 
Without loss of generality we can assume that $\algA$ is \emph{lazy}, in the sense that it retrieves a page only when it is
necessary to satisfy a request.

  We present the proof in terms of a game between algorithm $\algA$ and
  an adversary who
generates the request sequence and its solution. It is then sufficient to show that
the expected cost of $\algA$ is at least $2H_k-1$ times the adversary's cost. 
We use some fixed $k+1$ pages $p_1, p_2, \ldots, p_{k+1}$ and the
random input sequence generated by the adversary will consist of $L$ phases, where $L$ is an integer that can be made arbitrarily large.

Consider any phase.  To ease notation, by symmetry, assume without loss of generality that when the phase starts
the adversary has pages  $p_1,p_2,\ldots,p_k$ in the cache, with each page $p_i$ in slot $i$, for $i = 1,2,\ldots,k$. 
To start the phase, the adversary chooses a random permutation $p_{i_1}, p_{i_2}, \ldots, p_{i_k}$ of these $k$ pages,
replaces $p_{i_k}$ in its cache by $p_{k+1}$, at cost $1$,
then makes request $\requestpair{p_{k+1}}{\ast}$, followed by $k-1$ stages.
Each stage $s=1,2,\ldots,k-1$ consists of $L\cdot (2H_{k} -1)$ repetitions of the request sequence 
\begin{equation*}
\requestpair{p_{i_1}}{i_1}  \,,\, \requestpair{p_{i_2}}{i_2}
			 \,,\, \ldots \,,\, \requestpair{p_{i_{s}}}{i_{s}} \,,\, \requestpair{p_{k+1}}{\ast} \,,
\end{equation*}
which costs the adversary nothing.

\newcommand{\event}{{\cal E}}
It remains to bound the expected cost of $\algA$.  Let $\event$ denote the event that for every phase and every stage $s$ in the phase, 
the configuration of $\algA$ at the end of the stage has each page $p_{i_r}$, for $r = 1, 2, \ldots, s$, in slot $i_r$ and one of the slots in $[k] \setminus \braced{i_1,\ldots,i_{s}}$
contains $p_{k+1}$.  We will separately bound the expected cost of $\algA$ conditioned on $\event$, and the expected cost of $\algA$ conditioned on $\overline{\event}$
(the complement of $\event$).

We first analyze the expected cost of $\algA$ conditioned on $\event$.  Observe that when a phase starts
$\algA$ and the adversary are in the same configuration. For the first phase this holds because both  $\algA$ and the adversary are
in the initial configuration. For any other phase it follows from condition $\event$ applied to stage $k-1$ of the previous phase.

Now, consider a stage $s$ of a phase. At the beginning of this phase
the configuration of $\algA$ has
each page $p_{i_r}$, for $r = 1,2,\ldots,s-1$, in slot $i_r$, and one of the slots in $[k] \setminus \braced{i_1,\ldots,i_{s-1}}$
contains $p_{k+1}$.  Indeed, for $s\ge 2$  this follows directly from condition $\event$ applied to stage $s-1$.
For $s=1$ this follows from $\algA$ and the adversary being in the same configuration when the phase starts,
and from the adversary making the request $\requestpair{p_{k+1}}{\ast}$ right before stage $1$.

Since the probability distribution of $i_s$ is uniform in  $[k]\setminus \braced{i_1,i_2,\ldots,i_{s-1}}$,
the probability that $\algA$ has $p_{k+1}$ in slot $i_s$ equals $1/(k-s+1)$.
If it does, the cost of $\algA$ is at least $2$ in stage $s$, because $p_{i_s}$ will need to be fetched into slot $i_s$
and $p_{k+1}$ will need to be moved to a different slot (to preserve condition $\event$).
So the expected cost of $\algA$ in this stage is at least
$2/(k-s+1)$. Summing over all stages $s = 1,2,\ldots,k-1$ and adding $1$ for the first request, the expected cost of $\algA$ for a phase, 
conditioned on event $\event$, will be at least $2(H_k-1)+1 = 2H_k-1$.  
Therefore, the  expected total cost of $\algA$ over $L$ phases, conditioned on event $\event$, 
is at least $L \cdot(2(H_k-1)+1) = L\cdot(2H_k-1)$.

We next analyze the expected cost of $\algA$ conditioned on $\overline{\event}$.  The event $\overline{\event}$ implies that there is a stage $s$ of a phase in which $\algA$ 
does not end with a configuration in which each page $p_{i_r}$, for $r = 1,2,\ldots,s$, is in slot $i_r$, and page $p_{k+1}$ is in one of the slots in $[k] \setminus \braced{i_1,\ldots,i_{s}}$.  
Since such a configuration satisfies all requests in the stage and $\algA$ is lazy, this implies that $\algA$ never reaches such a configuration in the stage.
(Otherwise $\algA$ would have stayed in this configuration through the rest of the phase.)  
There are no other configurations that satisfy all requests in this phase at no cost. Therefore, $\algA$ incurs a cost of at least one during each of the $L \cdot (2H_k - 1)$ repetitions of the request sequence, yielding a total cost in this stage alone of at least $L \cdot (2H_k - 1)$. 

Since the adversary pays 1 for each phase, the total cost of the adversary is $L$. We showed that, whether we condition on $\event$ or $\overline{\event}$, the
expected cost of $\algA$ is at least $L \cdot (2H_k - 1)$,
so this will be true also without any conditioning.
Therefore, the competitive ratio of $\algA$ is at least $2H_k-1$.
\end{proof}


\subsection{NP-completeness of offline \AllOrOnePaging}%
\label{sec: np-completeness of all or one paging}

The off-line version of {\Paging}, where the request sequence is given upfront, can be solved in time $O(n\log n)$ using
the classical algorithm by Belady~\cite{DBLP:journals/ibmsj/Belay66}. \AllOrOnePaging differs from standard \Paging only by inclusion of
specific requests, which appear easy to handle because they don't give the algorithm any choice.
In this section we show that this intuition is not valid:

\begin{theorem}\label{thm: all or one np-hard}
  Offline {\AllOrOnePaging} is $\NP$-complete. 
\end{theorem}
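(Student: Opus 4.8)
The plan is to establish membership in NP (easy) and then NP-hardness by a polynomial-time reduction from a classical NP-complete problem. For membership: given a target cost $B$, a solution is certified by recording, for each time $t$, the slots in which $C_t$ differs from $C_{t-1}$ (taking $C_0$ empty) together with the page placed in each such slot. The ``lazy'' solution that responds to each request by a single action of cost at most $1$ (fetch the requested page into one slot, possibly the specified one) has cost at most $T$, so we may assume $B\le T$; hence the certificate has polynomial size, and checking that every $C_t$ satisfies $\sigma_t$ and that the total number of retrievals is at most $B$ takes polynomial time. Thus \AllOrOnePaging $\in\mathbb{NP}$.

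For NP-hardness I would reduce from an NP-complete problem whose instances decompose into many ``local'' constraints tied together by a global budget; \textsc{3-Sat} is the natural choice (\textsc{Vertex Cover} or \textsc{Exact Cover by 3-Sets} would do as well). The constructed instance uses a shared pool of $k$ slots, a long request sequence, and a bound $B$, and has three ingredients. (1) A \emph{skeleton}: blocks of specific requests that force a known, stable layout of the cache and contribute a fixed mandatory cost $B_0$ to every solution, so that later gadgets can reason from a known state. (2) For each variable $x_i$, a \emph{variable gadget}: a general request $\langle a_i,\ast\rangle$ for a fresh page $a_i$ issued at a moment when the skeleton leaves exactly two slots free for it; which free slot receives $a_i$ encodes the truth value of $x_i$, and a short trailing block (alternating general requests for $a_i$ with specific requests occupying the other slots) makes relocating $a_i$ strictly costlier than leaving it in place, so the binary choice is ``latched.'' (3) For each clause, a \emph{clause gadget}: a short interleaved block of specific and general requests, positioned and wired (again with specific requests that pin all but the relevant slots at the relevant times) so that it can be served at a fixed base cost precisely when at least one of its three literals is latched to the ``satisfying'' slot, and otherwise forces at least one additional retrieval. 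With $B$ set to $B_0$ plus one unit per variable gadget plus the base cost of each clause gadget, a satisfying assignment yields a solution of cost exactly $B$ (assign each $a_i$ to the slot encoding $x_i$'s value), and conversely a solution of cost at most $B$ yields a satisfying assignment.

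The completeness direction is routine bookkeeping. The main obstacle is \textbf{soundness}: ruling out clever offline solutions that exploit the freedom to move pages between slots arbitrarily — parking $a_i$ in an unexpected slot and repairing it later, or overspending on one gadget to save elsewhere. The key technical step is a normal-form (exchange) lemma: any solution can be rewritten without increasing its cost so that it never parks a page outside the slots the construction intends, so that the skeleton's pinned pages act as barriers rendering the gadgets mutually cost-independent, and so that each gadget is served in one of a few ``canonical'' ways with exactly computable cost, with the global minimum attained only by canonical behavior that reads off a consistent Boolean assignment satisfying every clause. Designing the skeleton and the wiring so that the per-gadget costs really do add up, and so that $B$ is tight exactly at satisfying assignments, is where the work concentrates; the bound on the instance size and the remaining verifications are straightforward.
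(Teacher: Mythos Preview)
Your NP-membership argument is fine. The hardness part, however, is a plan rather than a proof, and the plan has a real gap at exactly the point you yourself flag as the obstacle.

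The difficulty with your variable gadget is structural to \AllOrOnePaging: a general request $\langle a_i,\ast\rangle$ is indifferent to which slot holds $a_i$, while a specific request $\langle a_i,j\rangle$ \emph{forces} $a_i$ into slot $j$ and destroys the choice. Your ``latching'' block alternates general requests for $a_i$ with specific requests pinning \emph{other} slots; but to preserve the choice those specific requests must leave both candidate slots alone, and then the block costs the same regardless of which candidate slot holds $a_i$ --- so nothing is latched. Reading the bit later has the same problem in reverse: any gadget that distinguishes the two placements must issue a specific request to one of the candidate slots, which evicts $a_i$ from that slot if present; after the read the state has changed and a fresh general request to $a_i$ lets the solution re-place it wherever is convenient for the \emph{next} clause. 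Your asserted normal-form/exchange lemma would have to rule all of this out, and you give no construction that supports it.

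The paper's reduction (from \textsc{Vertex Cover}) sidesteps this entirely by \emph{not} trying to enforce a single consistent assignment across gadgets. It lays out polynomially many ``bundles,'' each of which implicitly selects a $k$-subset $V_b\subseteq V$ (the vertices whose pages survive a pair of blocking specific requests in the $k$ vertex slots). A tight fault budget forces consecutive selections to satisfy only a one-sided dominance relation $V_b\preceq V_{b+1}$; a short counting lemma bounds any strict $\preceq$-chain by $k(n-k)$, so by pigeonhole some window of $|E|$ consecutive bundles has identical selections. Edge gadgets threaded through every window then certify, under the tight budget, that the stable selection in \emph{some} window is a vertex cover. This stabilization-by-pigeonhole is the idea your sketch is missing; without it, or a concrete construction that actually proves your normal-form lemma, the soundness direction is not established.
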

\begin{proof}[Proof of Theorem~\ref{thm: all or one np-hard}]
Let $G = (V,E)$ be a graph 
  with vertex set $V = \braced{0,1,\ldots,n-1}$. Given an integer $k$, $1\le k\le n$, we 
  compute in polynomial time a request sequence $\requestsequence$ and an integer $F$ such that
  the following equivalence holds: $G$ has a vertex cover of size $k$ if and only if there is a solution
  for $\requestsequence$ whose cost with a cache size $k+2$ is at most $F$. 

  At a fundamental level our proof resembles the argument in~\cite{chrobak_etal_caching_is_hard_2012},
  where $\NP$-completeness of an interval-packing problem was proved. The basic idea of the proof is to
  represent the vertices by a collection of intervals with specified endpoints
  that are to be packed into a strip of width $k$. These intervals will be represented by pairs of requests, 
  one at the beginning and one at the end of the interval, and the strip to be packed is the cache.
  Since the strip's capacity is bounded by $k$, only a subset of intervals can be packed, and
  the intervals that are packed correspond to a vertex cover. 

  There will actually be many ``bundles'' of such intervals, with each bundle
  containing $n$ intervals corresponding to the $n$ vertices. 
  If we had $|E|$ bundles and if we forced each bundle's packing (that is, its corresponding set of vertices)
  to be the same, we could add 
  an edge-gadget to each bundle that will verify that all edges are covered. While it does not
  seem possible to design these bundles to force all bundles' packings to be equal,
  there is a way to design them to ensure that the packing of each bundle
  is dominated (in the sense to be defined shortly) by the next one, and this dominance
  relation has polynomial depth. So with polynomially many bundles we can ensure that there
  will be $|E|$ consecutive equally packed bundles, allowing us to verify whether
  the vertex set corresponding to this packing is indeed a correct vertex cover.


  \myparagraph{Set dominance.}
  We consider the family of all $k$-element subsets of $V$.
  For any two $k$-element sets $X,Y\subseteq V$, we say that $Y$ \emph{dominates} $X$, and denote it $X\preceq Y$,
  if there is a 1-to-1 function $\psi :X \to Y$ such that $x \le \psi(x)$ for all $x\in X$.
  We write $X\prec Y$ iff $X\preceq Y$ and $X\neq Y$.
  The dominance relation is a partial order. The following lemma from~\cite{chrobak_etal_caching_is_hard_2012}
  will be useful:

  \begin{lemma}\label{lem: dominance depth}
    Let $X_1, X_2, \ldots, X_p \subseteq V$ be sets of cardinality $k$ such that
    $X_1 \prec X_2 \prec \ldots \prec X_p$. Then $p \le k(n-k)$.
  \end{lemma}


  \myparagraph{Cover chooser.}
  We start by specifying the ``cover chooser'' sequence $\requestsequence'$ of requests. In
  this sequence some time slots will not have assigned requests. 
  Some of these unassigned slots will be used later to insert requests representing edge gadgets.

  Let $m = |E|+1$. (For notation-related reasons, it is convenient to have $m$ be one larger than
  the number of edges.) Let also $P = k(n-k)+1$ and $B = mP$. In $\requestsequence'$
  we will use the following pages and requests:
  \begin{itemize}
  \item
    We have $nB$ pages $x_{b,j}$, for $b = 0,1,\ldots,B-1$ and $j = 0,1,\ldots,n-1$. For each page $x_{b,j}$
    there are two general requests $\requestpair{x_{b,j}}{\ast}$ in $\requestsequence'$ at time steps
    $\tau_{b,j} = 9(bn+j)$ and $\tau'_{b,j} = 9(bn+j)+9n-6$.  These requests are called \emph{vertex requests}.
    They are grouped into \emph{bundles} of requests, where bundle $b$ consists of all $2n$ general requests to pages
    $x_{b,0}$, $x_{b,1}$, \ldots, $x_{b,n-1}$.
    See Figures~\ref{fig: np-hard vertex requests} and~\ref{fig: np-hard vertex requests zoomin} for illustration.
  \item
    We have $B$ pages $y_b$, for $b = 0,1,\ldots,B-1$. For each page $y_b$
    we have two specific requests $\requestpair{y_b}{k+1}$ and $\requestpair{y_b}{k+2}$ in $\requestsequence'$,
    at times $\theta_b = \tau'_{b,0}-2 = \tau_{b,n-1}+1$ and $\theta'_b = \tau'_{b,0}-1 = \tau_{b,n-1}+2$,
    respectively. For each $b$, these requests are called \emph{$b$-blocking requests}, because
    for each page $x_{b,j}$ in bundle $b$ we have $\theta_b, \theta'_b \in [\tau_{b,j},\tau'_{b,j}]$,
    so these two requests make it impossible to have both requests $\requestpair{x_{b,j}}{\ast}$
    served in cache slots $k+1$ or $k+2$ with only one fault.
  \end{itemize}
  Slots $1,2,\ldots,k$ in the cache will be referred to as \emph{vertex} slots. Slot $k+1$
  is called the \emph{edge-gadget} slot, and slot $k+2$ is called the \emph{junkyard} slot.


  \begin{figure}[t]
    \begin{center}
      \includegraphics[width = 5.2in]{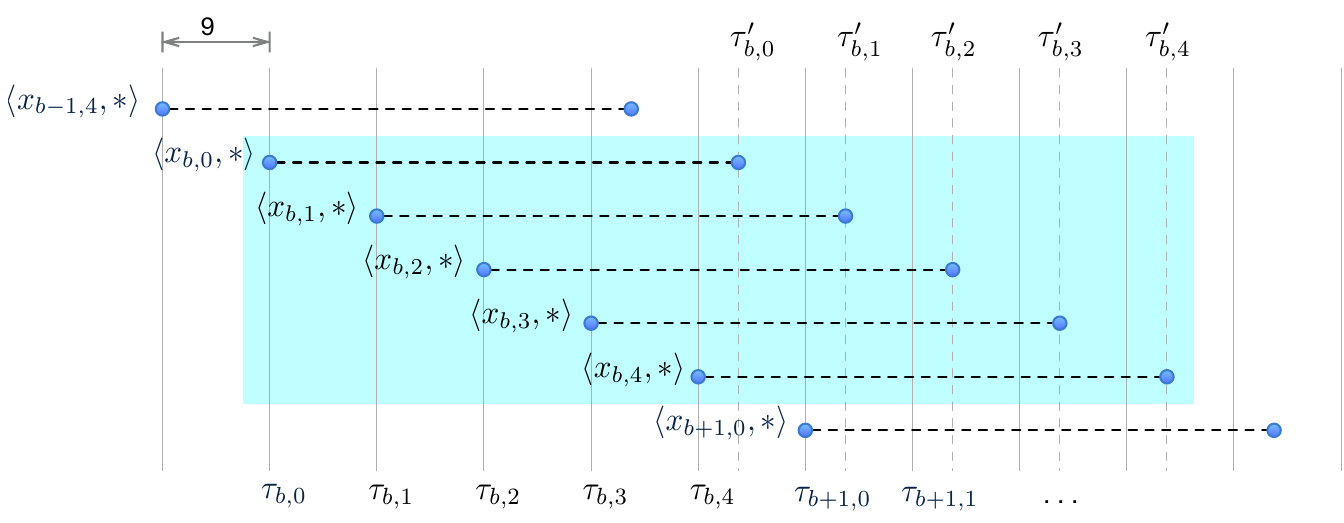}
    \end{center}
    \caption{The sequence of vertex requests, for $n=5$. The shaded region contains requests from bundle $b$.}%
    \label{fig: np-hard vertex requests}
  \end{figure}



  \begin{figure}[ht]
    \begin{center}
      \includegraphics[width = 4.5in]{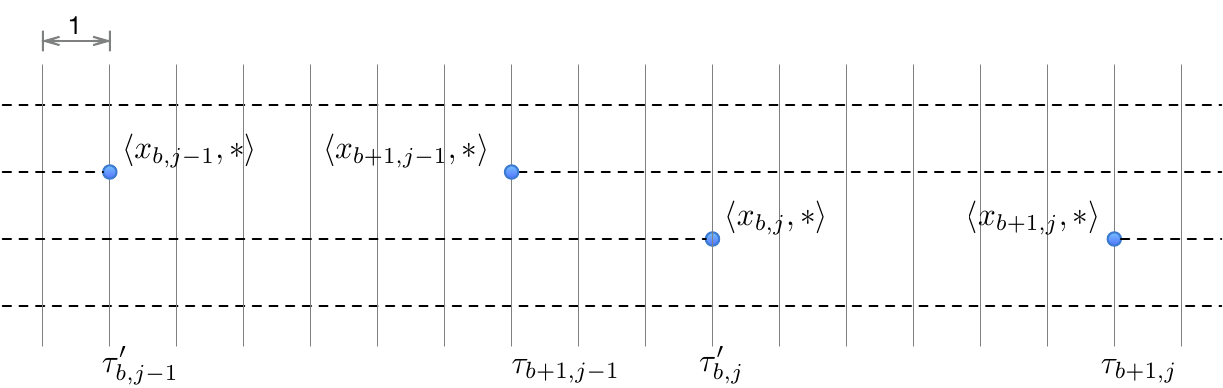}
    \end{center}
    \caption{A more detailed picture showing relations between general requests to pages
      $x_{b,j-1}$, $x_{b,j}$, $x_{b+1,j-1}$, and $x_{b+1,j}$, where $1\le j \le n-1$.}%
    \label{fig: np-hard vertex requests zoomin}
  \end{figure}


  Let $F' = (2n-k+2)B$. For any solution $S$ of $\requestsequence'$ and any bundle $b$, denote by $V_{S,b}$ the 
  set of vertices $j\in V$
  for which $S$ does not fault in $\requestsequence'$ on request $\requestpair{x_{b,j}}{\ast}$ at time $\tau'_{b,j}$. 
  In other words, $S$ keeps $x_{b,j}$ in the cache throughout the time interval $[\tau_{b,j}, \tau'_{b,j}]$.

  \begin{lemma}\label{lem: bundle dominance property}
    (a) The minimum number of faults on $\requestsequence'$ in a cache of size $k+2$ is $F'$.
    (b) If $S$ is a solution for $\requestsequence'$ with at most $F'$ faults, then for any $b=0,1,\ldots,B-2$ we have $V_{S,b}\preceq V_{S,b+1}$.
  \end{lemma}
  
  \begin{proof}
    (a) There are $2B$ specific $b$-blocking requests $\requestpair{y_b}{k+1}$ and $\requestpair{y_b}{k+2}$ and all of these are faults.
    Consider a bundle $b$. For this bundle, for each $j$,
    the two requests $\requestpair{x_{b,j}}{\ast}$ at times $\tau_{b,j}$ and $\tau'_{b,j}$ are
    separated by requests $\requestpair{y_b}{k+1}$ and $\requestpair{y_b}{k+2}$. Thus
    if $S$ does not fault at time $\tau'_{b,j}$ then page $x_{b,j}$ must have been stored
    in one of the vertex slots $1,2,\ldots,k$ throughout the time interval $[\tau_{b,j},\tau'_{b,j}]$.
    As there are $k$ vertex slots, $S$ can avoid faulting on at most $k$ requests in bundle $b$.
    So, including the faults at $\requestpair{y_b}{k+1}$ and $\requestpair{y_b}{k+2}$, the number of faults in $S$ associated
    with this bundle $b$
    will be at least $2 + k + 2(n-k) = 2n-k+2$. We thus conclude that the total number of faults is at least $F'$.

    It is also possible to achieve only $F'$ faults on $\requestsequence'$, as follows: for each $b$, and for each vertex
    $j = 0,1,\ldots,k-1$, at time $\tau_{b,j}$ load $x_{b,j}$  into cache slot $j+1$
    and keep it there until time $\tau'_{b,j}$. For $j = k,\ldots,n-1$, load each request to
    $x_{b,j}$ into slot $k+2$. This will give us exactly $F'$ faults.

    (b) If $S$ makes at most $F'$ faults, since there are $2B$ faults on the blocking requests and for
    each bundle $S$ makes at least $2n-k$ faults on vertex requests, $S$ must make \emph{exactly}
    $2n-k$ faults on vertex requests from each bundle, including one request for each vertex $j\in V_{S,b}$
    and two requests for each vertex $j\notin V_{s,b}$.
    If $u \in V_{S,b}$ and $x_{b,u}$ is stored by $S$ in slot $\ell$ of the cache
    throughout its interval $[\tau_{b,u},\tau'_{b,u}]$,
    and if some $x_{b+1,v}$, for
    $v \in V_{S,b+1}$, is stored by $S$ in slot $\ell$ throughout its interval $[\tau_{b+1,v},\tau'_{b+1,v}]$,
    then we must have $v\ge u$. This is because otherwise we would have $\tau_{b+1,v} <  \tau'_{b,u}$, that is
    the intervals of $x_{b,u}$ and $x_{b+1,v}$ would overlap, so we would fault at least three times on the
    requests to these two pages. This implies part~(b).
  \end{proof}


  We partition all bundles into \emph{phases}, where phase $p = 0,1,\ldots,P-1$ 
  consists of $m$ bundles $b = pm, pm+1,\ldots, pm+m-1$. (Recall that $m = |E|+1$.)
  The corollary below states that there is a phase $p$ in which all sets $V_{S,b}$ must be equal. It follows
  directly from Lemmas~\ref{lem: dominance depth} and~\ref{lem: bundle dominance property},
  by applying the pigeonhole principle.

  \begin{corollary}\label{cor: bundles stabilize}
    If $S$ is a solution for $\requestsequence'$ with $F'$ faults, then there is index $p$, $0 \le p \le P-1$, for which
    $V_{S,pm} =V_{S,pm+1} = \cdots = V_{S,pm+m-1}$.
  \end{corollary}
  

  \myparagraph{Edge gadget.}
  For each fixed phase $p$, we create $m-1$ edge gadgets, one for each edge.
  Ordering the edges arbitrarily, the gadget for the $e$th edge, where $0 \le e \le m-2$, will be denoted $\edgegadget_{p,e}$, and it
  will consist of $8$ requests between times $\theta'_{pm+e}$ and $\theta_{pm+e+1}$, that is
  in the region where bundles $pm+b$ and $pm+b+1$ overlap.

  Let the $e$th edge be $(u,v)$, where $u < v$. Edge gadget $\edgegadget_{p,e}$ 
  uses six new pages $z_{p,u}$, $z_{p,v}$, $g_{p,u}$, $g_{p,v}$, $h_{p,u}$ and $h_{p,v}$, and 
  consists of the following requests:
  \begin{itemize}
  \item Two specific requests $\requestpair{z_{p,u}}{k+2}$ at times $\tau'_{pm+e,u}+2$ and  $\tau'_{pm+e,u}+4$,
    and two specific requests $\requestpair{z_{p,v}}{k+2}$ at times $\tau'_{pm+e,v}+2$ and  $\tau'_{pm+e,v}+4$.
  \item General requests $\requestpair{g_{p,u}}{\ast}, \requestpair{g_{p,v}}{\ast}$, at times $\tau'_{pm+e,u}+3$ and  $\tau'_{pm+e,v}+3$.
  \item A pair of requests $\requestpair{h_{p,u}}{k+1}, \requestpair{h_{p,u}}{\ast}$, 
    the first one specific and the second one general, at times $\tau'_{pm+e,u}+1$ and  $\tau'_{pm+e,v}+1$, respectively.
  \item A pair of requests $\requestpair{h_{p,v}}{\ast},\requestpair{h_{p,v}}{k+1}$, the first one general and the second one specific,
    at times $\tau'_{pm+e,u}+5$ and  $\tau'_{pm+e,v}+5$, respectively.
  \end{itemize}


  \begin{figure}[ht]
    \begin{center}
      \includegraphics[width = 5.5in]{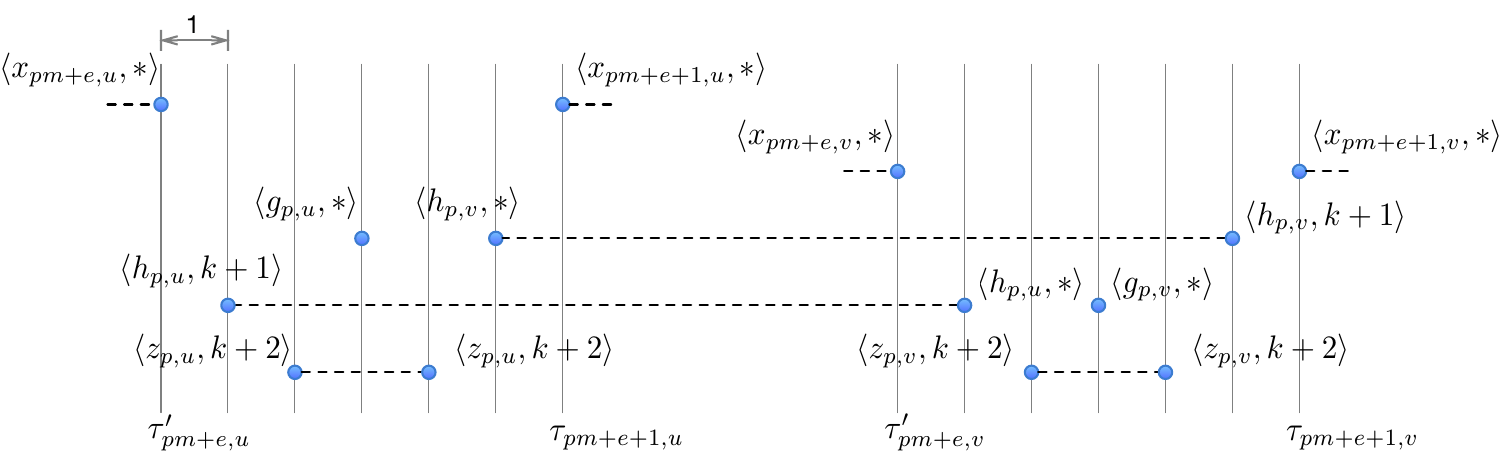}
    \end{center}
    \caption{Gadget $\edgegadget_{p,e}$. Requests $\requestpair{x_{pm+e,u}}{\ast}$ , $\requestpair{x_{pm+e+1,u}}{\ast}$, 
      $\requestpair{x_{pm+e,v}}{\ast}$ , and $\requestpair{x_{pm+e+1,v}}{\ast}$ 
      are not part of this gadget; they are shown only to illustrate how gadget $\edgegadget_{p,e}$ fits into the overall 
      request sequence. }
    \label{fig: edge-e gadget}
  \end{figure}


  Consider now possible solutions of gadget $\edgegadget_{p,e}$. Notice that this gadget will require
  $7$ faults regardless of all other requests, since we need to make 
  two faults on requests $\requestpair{g_{p,u}}{\ast}$, $\requestpair{g_{p,v}}{\ast}$,
  at least two faults on requests to pages $\requestpair{z_{p,u}}{k+2}$, $\requestpair{z_{p,v}}{k+2}$, 
  and at least three faults on requests $\requestpair{h_{p,u}}{k+1}$, $\requestpair{h_{p,u}}{\ast}$, 
  $\requestpair{h_{p,v}}{\ast}$, and $\requestpair{h_{p,v}}{k+1}$.
  (This is because if we retain page $h_{p,u}$ in slot $k+1$ until time $\tau'_{pm+e,v}+1$, so that
  we do not fault on $\requestpair{h_{p,u}}{\ast}$,  then we will fault on both requests 
  $\requestpair{h_{p,v}}{\ast}$, and $\requestpair{h_{p,v}}{k+1}$.)
  Another important observation is that if we fault only $7$ times on
  $\edgegadget_{p,e}$ then one of requests $\requestpair{g_{p,u}}{\ast}$, $\requestpair{g_{p,v}}{\ast}$ must be
  put in a vertex cache slot (that is, one of slots $1,2,\ldots,k$).
  A solution that puts $\requestpair{g_{p,u}}{\ast}$ in a vertex slot is called
  a  \emph{$u$-solution} of $\edgegadget_{p,e}$ and a solution that puts $\requestpair{g_{p,v}}{\ast}$ in a vertex
  slot is called a \emph{$v$-solution} of $\edgegadget_{p,e}$. (A solution of $\edgegadget_{p,e}$
  can be both a $u$-solution and a $v$-solution.)


  \myparagraph{Complete reduction.}
  Let $F = F' + 7P(m-1)$. Our request sequence $\requestsequence$ constructed for $G$ 
  consists of $\requestsequence'$ and of all $P(m-1)$ edge gadgets $\edgegadget_{p,e}$ defined above inserted into $\requestsequence$
  at their specified time steps. (At some time steps there will not be any requests.)
  To complete the proof it is now sufficient to show the following claim.

  \begin{claim}\label{cla: reduction works}
    $G$ has a vertex cover of size $k$ if and only if $\requestsequence$ has a solution 
    with at most $F$ faults in a cache of size $k+2$.
  \end{claim}
  
  $(\Rightarrow)$
  Suppose that $G$ has a vertex cover $U$ of size $k$. We construct a solution for $\requestsequence$
  as follows. Each vertex $j\in U$ is assigned to some uniqe vertex cache slot 
  and all $2B$ requests $\requestpair{x_{b,j}}{\ast}$ associated with vertex $j$ are served in this slot. This
  will create $kB$ faults. For $j\notin U$, all requests $\requestpair{x_{b,j}}{\ast}$
  are served in the junkyard slot $k+2$ at cost $2(n-k)B$.
  Together with the $2B$ blocking requests $\requestpair{y_b}{k+1}$ and $\requestpair{y_b}{k+2}$,
  this will give us $F' = (2n-k+2)B$ faults.
  For each $p = 0,1,\ldots,P-1$, and for each $e = 0,1,\ldots,m-2$,
  we do this: Let the $e$th edge of $G$ be $(u,v)$. Since $U$ is a vertex cover, we either have $u\in U$ or
  $v\in U$. If $u\in U$, we use the $u$-solution for gadget $\edgegadget_{p,e}$, with
  $g^\ast_{p,u}$ served in the cache slot associated with $u$.
  If $v\in U$, we use the $v$-solution for gadget $\edgegadget_{p,e}$, with
  $\requestpair{g_{p,v}}{\ast}$ served in the cache slot associated with $v$.
  This will give us $7$ faults for this gadget, adding up to $7P(m-1)$ faults on all edge gadgets.
  Then the total number of faults on $\requestsequence$ will be $F' + 7P(m-1) = F$.

  $(\Leftarrow)$
  Now suppose that there is a solution $S$ for $\requestsequence$ with at most $F$ faults.
  By the earlier observations, we know that $S$ must have exactly $F$ faults, including
  exactly $F'$ faults on the request in $\requestsequence'$ and exactly $7$ faults per each edge gadget.
  As there are $F'$ faults on $\requestsequence'$, we can find some $p$, $0\le p\le P-1$,
  such that $V_{S,pm}= V_{S,pm+1} = \cdots = V_{S,pm+m-1}$, per Corollary~\ref{cor: bundles stabilize}.
  Let $U = V_{S,pm}$. For each $j\in U$, all requests $\requestpair{x_{b,j}}{\ast}$, for $b = pm, pm+1,\ldots,pm+m-1$,
  must be in the same slot, that we refer to as the slot associated with vertex $j$.
  The size of $U$ is $k$, and we claim that $U$ must be
  a vertex cover. To show this, let $(u,v)$ be an edge, and let $e$ be its index.
  Solution $S$ makes $7$ faults on gadget $\edgegadget_{p,e}$, so for this gadget it
  must be either a $u$-solution or a $v$-solution.
  If it is a $u$-solution then $\requestpair{g_{p,u}}{\ast}$ is served in some vertex
  cache slot. But the only vertex slot available in that time step is the
  slot associated with vertex $u$. This means that $u$ must be in $U$.
  The case of a $v$-solution is symmetric. Thus we obtain that either $u\in U$ or $v\in U$.
  This holds for each edge, implying that $U$ is a vertex cover.
  This proves the claim, and completes the proof of the theorem.
\end{proof}



\section{\WeightedAllOrOnePaging}%
\label{sec: weighted all or one paging}
  Section~\ref{sec: introduction}
  introduced the generalization of the standard \kServer problem
called \HeteroServer, where each request, in addition to the request point, specifies also
a subset of servers that are allowed to serve the request.
The previous section focussed on the special case of \HeteroServer in uniform metrics.
Extending this work beyond uniform metrics,
this section addresses
\WeightedAllOrOnePaging,  the natural weighted variant of \AllOrOnePaging
(allowing general and specific requests) in which the pages have weights
and the cost of retrieving a page is its weight.
This is equivalent to  \HeteroServer
in star metrics with requestable-set family $\slotsetfamily=\braced{[k]} \cup \braced{\braced{s} \suchthat s\in [k]}$.
This section proves the following theorem:


\begin{theorem}\label{thm: wtd alg}
  \WeightedAllOrOnePaging has  a deterministic $O(k$)-competitive online algorithm.
\end{theorem}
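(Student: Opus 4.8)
The plan is to give the algorithm through the online primal--dual method, mirroring the way the classical deterministic $k$-competitive algorithm for \WeightedPaging arises from the standard weighted-caching covering LP (the Landlord / online primal--dual analysis). Concretely, I would exhibit a linear program $\mathrm{LP}$ with $\mathrm{LP}\le\opt(\sigma)$, maintain online a feasible fractional primal solution whose cost is $O(k)$ times a feasible dual solution --- hence at most $O(k)\cdot\mathrm{LP}\le O(k)\cdot\opt(\sigma)$ --- and round this fractional solution online into an integral cache configuration that actually honors the slot assignments, losing only a constant factor.

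First I would set up the LP. Start from the usual weighted-caching covering formulation: for each page $p$ and each interval between consecutive requests to $p$ there is an ``eviction'' variable in $[0,1]$ with objective coefficient $\wt p$, together with the covering constraints asserting that, at each time $t$, the total eviction mass among the pages whose most recent request precedes $t$ is at least ``(number of such pages) $+\,1-k$''. To this I would append slot-aware constraints that \emph{partially} model the fact that a specific request pins a page to a slot: for each slot $j$ and each pair of consecutive specific requests to $j$ naming \emph{distinct} pages, say $\requestpair p j$ at time $t$ followed by $\requestpair q j$ at time $t'$, introduce a fresh variable $z\in[0,1]$ with objective coefficient $\wt q$ and the constraint $z + (\text{total eviction mass of }q\text{ over the span }(t,t'))\ge 1$. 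This is a valid relaxation: in any integral solution slot $j$ holds $p$ at time $t$ and $q$ at time $t'$, so either $q$ leaves the cache somewhere in $(t,t')$ (paid by the eviction variables) or $q$ stays cached and is moved \emph{into} slot $j$, a retrieval of cost $\wt q$ (paid by $z$); hence $\mathrm{LP}\le\opt(\sigma)$. The augmentation is deliberately only ``partial'' --- it ignores cascading inter-slot swaps and the capacity interaction --- which is unavoidable, since modeling the slot constraints exactly makes the integer program an $\NP$-hard multicommodity-flow problem (cf.\ Theorem~\ref{thm: all or one np-hard}); the example in this section shows that the naive LP obtained by merely adding per-slot capacity inequalities to the fractional state is too weak to round, because a page spread uniformly over all slots serves a specific request fractionally at negligible cost.

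Next I would run the standard rate-based primal--dual update on the augmented LP: whenever a covering constraint (a page constraint on a request, or a $z$-constraint on a specific request) is violated, raise its dual variable, driving up the corresponding eviction and $z$ variables; the deterministic threshold/Landlord version keeps a ``credit'' per cached page and, on a fault, uniformly decrements credits, evicts zero-credit pages, and loads the new page with credit $\wt p$. In parallel I would maintain an integral cache configuration and slot assignment tracking the fractional state: a page is cached iff its credit is positive, and the slot assignment obeys the invariant that the last still-active specific request to each slot pins its page there, with the other pages placed to agree with which $z$-variables are open. On a general request $\requestpair p \ast$, place $p$ (if retrieved) into any free slot --- no forced cost. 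On a specific request $\requestpair p j$, the matching $z$-constraint is the one that becomes active, so bring $p$ into slot $j$, displacing the page currently in slot $j$ to a free slot (or evicting it in accordance with the credits when the cache is full), paying $O(1)$ times the LP cost charged to that constraint, amortized. Summing the page constraints gives the usual factor $k$; the $z$-constraints have ``width'' $1$, so they cost only an extra constant factor, yielding overall cost $O(k)\cdot\opt(\sigma)+O(1)$.

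The step I expect to be the main obstacle is precisely the joint design of the augmented LP and its online rounding: the LP must be weak enough that $\mathrm{LP}\le\opt$ holds (the exact ILP being intractable), yet strong enough that, whenever a specific request arrives, the integral state can be repaired so that the requested page sits in the requested slot at only $O(1)$ amortized cost. In particular, a chain of forced displacements triggered by one specific request, and the tension between the Landlord credit dynamics (which ``want'' to keep a high-credit page) and a specific request (which ``forces'' evicting exactly the page occupying that slot), must both be absorbed into the dual budget provided by the new constraints. Getting this accounting to close at a constant multiplicative overhead on top of the unavoidable factor $k$ --- rather than $\poly(k)$ --- is the crux of the argument.
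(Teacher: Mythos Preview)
Your LP augmentation is too weak, and in fact fails on the very example the paper gives to motivate the difficulty. In that example ($k=2$; each round is a general request to a weight-$1$ page $g$ followed by specific requests $\requestpair{z_1}{1}$ and $\requestpair{z_2}{2}$ to weight-$0$ pages), the consecutive specific requests to any fixed slot $j$ are always to the \emph{same} page, so none of your $z$-constraints fire. Your augmented LP then degenerates to the standard weighted-caching LP, whose optimum stays bounded while $\opt(\sigma)$ grows linearly in the number of rounds. The flaw is structural: your constraint only charges OPT when the page \emph{specifically} requested next in slot $j$ had to be brought there; it says nothing about the cost OPT incurs retrieving a \emph{general} page into slot $j$ between two identical specific requests, which is exactly what drives the cost in the example.

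The paper's algorithm uses a different implicit relaxation. For each specific request $t$ it lets $y_t$ be the total weight OPT pays for \emph{all} retrievals into slot $s_t$ until the next specific request to $s_t$, so $\sum_t \wt{p_t}x_t + \sum_{t\in R} y_t \le 2\,\opt$. The algorithm maintains, in addition to Landlord-style credits, a $\capacity[\ell_t(s)]$ for each slot $s$ (where $\ell_t(s)$ is the most recent specific request to $s$), and will place a general request of weight $w$ only into a slot whose capacity has already reached $w/2$ and which does not currently hold a general request of weight $\ge w/2$ (the sets $A,B$ in the figure). While no such slot exists it raises credits \emph{and} capacities at unit rate; the key lemma shows that during this raising either OPT has evicted some cached request, or OPT has $|B|+1\ge|A|+1$ heavy general pages in its own cache, forcing one into a slot $s\notin A$, which witnesses $y_{\ell_t(s)} > \capacity[\ell_t(s)]$. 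Thus the residual cost drops at unit rate while the potential rises at rate $\le 2k$, giving the $O(k)$ bound. The missing idea in your proposal is precisely this: the slot constraint you need is ``after any specific request to $j$, every retrieval into $j$ costs OPT its full weight,'' and the algorithm must actively steer each general page toward a slot where OPT has demonstrably already paid comparable retrieval cost.
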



\begin{figure}[t]
  \begin{mdframed}[userdefinedwidth=\linewidth]\hspace*{-0.02\linewidth}
    \begin{minipage}{\linewidth}

      \begin{steps}

      \item[] \textbf{input:} \WeightedAllOrOnePaging instance $(k, \slotrequestsequence)$,
        where $\slotrequest_t=\slotrequestpair {p_t} {s_t}$ for $t\in[T]$

        \step initialize $\capacity[t] \gets \credit[t] \gets 0$ for each $t\in [T]$
	
        \step\label{step: simplifying assumption}
		Assume that $\slotrequest_i = \slotrequestpair{\artificialpage}{i}$ for $i\in [k]$
        \algcomment{--- $k$ specific requests to artificial weight-0 page $\artificialpage$}
		
        \step\label{step: loop}
        for $t\gets k+1,k+2, \ldots, T$:
        \begin{steps}
			
          \step if $\slotrequestpair {p_t} {s_t}$ is a specific request
          with no equivalent request $t'$
          (s.t.~$\slotrequestpair {p_{t'}} {s_{t'}} = \slotrequestpair {p_{t}} {s_{t}}$) in the cache:
  
          \begin{steps}

            \step\label{step: clear request}\label{step: clear slot}
            evict  any cached general request to page $p_{t}$, and any cached request in slot $s_t$

            \step put $t$ in slot $s_t$
            \algcomment{--- note $\capacity[t] = \credit[t] = 0$}
          \end{steps}

          \step\label{step: cache general}
          else if $\slotrequestpair {p_t} {s_t} = \slotrequestpair {p_t} {\ast}$ is a general request
          not satisfied by any cached request $t'$ (s.t.~$p_{t'} = p_{t}$):

          \begin{steps} \smallskip
            \step\label{step: define}
            define~~~{}
            $\begin{cases}
              & \ell_t(s) \coloneqq \max \{ t' \le t \suchthat s_{t'} = s\}\ \textrm{for}\ s\in [k]
              ~~~\algcomment{--- most recent specific request to slot $s$} \smallskip
              \\
              &A \coloneqq \{s\in [k]: \capacity[\ell_t(s)] \ge \half \razy \wt{p_t} \textit{ and \(s\) does not hold a specific request} \}  \smallskip
              \\
              & B \coloneqq \{s\in [k] : \textit{slot \(s\) holds a general request of weight at least $\half \razy \wt{p_t}$} \}
            \end{cases}$\smallskip
            
            \step\label{step: loop condition}
            while ${|A| \le |B|}$:
            \begin{steps}
				
              \step\label{step: raise}
              continuously raise $\capacity[\ell_t(s)]$ for $s\in[k]$  and $\credit[t']$ for each cached request $t'$, at unit rate,
			  
              \step\label{step: raise evict}
              evicting each request $t'$ such that $\credit[t'] = \wt{p_{t'}}$, and updating $A$ and $B$ continuously
			  
            \end{steps}
            
            \step\label{step: well defined}\label{step: place evict}
            choose a slot $s \in A\setminus B$;  evict the request $t'$ currently in slot $s$ (if any)

            \step put $t$ in slot $s$ 
            \algcomment{--- note $\credit[t] = 0$}
          \end{steps}
          \step\label{step: redundant}
          else: classify the (already satisfied) request as \emph{redundant} and ignore it\footnote
          {And elsewhere (e.g.~the definition of $\ell_t(s)$ in Step~\ref{step:  define}) restrict to previous times $t'$ that were not ignored in this way.}
        \end{steps}
      \end{steps}
    \end{minipage}
  \end{mdframed}
  \caption{An $O(k)$-competitive online algorithm for \WeightedAllOrOnePaging.
	Following our convention, we present the algorithm as caching request times
    rather than pages, with the understanding that request $t$ actually represents page $p_t$.
  }\label{fig: alg}
\end{figure}

The bound is optimal up to a constant factor, as the optimal ratio for standard \WeightedPaging is $k$.
Figure~\ref{fig: alg} shows the algorithm. It is implicitly a linear-programming primal-dual algorithm.
Note that the standard linear program for standard \WeightedPaging
doesn't have constraints that force pages into specific slots---indeed, those constraints make even the unweighted problem an $\NP$-hard
special case of Multicommodity Flow. As a small example that illustrates the challenge, consider a cache of size $k=2$, and
repeatedly make three requests: a general request to a weight-1 page,
and specific requests to different weight-zero pages in slots 1 and 2.
The weight-zero requests force the weight-1 page to be evicted with each round,
so the optimal cost is the number of rounds. But the solution of the classical linear-program relaxation
will have value $1$. Thus this linear program cannot be used to bound the competitive ratio.

Let $\slotrequestsequence = (\slotrequest_t=\slotrequestpair {p_t} {s_t})_{t=1}^T$ be the request sequence.
For convenience, in the proof we 
identify requests with the time when they are issued;
that is, by ``request $t$'' we 
mean request $\slotrequest_t$.
Also, our algorithm needs to keep track not only of the cache content, but also of
the times when each page in the cache was retrieved. For this reasons, it's convenient to think about the
algorithm and the optimal solution as caching \emph{requests} or \emph{request times}
rather than pages, with the understanding that ``request $t$'' in the cache actually represents page $p_t$.  
We adopt this convention throughout the proof.

Here is a sketch of the proof of Theorem~\ref{thm: wtd alg}, then the detailed proof.
  Fix an optimal solution $\solutionC$, that is
  $\opt(\slotrequestsequence) = \cost(\solutionC)$.  For each $t\in
  [T]$, let $x_t\in\{0,1\}$ be an indicator variable for the event
  that $\solutionC$ evicts request $t$ before satisfying another
  request $t'>t$ with the same page/slot pair that satisfied $t$.  Let
  $R\subseteq [T]$ be the set of all specific requests, and for each
  $t\in R$, let $y_{t}$ be the amount $\solutionC$ pays to retrieve
  pages into slot $s_t$ before the next specific request to slot $s_t$
  (if any).  Define the \emph{pseudo-cost} of the optimal solution to
  be $\sum_{t=1}^T \wt{p_t} x_t + \sum_{t\in R} y_t$.  The pseudo-cost
  is at most $\two \opt(\slotrequestsequence)$.  As the algorithm
  proceeds, define the \emph{residual cost} to be
  $\sum_{t=1}^T \max(0, \wt{p_t} x_t - \credit[t]) + \sum_{t\in R} \max(0, y_{t} - \capacity[t])$.
\eat{  \[\textstyle\sum_{t=1}^T \max(0, \wt{p_t} x_t - \credit[t])
    + \sum_{t\in R} \max(0, y_{t} - \capacity[t]).\]}
  The residual cost is initially the pseudo-cost (at most $\two\opt(\slotrequestsequence)$),
  and remains non-negative throughout,
  so the total decrease in the residual cost is at most $\two\opt(\slotrequestsequence)$.
  One can show (Lemma~\ref{lemma: k opt charge}) that
  whenever the algorithm is raising credits and capacities at time $t$,
  there is either a cached request $t'$ with $x_{t'} = 1$ and $\credit[t'] < \wt{p_{t'}}$,
  or there is a slot $s$ with $y_{t'} > \capacity[t']$, where $t'=\ell_t(s) \in R$.
  It follows that the residual cost is decreasing at least at unit rate in Step~\ref{step: raise}.

  On the other hand, the algorithm is raising $k$ capacities and at
  most $k$ credits, so the value of \(\phi = \sum_{t=1}^T \credit[t] +
  \sum_{t\in R} \capacity[t]\) is increasing at rate at most $2k$.
  So, the final value of $\phi$ is at most
  $4k\,\opt(\slotrequestsequence)$.  To finish, we show by a charging
  argument that the algorithm's cost is at most $\six \phi + \three
  \opt(\slotrequestsequence) \le (24k+3)\,\opt(\slotrequestsequence)$.
  
  \smallskip
  
Here is the detailed proof.
Consider any execution of the algorithm on a $k$-slot instance $\slotrequestsequence$.
To ease notation and streamline the analysis, without loss of generality we 
make the following assumptions:
%
\begin{itemize}[leftmargin=*]\renewcommand{\itemsep}{-0.02in}
\item The first $k$ requests are specific requests for an artificial weight-zero page $\artificialpage$
		in each of the $k$ slots.
\item 
Each request is not redundant (per Step~\ref{step: redundant}).
\item The last $k$ requests are specific requests for an artificial weight-zero page $\artificialpage$
		in each of the $k$ slots.
\end{itemize}
These assumptions can indeed be made without loss of generality, as
the zero-weight requests do not have any cost, the algorithm ignores
redundant requests, and removing redundant requests doesn't increase
the optimum cost. 

\smallskip
We first prove a key lemma used in the proof of the theorem.

\begin{lemma}\label{lemma: k opt charge}
  Suppose that, while responding to a general request $t$, the algorithm is executing Step~\ref{step: raise}
  (that is, the loop condition in Step~\ref{step: loop condition} is satisfied).
  Then, just after $\solutionC$ has responded to request $t$, either 
  %
\begin{description}\setlength{\itemsep}{-0.03in}
   \item{(i)} some request $t'$ currently cached by the algorithm is not in $\solutionC$'s cache, or
	  %
   \item{(ii)} for some slot $s\in [k]$, after the most recent specific request $\ell_t(s)$ to slot $s$
  		solution $\solutionC$ has incurred cost more than $\capacity[\ell_t(s)]$ for retrievals into $s$.
\end{description}
\end{lemma}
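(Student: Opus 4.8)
The plan is to prove the dichotomy by contradiction: assume that, just after $\solutionC$ responds to $t$, \emph{both} (i) and (ii) fail, and derive a contradiction. Failure of (i) means $\solutionC$ currently holds, in the slot where it first placed it, every request the algorithm currently caches; in particular each specific request $t'$ cached by the algorithm is held by $\solutionC$ with page $p_{t'}$ pinned to slot $s_{t'}$. Failure of (ii) means that for every slot $s$ the cost $\solutionC$ has paid for retrievals into $s$ since the most recent specific request $\ell_t(s)$ is at most $\capacity[\ell_t(s)]$ --- and $\ell_t(s)$ is always defined, since the first $k$ requests are specific requests, one per slot. Because the algorithm is serving the non-redundant general request $t=\requestpair{p_t}{s_t}$ in Step~\ref{step: raise}, the page $p_t$ is distinct from every page currently cached by the algorithm (the algorithm caches at most one request per page, and would otherwise treat $t$ as redundant), and $\solutionC$, having just served $t$, also holds $p_t$ in some slot.

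The first step is a counting argument. Partition $[k]$ according to the algorithm's current cache: let $Q_0$ be the slots holding a specific request, let $Q_1$ be the slots holding a general request of weight at least $\tfrac12\wt{p_t}$ (so $Q_1=B$), and let $Q_2$ collect the rest (a light general request, or nothing). Since $\solutionC$ keeps its specific pages pinned to $Q_0$, and those pages are all distinct from $p_t$ and from the general pages the algorithm caches, $\solutionC$ must hold $p_t$ together with the $|Q_1|$ heavy general pages cached by the algorithm --- at least $|Q_1|+1$ pairwise-distinct pages of weight at least $\tfrac12\wt{p_t}$ --- inside the slots $[k]\setminus Q_0$, hence in at least $|Q_1|+1$ distinct such slots. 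Call a slot $s$ \emph{cheap} if $\capacity[\ell_t(s)]<\tfrac12\wt{p_t}$ and \emph{rich} otherwise; the rich slots in $[k]\setminus Q_0$ are precisely $A$, so the loop condition $|A|\le|B|=|Q_1|$ forces at most $|Q_1|$ of the slots of $[k]\setminus Q_0$ to be rich. Therefore at least one of the heavy pages above occupies, in $\solutionC$, a cheap slot $s^\ast$; write $q$ for that page, so $\wt q\ge\tfrac12\wt{p_t}>\capacity[\ell_t(s^\ast)]$. (If $\wt{p_t}=0$ every slot is rich, so this same count yields $|Q_1|+1\le|Q_1|$ directly; thus we may assume $\wt{p_t}>0$.)

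The second step identifies $q$. Since $\ell_t(s^\ast)$ is a specific request to slot $s^\ast$, just after $\solutionC$ responded to it $\solutionC$ held $p_{\ell_t(s^\ast)}$ in slot $s^\ast$; if $q\ne p_{\ell_t(s^\ast)}$, then $\solutionC$ retrieved $q$ into $s^\ast$ at some later time, paying $\wt q>\capacity[\ell_t(s^\ast)]$ into $s^\ast$ since $\ell_t(s^\ast)$, contradicting the failure of (ii). Hence $q=p_{\ell_t(s^\ast)}$, i.e.\ the algorithm itself put $q$ into slot $s^\ast$ when processing the specific request $\ell_t(s^\ast)$. Now for the contradiction: $s^\ast\notin Q_0$, so slot $s^\ast$ no longer holds that specific request, meaning the algorithm evicted it at some point after $\ell_t(s^\ast)$. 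Inspecting every place the algorithm removes a cached item, the only way a specific request can leave the cache --- once a later specific request to the same slot is excluded (impossible, since $\ell_t(s^\ast)$ is the most recent specific request to $s^\ast$) --- is the eviction in Step~\ref{step: raise evict}, triggered when its credit reaches its weight. But Step~\ref{step: raise} raises $\credit[\ell_t(s^\ast)]$ and $\capacity[\ell_t(s^\ast)]$ in lockstep at unit rate for as long as $\ell_t(s^\ast)$ stays cached, so at the moment of eviction $\capacity[\ell_t(s^\ast)]\ge\credit[\ell_t(s^\ast)]=\wt q\ge\tfrac12\wt{p_t}$; capacities never decrease, so this contradicts $s^\ast$ being cheap, completing the proof.

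I expect the last step --- the exhaustive check that a cached specific request can disappear from the algorithm's cache only via Step~\ref{step: raise evict} (e.g.\ ruling out that it is the item evicted in Step~\ref{step: place evict}, which cannot happen because a slot holding a specific request is never in $A$, and that it is evicted by the preamble of a later specific request, which would itself be a specific request to $s^\ast$) --- to be the part that needs the most care; the counting and the identification of $q$ with $p_{\ell_t(s^\ast)}$ are routine once the partition $Q_0,Q_1,Q_2$ and the rich/cheap dichotomy are set up.
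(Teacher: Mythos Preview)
Your proof is correct and follows the same high-level strategy as the paper: assume~(i) fails, so $C$ caches the algorithm's $|B|$ heavy general pages together with $p_t$; these $|B|+1\ge|A|+1$ distinct heavy pages must lie in slots of $[k]\setminus Q_0$ (since $C$ has specific-request pages pinned to $Q_0$), and since at most $|A|$ such slots are rich, one heavy page lands in a cheap slot $s^\ast$, which yields~(ii).

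Where you go beyond the paper is your second step. The paper simply asserts that $C$'s retrieval cost into $s$ after $\ell_t(s)$ is at least $\wt{p_{t'}}$ and stops; you correctly observe that this is only immediate when $q\ne p_{\ell_t(s^\ast)}$, and you close the remaining case by showing that the specific request $\ell_t(s^\ast)$ could only have left the algorithm's cache via the credit-based eviction of Step~\ref{step: raise evict} (your inventory of eviction sites is right: Step~\ref{step: clear slot} is ruled out because $\ell_t(s^\ast)$ is the \emph{latest} specific request to $s^\ast$, and Step~\ref{step: place evict} because $A$ never contains a slot holding a specific request). Then the lockstep raising of $\credit[\ell_t(s^\ast)]$ and $\capacity[\ell_t(s^\ast)]$ gives $\capacity[\ell_t(s^\ast)]\ge\wt q\ge\tfrac12\wt{p_t}$, contradicting cheapness. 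This extra case analysis is sound and arguably fills in a detail the paper's terse sentence leaves implicit; so you have essentially the paper's proof, written out more carefully.
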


\begin{proof}
If $\solutionC$ satisfies property~(i), we are done. So assume that it doesn't.  We will show that
then property~(ii) holds.  If~(i) doesn't hold then, just after responding to request $t$, in addition to the current general request $p_t$, 
solution $\solutionC$ caches every request $t'$ that is cached by the algorithm. This,
together with the loop condition, implies that $\solutionC$ has at least
$|B|+1 \ge |A|+1$ generally requested pages of weight at least $\half\razy\wt {p_t}$ in its cache.
Thus one of these pages, say $p_{t'}$, is in a slot $s\notin A$.
The choice of $p_{t'}$ and the definition of $A$ imply then that
the cost of $\solutionC$ for retrievals into $s$ after time $\ell_t(s)$ 
is at least $\wt{p_{t'}} \ge \half\razy \wt {p_t} > \capacity[\ell_t(s)]$, so property~(ii) holds.
\end{proof}

\begin{proof}[Proof of Theorem~\ref{thm: wtd alg}]
  Fix an optimal solution $\solutionC$, that is $\opt(\slotrequestsequence) = \cost(\solutionC)$.
  For each $t\in [T]$,
  let $x_t\in\{0,1\}$ be an indicator variable for the event
  that $\solutionC$ evicts request $t$ before satisfying another request $t'>t$
  with the same page/slot pair that satisfied $t$.
  Let $R\subseteq [T]$ be the set of all specific requests, and for each $t\in R$,
  let $y_{t}$ be the amount $\solutionC$ pays to retrieve
  pages into slot $s_t$ before the next specific request to slot $s_t$ (if any).
  Define the \emph{pseudo-cost} of the optimal solution
  to be $\sum_{t=1}^T \wt{p_t} x_t + \sum_{t\in R} y_t$.  
  The pseudo-cost is at most $\two \opt(\slotrequestsequence)$.
  As the algorithm proceeds, define the \emph{residual cost} to be
  $\sum_{t=1}^T \max(0, \wt{p_t} x_t - \credit[t])
    + \sum_{t\in R} \max(0, y_{t} - \capacity[t])$.
\eat{  \[\textstyle\sum_{t=1}^T \max(0, \wt{p_t} x_t - \credit[t])
    + \sum_{t\in R} \max(0, y_{t} - \capacity[t]).\]}
  The residual cost is initially the pseudo-cost (at most $\two\opt(\slotrequestsequence)$),
  and remains non-negative throughout,
  so the total decrease in the residual cost is at most $\two\opt(\slotrequestsequence)$.
  By Lemma~\ref{lemma: k opt charge},\footnote
  {This interpretation of the problem via covering constraints handled via residual costs
    follows~\cite{Koufogiannakis13Greedy}.  It can be recast as
    a linear-programming primal-dual argument, or as (a generalization of) the local-ratio method~\cite[\S 5 \& \S 6]{Koufogiannakis13Greedy}.
  }
  whenever the algorithm is raising credits and capacities at time $t$,
  there is either a cached request $t'$ with $x_{t'} = 1$ and $\credit[t'] < \wt{p_{t'}}$,
  or there is a slot $s$ with $y_{t'} > \capacity[t']$, where $t'=\ell_t(s) \in R$.
  It follows that the residual cost is decreasing at least at unit rate in Step~\ref{step: raise}.

  On the other hand, the algorithm is raising $k$ capacities and at most $k$ credits, so the value of
  \(\phi = \sum_{t=1}^T \credit[t] + \sum_{t\in R} \capacity[t]\)
  is increasing at rate at most $2k$.  So, the final value of $\phi$ is at most $4k\,\opt(\slotrequestsequence)$.
  To finish, we show that the algorithm's cost is at most  $\six \phi + \three \opt(\slotrequestsequence) \le (24k+3)\,\opt(\slotrequestsequence)$.\,\footnote{%
  This constant can be reduced with more careful analysis.}
  Count the costs that the algorithm pays as follows:
  \begin{enumerate}
  \item \emph{Requests remaining in the cache at the end (time $T$).}
    By the assumption on the last $k$ requests, these cost nothing to bring in.
    All other requests are evicted.  
  \item \emph{Requests evicted in Line~\ref{step: raise evict}.}
    Each such request $t'$ is evicted only after $\credit[t']$ reaches $\wt{p_{t'}}$.
    So these have total weight at most $\sum_{t'=1}^T \credit[t']$.
  \item \emph{Specific requests $t'$ evicted from slot $s_t$ in Line~\ref{step: clear slot}.}
    Throughout the time interval $\interval {t'} {t-1}$,   the algorithm has $p_{t'}$ in slot $s_{t'}=s_t$,  and
    $\slotrequestsequence$ has neither an equivalent specific request
    nor a general request to $p_t$ (by our non-redundancy assumption).
    The optimal solution $\solutionC$ has $p_{t'}$ in slot $s_{t'}$ at time $t'$,
    but not at time $t$, so evicts it during $\interval {t'+1} {t}$.
    So the total cost of such requests is at most the total weight of specific requests evicted by $\solutionC$,
    and thus at most $\opt(\slotrequestsequence)$.
  \item \emph{General requests evicted from slot $s_t$ in  Line~\ref{step:  clear slot}.}
    By Line~\ref{step: well defined},
    any general request in slot $s_t$ at time $t$ has weight at most $\two\capacity[\ell_{t-1}(s_t)]$.
    So the total weight of such requests is at most $\two \sum_{t'\in R}\capacity[t']$.
  \item \emph{General requests to page $p_t$ evicted in Line~\ref{step: clear request}.}
    The algorithm replaces each such general request $t'$
    by a specific request $t$ (which it later evicts, unless the weight is zero) to the same page.
    Have general request $t'$ \emph{charge} its cost $\wt{p_{t'}} = \wt{p_t}$,
    and any amount charged to $t'$ (in Item 6 below), to specific request $t$.
    (We analyze the charging scheme for Items 5 and 6 below.) 
  \item \emph{{General} requests $t'$ evicted in Line~\ref{step: place
    evict}.}  Have request $t'$ charge the cost of its eviction, and
    any amount charged to $t'$ to
    request $t$.  Since the slot
    holding $p_{t'}$ is not in $B$, $\wt{p_{t'}}
    < \half\razy \wt{p_{t}}$.
  \end{enumerate}

  Each general request $t$ receives at most one charge in Item~6, from a request $t'$ of at most half the weight of $t$; 
  this general request $t'$ may also receive such charges, forming a chain of charges, but since the weights of the
  requests in this chain decrease geometrically, $t$ is charged at most its weight.
  In Item~5, each specific request $t$ is charged by at most one general request $t'$ of the same weight, that may also
  carry the chain charge not exceeding its weight. So this specific request is charged at most twice its weight.
  Overall, the charge of each request from Items~5 and~6 is at most twice its weight.
  
  The total weight of evictions considered in Items~1, 2, 3, and~4 is at most  $\two\phi + \opt(\slotrequestsequence)$.
  Adding also the charges to these items by evictions considered in Items~5 and~6, we obtain that 
the total cost of the algorithm is bounded by  $\three (\two\phi + \opt(\slotrequestsequence)) = \six \phi + \three \opt(\slotrequestsequence)$.
\end{proof}





\section{Open Problems}%
\label{sec: open problems}

The results here suggest many open problems and avenues for further research. 
Closing or tightening gaps left by our upper and lower bounds would be of interest.
In particular:
\begin{itemize} 
\item For \SlotHeteroPaging, is the upper bound in Theorem~\ref{thm: subset server upper bound} tight
for every $\slotsetfamily\subseteq 2^{[k]}\setminus\{\emptyset\}$, within $\poly(k)$ factors?  
\item
What is the best randomized competitive ratio for \SlotHeteroPaging, 
for arbitrary $\slotsetfamily\subseteq 2^{[k]}\setminus\{\emptyset\}$?
Is it possible to achieve ratio that is a poly-logarithmic function of $\slotsetfamily$ and $k$?
%
\item
For \PageLaminarPaging it is easy to show a lower bound of $\Omega(h)$, even for $k=1$ and
for randomized algorithms. But it still may be possible to eliminate or reduce the multiplicative
dependence on $h$. For example, is it possible to achieve ratio
$O(h+k)$ with a deterministic algorithm and $O(h+H_k)$ with a randomized algorithm?
Similarly, does \SlotLaminarPaging (where $h\le k$) admit an
$O(k)$ deterministic ratio and $O(\log k)$ randomized ratio?
\item
For deterministic \AllOrOnePaging, we conjecture that the optimal
ratio is $2k-1$. (For $k=2$ we can show an upper bound of $3$.)  In
the randomized case, can ratio $2H_k-1$ be achieved?
\ignore{For the deterministic case of \AllOrOnePaging the gap is already quite small,
between $2k-1$ and $2k+14$ (see Section~\ref{sec: all or one paging improved upper} and~\cite{castenow+fkmd:server}).
We conjecture that the optimal ratio is $2k-1$. (For $k=2$ we can show an upper bound of $3$.)
In the randomized case, can ratio $2H_k-1$ be achieved?}
\item
For \WeightedAllOrOnePaging, is the optimal randomized ratio $O(\polylog(k))$?
\item
The status of \HeteroServer in arbitrary metric spaces is wide open.
Can ratio dependent only on $k$ be achieved? This question, while challenging,
could still be easier to resolve for \HeteroServer than for \GenerServer. 
\end{itemize}



\bigskip
\myparagraph{Conflict of interest statement.}
The authors have no relevant financial or non-financial interests to disclose.


\bibliographystyle{plainurl}
\bibliography{references.bib}

\end{document}
